\newtheorem{theorem}{Theorem}
\newtheorem{lemma}{Lemma}
\newtheorem{corollary}{Corollary}
\newtheorem{definition}{Definition}
\newcommand{\op}{\operatorname}
\newcommand{\tr}{\op{tr}}
\title[Isometries and conformal isometries]{Sachs equations and plane waves II: Isometries and conformal isometries}
\author{Jonathan Holland}
\address{Compunetix\\
  2420 Mosside Blvd \# 1\\
  Monroeville, PA 15146}
\author{George Sparling}
\address{University of Pittsburgh\\
  Department of Mathematics\\
  301 Thackeray Hall\\ Pittsburgh, PA 15260
}
\date{\today}
\begin{document}
\begin{abstract}
  This article describes the symmetries of plane wave spacetimes in dimension four and greater.  It begins with a description of the isometric automorphisms, and in particular the homogeneous plane waves.  Then the article turns to describing isometries from one plane wave to another.  The structure of the isometries is relevant for the problem of classifying vacuum spacetimes by observables, and the article presents an explicit example of a family of vacuum spacetimes encoding the Bernoulli shift, and so not classifiable by observables.  Next, the article turns to a description of the conformal isometries.  Here it is assumed that the conformal curvature does not vanish identically, the case of Minkowski space being both trivial and very degenerate.  The article then classifies all conformal automorphisms and isometries of plane waves of the Rosen and Brinkmann forms.
\end{abstract}
\maketitle
\tableofcontents
\section{Introduction}
In \cite{SEPI}, the authors introduced and described plane wave spacetimes.  We recollect some relevant definitions:
\begin{definition}
  A {\em Penrose limit} is a quadruple $(M,G,\mathcal D,\gamma)$ where:
  \begin{itemize}
  \item $M$ is a connected simply-connected manifold, and $G$ is a smooth metric tensor on $M$ of signature $(1,n+1)$;
  \item $\mathcal D_t:M\to M$ is a smooth one-parameter group of diffeomorphisms, acting properly on $M$, called the {\em dilation} group;
  \item $\mathcal D_t^*G=e^{2t}G$ for all $t\in\mathbb R$; and
  \item the limit $\lim_{t\to\infty}\mathcal D_t^{-1}x$ exists for all $x$ in $M$, and the limit points comprise a smooth curve $\gamma$ in $M$.
  \end{itemize}
  Moreover, the curve $\gamma$ is always a null geodesic in $M$, so it is called the {\em central null geodesic}.
\end{definition}

We then define a {\em plane wave} to be a Penrose limit spacetime where the $\mathcal D_t$ and $\gamma$ are ``forgotten''.  Thus a plane wave is a pair $(M,G)$ for which there exists a $\mathcal D_t:M\to M$ making $(M,G,\mathcal D_t,\op{Fix}(\mathcal D))$ a Penrose limit.  In general, every plane wave admits many dilations (at least a $(2n+1)$-dimensional family).  

The unstable manifolds of the dilation group $\mathcal D_t$ of a Penrose limit define the {\em wave fronts}.  In \cite{SEPI}, we show that the wave fronts are $n+1$ dimensional affine spaces on which the dilation acts as a linear transformations fixing an origin defined by the point where the central null geodesic intersects the wave front.  These spaces are ``unstable'' manifolds of $\mathcal D_t$ because they are the directions along which $\mathcal D_t$ expands.  We can thus form an intuitive picture of a Penrose limit as a null geodesic $\gamma$, with affine parameter $u$, and a foliation by wave fronts, one for each $\gamma(u)$ in $\gamma$.

The basic examples of Penrose limits are metrics on the space $\mathbb M = \mathbb U\times\mathbb R\times\mathbb X$ where $\mathbb U$ is a real interval and $\mathbb X$ a fixed real Euclidean space.  We use coordinates $(u,v,x)$ for $\mathbb M$ throughout, where $u\in\mathbb U, v\in\mathbb R$, and $x\in \mathbb X$.

\begin{itemize}
\item A {\em Brinkmann metric}:
  \begin{equation}\label{Brinkmann1}
2\,du\,dv + x^Tp(u)x\,du^2 - dx^Tdx
\end{equation}
where $u$ belongs to a real interval, $v\in\mathbb R$, $x$ belongs to a fixed real Euclidean $\mathbb X$ (which we can assume is $\mathbb R^n$ without loss of generality), and $p(u)$ is an $n\times n$ symmetric real matrix depending smoothly on $u$.
  Here the dilation group is
  \begin{equation}\label{BasicDilation}
    \mathcal D_t(u,v,x) = (u,e^{2t}v,e^tx).
  \end{equation}
  The central null geodesic is the curve $v=0,x=0$, $u$ arbitrary.  The wave fronts are the $u=$constant hypersurfaces.
\item A {\em Rosen universe}:
  \begin{equation}\label{RosenUniverse}
  \mathcal G_R(g) = 2\,du\,dv - dx^Tg(u)dx
\end{equation}
where $g(u)$ is a $n\times n$ positive-semidefinite matrix depending smoothly on $u\in\mathbb U$, a real interval.  (The matrix $g$ is positive definite away from a discrete set of points $\mathbb S\subset\mathbb U$, and the points of $\mathbb S$ where $g$ degenerates are removable singularities of the metric.)  The dilation is as in \eqref{BasicDilation}, with the same wave fronts and central null geodesic as in the Brinkmann case.
\item The {\em Alekseevsky metric}:
  \begin{align}
    \label{Alekseevsky1} \mathcal G_\alpha(p,\omega) &= du\,\alpha - dx^Tdx\\
    \notag \alpha &= 2\,dv + x^Tp(u)x\,du - 2x^T\omega(u)\,dx
  \end{align}
where $p(u),\omega(u)$ are smooth $n\times n$ matrices, $p=p^T,\omega=-\omega^T$.  (Again, the dilation, central null geodesic, and wave fronts are as in the other cases.)
\end{itemize}
An important result, ultimately due to Alekseevsky \cite{Alekseevsky}, but reformulated in this way in \cite{SEPI}, is:
\begin{theorem}\label{SEPITheorem1}
Every plane wave is a Brinkmann metric, a Rosen universe, and an Alekseevsky metric.
\end{theorem}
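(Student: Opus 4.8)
The plan is to treat the Brinkmann form as primary: I will first put every plane wave into Brinkmann form, and then obtain the Rosen and Alekseevsky forms by explicit, essentially algebraic, changes of coordinate. The starting point is the vector field $D$ generating $\mathcal D_t$: the condition $\mathcal D_t^*G=e^{2t}G$ is equivalent to $\mathcal L_D G=2G$, so $D$ is a homothety, and by the Penrose limit axioms its zero set is exactly the central null geodesic $\gamma$, which we will use as a ``spine'' for the coordinates.

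\emph{Step one: adapted coordinates.} Parametrize $\gamma$ by a parameter $u$ and spread $u$ over $M$ along the wave fronts of \cite{SEPI}: each wave front is the unstable manifold of a point $\gamma(u)$ under $\mathcal D_t$, these are affine spaces filling out $M$, $\mathcal D_t$ acts linearly on each one fixing the marked point, and $u$ becomes a globally defined function constant on the wave fronts. At a point of $\gamma$ the linearization of $\mathcal D_t$ is $\exp(tA)$ with $A=\nabla D$; writing $A=\mathrm{id}+\mathcal F$ with $\mathcal F$ skew for $G$, the facts that $\gamma$ is null, that $\gamma$ is the whole fixed locus, and that $\mathcal D_t^{-1}$ contracts $M$ onto $\gamma$ force the eigenvalues of $A$ to be $0$ along $\gamma$, $2$ on a single null line transverse to $\gamma$, and $1$ on the (negative definite) orthogonal complement. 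The weight $2$ line gives a coordinate $v$ on each wave front and the weight $1$ block gives coordinates $x=(x^1,\dots,x^n)$, and globally this yields a diffeomorphism $(u,v,x)\colon M\to\mathbb U\times\mathbb R\times\mathbb R^n$ intertwining $\mathcal D_t$ with $(u,v,x)\mapsto(u,e^{2t}v,e^t x)$. Simple connectivity of $M$ is used to see that $\gamma$ is connected and is the full fixed locus, and that the leaf space is an interval.

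\emph{Step two: reading off and normalizing $G$.} In these coordinates $\mathcal D_t$ multiplies $du$, $dv$, $dx^i$ by $1$, $e^{2t}$, $e^t$, so $\mathcal D_t^*G=e^{2t}G$ forces each component $G_{\mu\nu}$ to be homogeneous of weight $2-w_\mu-w_\nu$ as a function of $(v,x)$, where $v$ has weight $2$ and each $x^i$ weight $1$. Smoothness along $x=v=0$ kills every component of negative weight: $G_{vv}=G_{vi}=0$, while $G_{ij}=g_{ij}(u)$ (negative definite, depending on $u$ alone), $G_{uv}=c(u)$ (nowhere zero, by nondegeneracy), $G_{ui}$ is linear in $x$, and $G_{uu}$ is quadratic in $x$ plus a multiple of $v$. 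What remains is the group of coordinate changes preserving this weight structure, $u\mapsto\phi(u)$, $v\mapsto a(u)v+x^Tb(u)x$, $x\mapsto C(u)x$. Reparametrizing $u$ (solving a first order linear ODE) removes the $v$-term in $G_{uu}$ and normalizes $c$ to $1$; a quadratic shift of $v$ removes the symmetric part of the $du\,dx$ block, absorbing it into $G_{uu}$ --- discarding it entirely gives the Brinkmann form, and retaining only its antisymmetric part $\omega(u)$ gives the Alekseevsky form; the substitution $C(u)=(-g(u))^{-1/2}$ normalizes $G_{ij}$ to $-\delta_{ij}$ at the cost of a new smooth potential $p(u)$. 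Passing between Brinkmann and Rosen is then the classical move of solving the Sachs--Jacobi matrix equation $\ddot E(u)=p(u)E(u)$: the substitution $x\mapsto E(u)x$ together with a matching quadratic redefinition of $v$ turns Brinkmann into Rosen with $g(u)=E(u)^TE(u)$, and conversely.

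\emph{Main obstacle.} The genuinely hard content sits in two places, both supplied by \cite{SEPI}. First is the globalization in step one: the structure theorem for wave fronts is what upgrades the adapted coordinates from a neighborhood of $\gamma$ to all of $M$. Second --- and this is the source of the parenthetical remarks about degeneracy and removable singularities --- is the analysis of $\ddot E=pE$: one needs that it has a fundamental solution on all of $\mathbb U$, that $g=E^TE$ is then positive semidefinite and degenerates only on a discrete set (the conjugate points of $\gamma$), and that such degeneracies are removable because the Brinkmann potential $p$ stays smooth there; this is exactly the Sachs equation analysis of \cite{SEPI}, which I would invoke. A minor technical point is that the $x$-block of the linearized dilation may carry a $u$-dependent rotation and correspondingly an $\omega$-term may appear; this is pure gauge, removed by solving $\dot R=-\omega R$ with $R(u)\in SO(n)$, so it does not affect the conclusion.
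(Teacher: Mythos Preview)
The paper does not actually prove this theorem: it is stated with attribution to Alekseevsky and to the companion paper \cite{SEPI}, and the surrounding text only recalls the relevant coordinate changes (e.g.\ the Brinkmannization map $\beta_L$ and the remark that \cite{SEPI} shows how to pass globally from Rosen or Alekseevsky to Brinkmann, and locally from Brinkmann to Rosen). So there is no in-paper proof to compare your proposal against.

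That said, your sketch is along exactly the lines one expects the argument in \cite{SEPI} to take, and you correctly flag that the two genuinely nontrivial ingredients --- the global structure of the wave fronts as affine spaces foliating $M$, and the Sachs/Jacobi analysis controlling the degeneracies of $g=E^TE$ --- are imported from \cite{SEPI}. Your handling of the dilation eigenstructure is right in spirit (and you catch the possible rotational part in the $x$-block). Two small points of alignment with the paper's conventions: the Jacobi equation here is $\ddot L + pL = 0$, not $\ddot E = pE$; and the Brinkmann-to-Rosen change is phrased via $\dot L = SL$ with $S$ solving the Sachs equation $\dot S + S^2 + p = 0$, which packages the ``matching quadratic redefinition of $v$'' you mention. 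Neither affects the substance of your outline.
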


In all of these examples:
\begin{itemize}
\item the central null geodesic is the curve $\gamma : (u\in\mathbb U)\mapsto (u,0_{\mathbb R},0_{\mathbb X})$;
\item the dilation is the group $\mathcal D_t(u,v,x)=(u,e^{2t}v,e^tx)$; and
\item the wave fronts are the $u=$constant hypersurfaces.
\end{itemize}

Every plane wave has a group of at least $2n+1$ symmetries.  The Lie algebra of infinitesimal Killing symmetries typically annihilates $u$, and therefore the level sets of $u$ are invariant manifolds on which the Lie algebra integrates to a transitive group of transformations that all fix $u$.  In that case, the Killing algebra generates an isometry group that is transitive on the $2n+1$ dimensional manifold of null directions transverse to any fixed wave front.

In the exceptional case of a plane wave with a group of symmetries that does not fix the $u$ coordinate, the automorphism group is transitive.  They are called {\em homogeneous plane waves}.

The two main examples of homogeneous plane waves are as follows.  First, consider the Alekseevsky metric
\begin{align*}
 \mathcal G_A(\omega, p) &= du\,\alpha - dx^Tdx, \\
 \alpha &= 2\,dv - 2\,x^T\omega\,dx + x^Tpx\,du
\end{align*}
where $p,\omega$ are constant $n\times n$ matrices with $\omega^T=-\omega$, $p^T=p$.  Clearly these have $u\mapsto u+c$ for an extra symmetry, since nothing depends on $u$.  Secondly, consider the Brinkmann metric
$$\mathcal G_B(\omega,p) = 2\,du\,dv + x^Te^{-u\omega}pe^{u\omega}x\,du^2 - dx^Tdx.$$
(Note that $e^{u\omega}$ denotes the one parameter subgroup in $SO(n)$ generated by the constant skew matrix $\omega$.)  Now the extra symmetry is
$$u\mapsto u+c,\quad x\mapsto e^{-c\omega}x.$$
Every homogeneous plane wave is globally conformal to either of these forms, with conformal factor $e^{bu}$ for a constant $b$.  The obvious change of variables ($X = e^{u\omega}x$) is an isomorphism of Penrose limits.  We summarize:
\begin{lemma}\label{GAGBIso}
  Let $\omega=-\omega^T$ and $p=p^T$ be constant.  Then
  $$\mathcal G_B(\omega,p)\cong\mathcal G_A(\omega,p+\omega^2),$$
  and
  $$\mathcal G_A(\omega,p)\cong\mathcal G_B(\omega,p-\omega^2)$$
  where the isometric isomorphisms preserve the dilation and fix the central null geodesic pointwise.
\end{lemma}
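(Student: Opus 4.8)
The plan is to prove both isomorphisms at once by the single explicit substitution $X=e^{u\omega}x$ (leaving $u$ and $v$ fixed), and then to check directly that the resulting map has the claimed compatibility with the dilation and the central null geodesic. Everything reduces to a short matrix computation, so I would not expect any genuine difficulty.

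Write $\Phi\colon(u,v,x)\mapsto(u,v,X)$ with $X=e^{u\omega}x$, equivalently $x=e^{-u\omega}X$. Two elementary facts drive the calculation: since $\omega^T=-\omega$, the matrix $e^{u\omega}$ is orthogonal, so $(e^{-u\omega})^Te^{-u\omega}=I$ and consequently $x^Te^{-u\omega}=X^T$, $e^{u\omega}x=X$; and $\omega^T\omega=-\omega^2$. Differentiating gives $dx=e^{-u\omega}(dX-\omega X\,du)$, whence, using orthogonality,
\[
dx^Tdx = dX^TdX - 2\,dX^T\omega X\,du - X^T\omega^2 X\,du^2,\qquad x^Te^{-u\omega}pe^{u\omega}x = X^TpX .
\]
Substituting these into $\mathcal G_B(\omega,p)=2\,du\,dv+x^Te^{-u\omega}pe^{u\omega}x\,du^2-dx^Tdx$ and collecting terms yields
\[
\mathcal G_B(\omega,p)=2\,du\,dv+2\,du\,(dX^T\omega X)+X^T(p+\omega^2)X\,du^2-dX^TdX .
\]
Because $dX^T\omega X$ is a scalar it equals its transpose $X^T\omega^T\,dX=-X^T\omega\,dX$, so the cross term is $-2\,du\,(X^T\omega\,dX)$, and the right-hand side is precisely $du\,\alpha-dX^TdX$ with $\alpha=2\,dv-2X^T\omega\,dX+X^T(p+\omega^2)X\,du$; that is $\mathcal G_A(\omega,p+\omega^2)$. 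This establishes the first isomorphism.

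Next I would dispatch the two extra assertions. The map $\Phi$ commutes with the dilation because $\Phi(\mathcal D_t(u,v,x))=(u,e^{2t}v,e^{u\omega}(e^tx))=(u,e^{2t}v,e^t(e^{u\omega}x))=\mathcal D_t(\Phi(u,v,x))$, the scalar dilation of $x$ commuting with the $u$-dependent rotation; and $\Phi(u,0,0)=(u,0,0)$, so $\Phi$ fixes the central null geodesic $\{v=0,\,x=0\}$ pointwise. The second isomorphism now follows for free: applying the identity just proved with $p$ replaced by $p-\omega^2$ shows that $\Phi$ carries $\mathcal G_B(\omega,p-\omega^2)$ to $\mathcal G_A(\omega,p)$, so the inverse map $(u,v,X)\mapsto(u,v,e^{-u\omega}X)$ is the desired isometric isomorphism $\mathcal G_A(\omega,p)\to\mathcal G_B(\omega,p-\omega^2)$, again preserving the dilation and fixing the central null geodesic pointwise.

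As noted, there is no real obstacle: the lemma is a change-of-variables verification. The only place to be careful is the sign bookkeeping coming from $\omega^T=-\omega$, in particular rewriting $2\,du\,(dX^T\omega X)$ as $-2\,du\,(X^T\omega\,dX)$ to land in the Alekseevsky normal form; and it is worth noting that the product-rule term $-\omega e^{-u\omega}X\,du$ in $dx$ is what simultaneously produces the twist one-form $-2X^T\omega\,dX$ and the $+\omega^2$ shift of the curvature matrix $p$.
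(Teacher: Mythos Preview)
Your proof is correct and uses exactly the change of variables the paper indicates (the line just before the lemma says ``The obvious change of variables ($X=e^{u\omega}x$) is an isomorphism of Penrose limits''); you have simply supplied the details the paper omits here, which reappear in the proof of Lemma~\ref{MicrocosmAB}. Your sign bookkeeping and the verification of dilation-equivariance and pointwise fixing of the central null geodesic are all fine.
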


This paper is organized as follows.  The first half of the paper is devoted to studying isometries of plane waves.  First, in \S\ref{AutomorphismsSection}, we characterize the automorphisms of Penrose limits.  Generically, these are only constant rotations in $\mathbb X$, but there is one more non-degenerate, non-flat case, in which there is a single extra automorphism.  Plane waves having an extra global automorphism are precisely the homogeneous plane waves.  The results of this section are already known, or easily extracted from the current literature on plane waves.  In four dimensions, the cases are taken up in \cite{Stephani}.  We refer also to \cite{blau2003homogeneous} which covers substantially similar ground, but it is presented here in a rather different way that we feel is justified to keep the presentation here as self-contained as possible, and because of its close connection with the problems taken up later in the article.  In \S\ref{HomogeneousPlaneWavesSection}, we discuss the case of the homogeneous plane waves.  These come in two flavors, depending on whether they are geodesically complete.  The geodesically complete plane waves are termed {\em microcosms}, and their detailed study shall be the subject of the next paper in this series.

Next, we consider the problem of isometries between two different Brinkmann metrics.  Modulo the symmetries, these turn out to be pretty much as one expects: the freedom is an affine reparameterization of the $u$ coordinate and a constant rotation of $\mathbb X$.  This rather simple result is used to construct an example of a family of vacuum plane waves in four dimensions, continuously parameterized by the Hilbert cube $[0,1/2]^{\mathbb Z}$, such that two elements of the family are isometric if and only if they are shift-equivalent.  This example has implications for the complexity of classifying vacuum spacetimes, along the lines of \cite{panagiotopoulos2023incompleteness}.

The second part of the paper is concerned with the corresponding theory for conformal isometries, which ostensibly have a more intricate structure.  In \S\ref{ConformalDiffeoRosen}, we characterize the conformal isometries between Rosen plane waves.  In \S\ref{BrinkmannConformalSymmetries}, we find the conformal Killing algebra of an arbitrary plane wave, under the assumption that the conformal curvature be nonzero at some point.  The main result of the paper is Lemma \ref{FundamentalLemma}, although the statement is not very pretty, so the first part of \S\ref{BrinkmannConformalSymmetries} explores some corollaries and other consequences.  We can here summarize the main result as follows:
\begin{theorem}
  Let $(\mathbb M, G)$ be a conformal plane wave whose conformal curvature is non-vanishing at some point.  Then the conformal symmetry algebra $\mathcal W$ is spanned by the algebra of infinitesimal Killing symmetries plus a dilation vector field vanishing on a null geodesic.  The first derived algebra $\mathcal W^{(1)}=[\mathcal W,\mathcal W]$ is spanned by the Heisenberg symmetries, plus the subalgebra of infinitesimal rotations in $\mathfrak{so}(\mathbb X)$ that commute with the tidal curvature matrix $p(u)$ for all $u$.  The center of $\mathcal W^{(1)}$ is one-dimensional, spanned by the null Killing symmetry $\partial_v$.  Finally, exactly one of the following statements is true:
  \begin{itemize}
  \item $\mathcal W/\mathcal W^{(1)}$ is one-dimensional, containing only the equivalence class of the dilation, in which case every infinitesimal conformal symmetry is complete.
  \item $\mathcal W/\mathcal W^{(1)}$ is two-dimensional, in which case it contains the dilation and an additional infinitesimal Killing symmetry, and $\mathcal W$ is locally transitive.  In that case, the plane wave is canonically embedded into a homogeneous conformal plane wave, with Alekseevsky metric $\mathcal G_A(\omega,p)$, where $\omega^T=-\omega$, $p^T=p$, and $p,\omega$ are constant.  The group of conformal symmetries of the ambient plane wave is transitive on the set of all null vectors transverse to some wave front.
  \end{itemize}
  In both cases, the group of conformal symmetries of $(\mathbb M,G)$ is transitive on any (fixed) wave front, as well as on the set of null vectors transverse to that wave front.
\end{theorem}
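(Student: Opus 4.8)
The plan is to read each assertion off the explicit normal form for conformal Killing fields that constitutes Lemma \ref{FundamentalLemma}, working throughout in Brinkmann coordinates $(u,v,x)$ on $\mathbb M = \mathbb U\times\mathbb R\times\mathbb X$ with metric $2\,du\,dv + x^Tp(u)x\,du^2 - dx^Tdx$, available by Theorem \ref{SEPITheorem1}. First I would set up the conformal Killing equation $\mathcal L_\xi G = 2\varphi\,G$ and use the curvature hypothesis to constrain its solutions. A plane wave carries the parallel null line field $\mathbb R\,\partial_v$, and its Weyl tensor is algebraically determined by the trace-free part of $p(u)$; since that trace-free part is nonzero at $u_0$, the Weyl tensor singles out $\mathbb R\,\partial_v$ as its canonical null direction, so every conformal Killing field preserves $\mathbb R\,\partial_v$ and hence the wave-front foliation $\{u = \text{const}\}$. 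Therefore $\xi^u$ is a function of $u$ alone, and matching the $du^2$-component of the conformal Killing equation forces $\xi^u$ to be affine, $\xi^u = a + bu$. The solutions with $a = b = 0$ are precisely the Killing fields tangent to the wave fronts: the Heisenberg algebra, spanned by $\partial_v$ and the $2n$ generalized translations governed by the Sachs equation, extended by the rotations $\rho\in\mathfrak{so}(\mathbb X)$ with $\rho\,p(u) = p(u)\,\rho$ for every $u$. The residual freedom contributes the homothety $D = 2v\partial_v + x^i\partial_{x^i}$, which vanishes along the central null geodesic $\{v = 0, x = 0\}$, together with one further genuine Killing field (having $b = 0$, $a\neq 0$) exactly when the wave is homogeneous. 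This gives the first sentence of the theorem and exhibits $\mathcal W$ as the Killing algebra together with the line $\mathbb R\,D$.

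Next I would compute the derived algebra by bracketing the generators just listed. One has $[D,\partial_v] = -2\partial_v$ and $[D,\xi_\alpha] = -\xi_\alpha$ for each generalized translation $\xi_\alpha$, so the whole Heisenberg algebra lies in $\mathcal W^{(1)}$; the bracket of two generalized translations is the (constant) Wronskian pairing times $\partial_v$, the bracket of a rotation with a generalized translation is again a generalized translation, and the brackets among the rotations and with the remaining generators account for the full rotational part, so that $\mathcal W^{(1)}$ is exactly the Heisenberg algebra together with the rotations commuting with every $p(u)$, as claimed. For the center, $\partial_v$ is central because it is covariantly constant and visibly commutes with every generator in sight; conversely an element of $\mathcal W^{(1)}$ with nonzero translational component fails to commute with $\partial_v$ or with a suitable conjugate translation, and one with nonzero rotational component fails to commute with some translation that it moves, so the center is precisely $\mathbb R\,\partial_v$.

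For the dichotomy I would examine $\mathcal W/\mathcal W^{(1)}$. The class of $D$ is always present and nonzero (a genuine homothety is never an isometry), so the quotient is one- or two-dimensional, the second dimension---when present---being the class of a Killing field not fixing $u$. In that case the plane wave is homogeneous, and by the classification in \S\ref{HomogeneousPlaneWavesSection} together with Lemma \ref{GAGBIso}, after the substitution $X = e^{u\omega}x$ and a conformal rescaling by $e^{bu}$ it becomes an open subset of $\mathcal G_B(\omega,p)$, equivalently of Alekseevsky form $\mathcal G_A(\omega,p)$ with constant skew $\omega$ and symmetric $p$; this is the canonical embedding into an ambient homogeneous conformal plane wave, and one checks that the conformal group of the ambient wave (the Heisenberg symmetries, the dilation, and the $u$-translation) is transitive on the null vectors transverse to a wave front. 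In the one-dimensional case I would instead verify completeness: the Heisenberg fields, the $p(u)$-commuting rotations, and the dilation each generate one-parameter groups acting globally on $\mathbb M$, and since $\mathcal W = \mathcal W^{(1)} + \mathbb R\,D$ with $\mathcal W^{(1)}$ an ideal on which $D$ acts, these assemble into a group acting globally on $\mathbb M$, so every element of $\mathcal W$ is complete. Finally, transitivity on a fixed wave front is immediate because the $2n$ generalized translations already act simply transitively on each leaf $\{u = \text{const}\}$, and transitivity on the transverse null vectors follows by additionally using $\partial_v$ and $D$ to adjust the remaining coordinate and the scale; in the two-dimensional case the $u$-translation promotes this to transitivity across all wave fronts at once.

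The main obstacle is the first step---Lemma \ref{FundamentalLemma} itself---namely showing that the conformal Killing system admits no solutions beyond the Killing fields and the single homothety. This is a rigidity argument driven by the nonvanishing of the Weyl tensor: once $\xi^u = a + bu$ is known, one integrates the remaining components of $\mathcal L_\xi G = 2\varphi\,G$ along $u$, and the potential special-conformal and higher-order terms are eliminated because the $u$-dependent compatibility conditions they would have to satisfy are obstructed by $p(u_0)$ having nonzero trace-free part; the only integration constant that survives is the one producing the extra Killing field of the homogeneous branch. Carrying out this integration cleanly, while keeping track of the borderline homogeneous case, is the technical heart of the argument.
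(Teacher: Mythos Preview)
Your proposal contains a genuine error at the first step.  You assert that ``matching the $du^2$-component of the conformal Killing equation forces $\xi^u$ to be affine, $\xi^u = a + bu$.''  This is correct for \emph{Killing} automorphisms (Theorem~\ref{Dichotomy}), but it is false for conformal Killing fields.  In the conformal case, Lemma~\ref{FundamentalLemma} shows that $\xi^u = w(u)$ obeys the third-order linear equation
\[
\dddot w + 4P\dot w + 2\dot P\,w = 0,\qquad nP = \operatorname{tr} p,
\]
together with the trace-free constraint $w\dot{\tilde p} + 2\dot w\,\tilde p + [\tilde p,W] = 0$.  The conformal factor is $S = \dot w + 2k$, so when $w$ is not constant the field $V_{(w,W)}$ is a \emph{proper} conformal Killing field---neither a Killing field nor a homothety.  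The one-dimensionality of the ``extra'' piece does not come from affinity of $w$; it comes from multiplying the trace-free constraint by $\tilde p$ and tracing, which yields $(w^4 Q)^{\cdot}=0$ with $Q = \operatorname{tr}(\tilde p^2)$, so $w = Q^{-1/4}$ up to scale on any interval where $Q>0$.

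This error propagates.  Your identification of the extra symmetry as ``one further genuine Killing field (having $b=0$, $a\ne 0$)'' is not what happens: in the original Brinkmann coordinates, $V_{(w,W)}$ has non-constant $w$ and non-constant conformal factor $\dot w$.  The paper obtains a Killing field only \emph{after} the reparametrization $dU = w^{-1}du$ and the associated conformal rescaling, which converts the metric into one with $w(U)\equiv 1$ and hence into the homogeneous Alekseevsky form.  The canonical embedding into $\mathcal G_A(\omega,p)$ is precisely this change of conformal gauge; it is not available if you believe $w$ was affine to begin with.  Your shortcut ``the Weyl tensor singles out $\mathbb R\,\partial_v$, hence $\xi^u = \xi^u(u)$'' is also not how the paper proceeds: the paper allows $A = a(u,x)$ a priori, finds $a = \tfrac12 q_0 x^Tx + x^Ty(u) + w(u)$, and then uses the non-vanishing of $\tilde p$ to kill $q_0$ and $y$ separately.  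The remainder of your outline (derived algebra, center, transitivity) is fine once the correct normal form is in hand.
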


\subsection{Definitions}
\begin{definition}
An Euclidean vector space is a pair $(\mathbb{X}, ^T)$ with  $\mathbb{X}$, a finite-dimensional vector space over the reals, $\mathbb{R}$, and $^T$ an idempotent transpose operation on the tensor algebra of $\mathbb{X}$ over the reals, such that for any tensors $A$ and $B$ we have $(A\otimes B)^T = B^T \otimes A^T$.  Also $^T$ commutes with tensor contraction.  Finally $^T$ interchanges $\mathbb{X}$ and $\mathbb{X}^*$, the dual space of $\mathbb{X}$, such that if $0 \ne x \in \mathbb{X}$, then $x^Tx > 0$.  \end{definition}
An endomorphism $L$ of $\mathbb{X}$, so $L \in \mathbb{X} \otimes \mathbb{X}^*$ is said to be symmetric if and only if $L^T = L$, or to be skew if and only if $L^T = - L$.  Denote by $I$ the (symmetric) identity automorphism of $\mathbb{X}$.

No essential loss of generality is entailed by taking $\mathbb X$ to be the standard Euclidean space $\mathbb R^n$, with the transpose operation the usual matrix transpose, and call an endomorphism of $\mathbb X$ an ($n\times n$) {\em matrix}.

We now have the various models of plane wave metrics:
\begin{definition}
  Let $\mathbb M=\mathbb U\times\mathbb R\times\mathbb X$ where $\mathbb U$ is a real interval, and $\mathbb X$ is a Eudlidean space of dimension $n$.  On $\mathbb M$, define the various types of plane wave metrics by:
  \begin{itemize}
  \item A Brinkmann metric is the Lorentzian metric
    $$G_\beta(p) = 2\,du\,dv + x^Tp(u)x\,du^2 - dx^Tdx$$
    where $p$ is a symmetric matrix, which depends smoothly on $u\in\mathbb U$.
  \item A Rosen metric is the Lorentzian metric
    $$ G_\rho(h) = 2\,du\,dv - dx^Th(u)dx$$
    where $h$ is a symmetric positive-definite matrix depending smoothly on $u\in\mathbb U$.
  \item An Alekseevsky metric is a metric of the form 
    \begin{align*}
      \mathcal G_A(\omega, p) &= du\,\alpha - dx^Tdx, \\
      \alpha &= 2\,dv - 2\,x^T\omega(u)\,dx + x^Tp(u)x\,du
    \end{align*}
    where $\omega$ is skew, and $p$ is symmetric, and both depend smoothly on $u\in\mathbb U$.
  \end{itemize}
\end{definition}
In \cite{SEPI}, it was shown how to go from a Rosen or Alekseevsky metric to a Brinkmann metric (globally), and also from a Brinkmann metric to a corresponding Rosen metric {\em locally} in $u$.

The change between a Brinkmann metric $G_\beta(p)$ and Rosen metric $G_\rho(\bar h)$ is by means of the coordinate change $x = L\bar x$, $v = \bar v + 2^{-1}x^TSx$, where $L$ is a matrix satisfying $\bar h=L^TL$ and $\dot L = SL$, where the symmetric tensor $S$ is a solution of the Sachs equation $\dot S+S^2+p=0$.  As usual, a dot denotes differentiation with respect to $u$.  This type of matrix has the property that its columns form a Lagrangian basis of Jacobi fields, so we make the definition:
\begin{definition}
  A {\em Lagrangian matrix} is a matrix $L(u)$ depending smoothly on $u$, such that $\ddot L + pL=0$, $L(u)$ is invertible for all $u\in\mathbb U-\mathbb S$, where $\mathbb S$ is a discrete set of points, and $S(u):=\dot L(u)L(u)^{-1}$ is symmetric for all $u\in\mathbb U-\mathbb S$.
\end{definition}

Many of the results of this paper assume that the plane waves under consideration have nontrivial curvature somewhere.  This condition is most easily introduced for the Brinkmann form of a plane wave, $G_\beta(p)$.  The quantity $p(u)$ is smooth on $\mathbb{U}$, taking values in the real symmetric endomorphisms of $\mathbb{X}$.  We decompose $p(u) = \tilde{p}(u) + P(u)I$, where $nP(u) = \op{tr}(p(u))$ and $\op{tr}(\tilde{p}(u)) = 0$.    Recall that the real-valued function $P$ determines the Einstein tensor of the spacetime, such that positive energy density corresponds to $P$ being positive, whereas  the trace-free part of $p(u)$, denoted $\tilde{p}(u)$,  a trace-free symmetric  endomorphism of $\mathbb{X}$, determines the Weyl curvature.
\begin{definition}
  A spacetime is called non-flat if the curvature is not identically zero, and flat otherwise.  A spacetime is said to be conformally curved if its Cartan conformal curvature is not identically zero and conformally trivial otherwise.  
\end{definition}
  Any spacetime conformally isometric  to a conformally curved spacetime is conformally curved.   Any spacetime conformally isometric to a spacetime that is not conformally trivial is itself conformally trivial.

By standard Bianchi identities, if the spacetime dimension is at least four, then the spacetime is conformally curved if and only if its Weyl curvature is not identically zero.   For a plane wave of spacetime dimension three (so $n = 1$), it is easy to see that it is always at locally conformally flat so it is conformally trivial. 

The purpose of this article is to study isometries and conformal isometries:
\begin{definition}
Let $(M,G)$ and $(\bar M,\bar G)$ be spacetimes.  A smooth diffeomorphism $\phi : M\to\bar{M}$ is called an {\em isometry} if $\phi^*\bar G=G$.  The diffeomorphism is called a conformal isometry if $\phi^*G = e^{2\Omega}G$ for some smooth function $\Omega$ on $M$.
\end{definition}
In this definition, and elsewhere unless otherwise explicitly indicated, diffeomorphisms are assumed to be {\em bijective}, smooth and with smooth inverse.

Infinitesimal isometries, or Killing vectors, are vector smooth fields $V$ such that
$$\mathscr L_VG = 0,$$
and infinitesimal conformal diffeomorphisms, or conformal Killing vectors, are smooth vector fields $V$ such that
$$\mathscr L_VG = SG $$
where $S$ is a smooth function on $M$.

This article studies several related questions:
\begin{itemize}
\item Determine the isometries (resp., the conformal isometries) of a plane wave to itself (the isometry group, resp., the conformal group).
\item Determine the isometries (resp., the conformal isometries) of a plane wave in either Brinkmann or Rosen form to another plane wave of the same form.
\end{itemize}

The common method for solving these problems is first to use the standard fact, established in \cite{SEPI}, that there is a subgroup of the isometry group of any plane wave that acts transitively on the set of transverse null geodesics.

\begin{definition}
A null geodesic in a plane wave is called {\em transverse} if it intersects any (and hence every) wave front transversally.
\end{definition}

In \cite{SEPI}, we proved that the isometry group of a plane wave is transitive on the set of transverse null geodesics.  It is also true for conformal isometries.  Therefore, after showing that isometries preserve transversality (for suitably generic plane waves), to characterize (conformal) isometries between plane waves, it is sufficient to characterize (conformal) isometries that send the central null geodesic of one to the central null geodesic of the other.

\section{Isometries}
\subsection{Automorphisms of Penrose limits}\label{AutomorphismsSection}
Consider a plane wave spacetime in the Brinkmann form
$$G=2\,du\,dv + x^Tp(u)x\,du^2 - dx^Tdx.$$
Notice that the vector field $V=\partial_v$ is a Killing vector of $G$.  Also the one-parameter group of dilations $\mathcal D_t:(u,v,x)\mapsto (u,e^{2t}v,e^tx)$ preserves the null geodesic $\gamma=\{v=0, x=0\}$ is a null geodesic, which is affinely parameterized by the coordinate $u$.

Recall that a plane wave equipped with such a dilation is called a {\em Penrose limit} spacetime.
\begin{definition}
  An {\em infinitesimal automorphism} of a Penrose limit is a Killing vector that commutes with the dilation.
\end{definition}

\begin{lemma}
The dilation of a Penrose limit is tangent to a foliation by null hypersurfaces. The vector field $V=\partial_v$ is the unique null Killing vector field tangent to the null hypersurfaces which is normalized against the tangent vector $\partial_u$ to $\gamma$.
\end{lemma}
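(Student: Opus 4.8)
The plan is to reduce to Brinkmann coordinates $(u,v,x)$, which involves no loss of generality: by Theorem~\ref{SEPITheorem1} every plane wave has a Brinkmann form, and by the results of \cite{SEPI} recalled above the dilation is then the standard group $\mathcal D_t(u,v,x)=(u,e^{2t}v,e^tx)$, with generator the Euler-type field $E=2v\,\partial_v+\sum_i x^i\partial_{x^i}$, while $\gamma$ is the curve $v=x=0$ with tangent $\partial_u$. I would first dispatch the foliation statement. The coordinate $u\colon\mathbb M\to\mathbb U$ is a submersion, so its fibres $\{u=\text{const}\}$ (the wave fronts) foliate $\mathbb M$. At any point the tangent space of a leaf is $\ker du=\operatorname{span}(\partial_v,\partial_{x^1},\dots,\partial_{x^n})$, and on this subspace the Gram matrix of $G$ is $\operatorname{diag}(0,-1,\dots,-1)$ because $G(\partial_v,\partial_v)=g_{vv}=0$, $G(\partial_v,\partial_{x^i})=0$, and $G(\partial_{x^i},\partial_{x^j})=-\delta_{ij}$; hence the induced metric is degenerate with one-dimensional radical $\mathbb R\,\partial_v$, which is exactly the condition that the leaf be a null hypersurface. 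The dilation is tangent to this foliation since $\mathcal D_t$ fixes $u$ (equivalently $Eu=0$), so it maps each leaf to itself.

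Next I would check that $V=\partial_v$ has all the stated properties: it is null ($G(\partial_v,\partial_v)=g_{vv}=0$), Killing (every component of $G$ in Brinkmann coordinates is $v$-independent, so $\mathscr L_{\partial_v}G=0$), tangent to the wave fronts ($du(\partial_v)=0$), and normalized against $\partial_u$ in the sense that $G(\partial_v,\partial_u)=g_{uv}=1$; in particular $G(V,\dot\gamma)=1$ along $\gamma$.

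The remaining point, uniqueness, is where the real work lies. Let $W$ be any null Killing field tangent to the foliation. Tangency means $W$ lies pointwise in $\ker du$, and nullity together with the computed Gram matrix $\operatorname{diag}(0,-1,\dots,-1)$ forces $W$ into the radical, so $W=f\,\partial_v$ for a smooth $f$ on $\mathbb M$. Applying the identity $(\mathscr L_{fX}G)(Y,Z)=f(\mathscr L_XG)(Y,Z)+(Yf)\,G(X,Z)+(Zf)\,G(X,Y)$ with $X=\partial_v$, using $\mathscr L_{\partial_v}G=0$ and $G(\partial_v,\cdot)=du$, gives $\mathscr L_{f\partial_v}G=df\otimes du+du\otimes df$. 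Requiring this symmetric tensor to vanish and evaluating on the pairs $(\partial_u,\partial_u)$, $(\partial_v,\partial_u)$, $(\partial_{x^i},\partial_u)$ yields $\partial_uf=\partial_vf=\partial_{x^i}f=0$, so $f$ is a constant $c$; then $G(W,\partial_u)=c\,g_{uv}=c$, and the normalization $G(W,\dot\gamma)=1$ forces $c=1$, i.e.\ $W=\partial_v$. I expect this uniqueness step — pinning $W$ down to $f\,\partial_v$ and then forcing $f$ to be constant through the Killing equation — to be the only part needing care; the rest is a direct reading of the Brinkmann metric in coordinates.
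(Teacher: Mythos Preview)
Your proposal is correct and follows essentially the same approach as the paper: identify the wave fronts $du=0$ as null hypersurfaces with one-dimensional radical $\mathbb R\,\partial_v$, deduce that any null Killing field tangent to them has the form $f\,\partial_v$, and use the Killing equation to force $f$ to be constant. The paper's proof is much terser (it asserts in one line that the only freedom is a constant multiple), whereas you spell out the Lie-derivative computation $\mathscr L_{f\partial_v}G=df\otimes du+du\otimes df$ explicitly, but the underlying argument is the same.
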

\begin{proof}
  The vector field in question is obviously null and tangent to the $du=0$ hypersurfaces.  Therefore it is the unique null generator of these hypersurfaces, and so any Killing vector field is a multiple of $V$.  It is clear from the Killing equation that the only freedom is replacing $V$ by a constant multiple.
\end{proof}
Thus a Penrose limit spacetime has not only a preferred dilation (by definition), but also a preferred null Killing field $V$ up to an overall constant scale.  Therefore the following definition is natural:

\begin{definition}
  Call a Killing vector {\em commutant} if it commutes with the null Killing field $V=\partial_v$.
\end{definition}

\begin{lemma}\label{CommutantAutomorphisms}
  Suppose that $W$ is an infinitesimal commutant automorphism that vanishes at a point of $\gamma$.  Then $W$ is a vector field
  $$W = \mathbf W(x)\partial_x $$
  where $\mathbf W$ is a constant element of $\mathfrak{so}(\mathbb X)$ such that $p(u)\mathbf W$ is skew for all $u\in\mathbb U$.
\end{lemma}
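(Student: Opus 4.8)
The plan is to carry out a direct computation in the global Brinkmann coordinates $(u,v,x)$ on $\mathbb{M}=\mathbb{U}\times\mathbb{R}\times\mathbb{X}$. Write
$$W=A\,\partial_u+B\,\partial_v+C\cdot\partial_x,$$
where $A,B$ are smooth functions and $C$ a smooth $\mathbb{X}$-valued function on $\mathbb{M}$, with $C\cdot\partial_x=\sum_i C^i\partial_{x^i}$. We must impose four conditions in turn: that $W$ commutes with the null Killing field $V=\partial_v$ (commutant); that $W$ commutes with the infinitesimal dilation $\mathcal{D}=2v\,\partial_v+x\cdot\partial_x$ (automorphism); that $\mathscr{L}_WG=0$ (Killing); and that $W$ vanishes at some point $(u_0,0,0)$ of the central null geodesic $\gamma=\{v=0,\,x=0\}$.

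The first two conditions fix the \emph{shape} of $W$. From $[W,\partial_v]=0$ we read off $\partial_vA=\partial_vB=\partial_vC=0$, so $A,B,C$ are independent of $v$. Expanding $[W,\mathcal{D}]=0$ componentwise then gives $x\cdot\partial_xA=0$, $x\cdot\partial_xB=2B$, and $x\cdot\partial_xC=C$; since $A,B,C$ are smooth, Euler's relation forces $A=A(u)$ to depend on $u$ alone, $B=\tfrac12\,x^Tb(u)x$ for a symmetric matrix $b(u)$, and $C=c(u)x$ for a matrix $c(u)$, all depending smoothly on $u$. Thus at this stage
$$W=A(u)\,\partial_u+\tfrac12\,x^Tb(u)x\,\partial_v+\bigl(c(u)x\bigr)\cdot\partial_x.$$

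Now I impose $\mathscr{L}_WG=0$ on $G=2\,du\,dv+x^Tp(u)x\,du^2-dx^Tdx$, using the identities $\mathscr{L}_W\,du=d(W^u)$, $\mathscr{L}_W\,dv=d(W^v)$, $\mathscr{L}_W\,dx^i=d(W^i)$, and reading off the coefficients of $du\,dv$, $du^2$, $du\,dx^i$ and $dx^i\,dx^j$ one at a time (the coefficients of $dv^2$ and $dv\,dx^i$ vanish automatically). The $du\,dv$-coefficient gives $\dot A=0$, so $A$ is constant; since $W$ vanishes on $\gamma$ at $(u_0,0,0)$ its $\partial_u$-component there is $A(u_0)=0$, hence $A\equiv0$. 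The $dx^i\,dx^j$-coefficients give $c(u)+c(u)^T=0$, i.e. $c(u)\in\mathfrak{so}(\mathbb{X})$. The $du\,dx^i$-coefficients give $b(u)=\dot c(u)$; but $b(u)$ is symmetric while $\dot c(u)$ is skew, so both vanish, whence $B\equiv0$ and $c(u)\equiv\mathbf{W}$ is a constant skew matrix. Finally, with $A=B=0$ and $C=\mathbf{W}x$, the only remaining constraint, the $du^2$-coefficient, reduces to $x^T\mathbf{W}p(u)x=0$ for all $x$ and $u$; since the quadratic form of a matrix vanishes identically exactly when its symmetric part does, this is equivalent to $\mathbf{W}p(u)=p(u)\mathbf{W}$, i.e. to $p(u)\mathbf{W}$ being skew for every $u$. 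This yields $W=(\mathbf{W}x)\cdot\partial_x$ with $\mathbf{W}\in\mathfrak{so}(\mathbb{X})$ and $p(u)\mathbf{W}$ skew, as asserted.

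The computation is essentially forced and not deep; the main burden is simply to organize the Lie-derivative bookkeeping cleanly, for which the identities above are the key simplification. The one structurally significant point is the $du\,dx^i$-equation $b=\dot c$: it is the incompatibility between the symmetry of $b$ and the skewness of $\dot c$ that rigidifies $W$ down to a single constant rotation. The vanishing hypothesis enters exactly once, to kill the otherwise-permitted constant $\partial_u$-component of $W$ — that component is precisely the extra automorphism present for the homogeneous plane waves.
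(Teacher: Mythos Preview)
Your proof is correct and follows essentially the same route as the paper's: reduce $A,B,C$ to the stated polynomial shapes using the commutation hypotheses, then read off $\dot A=0$, $c+c^T=0$, $b=\dot c$ (hence both vanish), and finally the $du^2$ constraint $p\mathbf W$ skew from the Killing equation. The only cosmetic difference is that the paper imposes the dilation-weight condition first (writing $B=b_1(u)v+x^Tb_2(u)x$) and then kills the $b_1v$ term with the commutant hypothesis, whereas you eliminate $v$-dependence first and then invoke Euler homogeneity; the subsequent Killing-equation analysis is identical.
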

(In particular, generically the infinitesimal commutant automorphisms are reduced to the identity.)
\begin{proof} Let
  $$W = A\partial_u + B\partial_v + C\cdot\partial_x $$
  and 
  $$\mathcal D=2v\,\partial_v + x\partial_x$$
  be the dilation.  Since $W$ commutes with $\mathcal D$ by assumption, $A$ is a function only of $u$.
  
  Now, $B$ has has weight two and $C$ has weight one.  Therefore,
  $$B = b_1(u)v + x^Tb_2(u)x, $$
  $$C = c(u)x.$$
  However, $b_1=0$ because $W$ commutes with $\partial_v$ by hypothesis.

  We now examine the Killing equations $\mathscr L_W G = 0$, where $G=2\,du\,dv + x^Tp(u)x\,du^2 - dx^Tdx$.  Because of the sole $du\,dv$ term in the LHS, we have $\dot A=0$, and by the hypothesis that $W$ vanishes at a point of $\gamma$, $A$ is identically zero.  The only term quadratic in $dx$ has a factor of $c+c^T$, so this too must be zero, and $c(u)$ belongs to $\mathfrak{so}(\mathbb X)$ for all $u$.  The $du\,dx$ term is $x^T(2\,b_2-\dot c)\,dx\,du=0$ so $b_2=0$ and $c$ is constant, because $b_2$ is symmetric but $\dot c$ is skew.  Finally, the only $du^2$ term is $x^T(c^Tp+pc)x$, so $pc$ is skew.
  
\end{proof}

Generically, the only automorphisms of Penrose limits are those just described.  In some cases however, there is an extra automorphism:

\begin{theorem}\label{Dichotomy}
  For a Penrose limit spacetime $G=2\,du\,dv + x^TP(u)x\,du^2 - dx^Tdx$, such that $P(u)$ does not vanish identically, exactly one of the following situations holds:
  \begin{itemize}
  \item Every infinitesimal automorphism is commutant and fixes the central null geodesic pointwise.
  \item There exists at most one independent infinitesimal automorphism that does not fix the central null geodesic,
    $T_{a,b,C}=(a-bu)\partial_u+bv\partial_v + Cx\cdot\partial_x$ where $a,b$ are real constants $a,b$ (not both zero), and $C$ is a real constant skew-symmetric matrix.
    This extra symmetry is present if and only if the parameters $a,b,C$ satisfy the following equation:
    \begin{equation}\label{ExtraSymmetryEqn}
      (a-bu)\dot P - 2bP + [P,C] = 0.
    \end{equation}
  \end{itemize}
\end{theorem}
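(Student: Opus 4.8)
The plan is to rerun the computation behind Lemma~\ref{CommutantAutomorphisms}, this time \emph{without} assuming that the automorphism commutes with $\partial_v$, extract the single constraint that the extra freedom must satisfy, and then bound that freedom. So let $W$ be an arbitrary infinitesimal automorphism. Since $W$ commutes with $\mathcal D=2v\,\partial_v+x\partial_x$ and the coordinates $u,v,x$ have $\mathcal D$-weights $0,2,1$, the same weight argument as in Lemma~\ref{CommutantAutomorphisms} gives
$$W=A(u)\,\partial_u+\bigl(b_1(u)\,v+x^Tb_2(u)\,x\bigr)\partial_v+\bigl(c(u)\,x\bigr)\cdot\partial_x,\qquad b_2=b_2^T,$$
the only change being that $b_1$ is now allowed to be nonzero.

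Next I would impose $\mathscr L_WG=0$ term by term against $G=2\,du\,dv+x^TP(u)x\,du^2-dx^Tdx$. The sole $du\,dv$ term gives $b_1=-\dot A$; the term quadratic in $dx$ gives $c+c^T=0$; the $du\,dx$ term gives $2b_2=\dot c$, and since $b_2$ is symmetric while $\dot c$ is skew this forces $b_2=0$ and $\dot c=0$; finally the $du^2$ term, once $b_2=0$ is used, is the sum of a term $2\dot b_1\,v$ linear in $v$ and a term quadratic in $x$. The first forces $\dot b_1=0$, hence $\ddot A=0$, so $A(u)=a-bu$ with $b_1=b$; the second, after symmetrising and writing $C:=c$, is exactly $(a-bu)\dot P-2bP+[P,C]=0$, that is \eqref{ExtraSymmetryEqn}. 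Running the computation backwards, every skew $C$ and constants $a,b$ satisfying \eqref{ExtraSymmetryEqn} produce a genuine infinitesimal automorphism $T_{a,b,C}$ (it is assembled from $\mathcal D$-homogeneous pieces, so it automatically commutes with $\mathcal D$). Restricted to $\gamma=\{v=0,x=0\}$ the field $T_{a,b,C}$ equals $(a-bu)\,\partial_u$, so it fixes $\gamma$ pointwise exactly when $a=b=0$, in which case it is $Cx\cdot\partial_x$ with $[P,C]=0$ — precisely Lemma~\ref{CommutantAutomorphisms}.

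For the dichotomy: since $P$ is fixed, \eqref{ExtraSymmetryEqn} is linear and homogeneous in $(a,b,C)$, so the infinitesimal automorphisms form a vector space, and $T_{a,b,C}\mapsto(a,b)\in\mathbb R^2$ is a well-defined linear map whose kernel is precisely the subspace of automorphisms fixing $\gamma$ pointwise. If this map is identically zero, every automorphism is of the form $Cx\cdot\partial_x$, which is commutant and fixes $\gamma$ pointwise: the first alternative. Otherwise the image is nonzero, and I claim it cannot be all of $\mathbb R^2$. If it were, there would be automorphisms $T_{1,0,C_0}$ and $T_{0,1,C_1}$; substituting into \eqref{ExtraSymmetryEqn} gives $\dot P=[C_0,P]$ and $-u\dot P-2P+[P,C_1]=0$, and eliminating $\dot P$ yields $2P(u)=\bigl[P(u),\,C_1+uC_0\bigr]$. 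Taking the trace pairing with $P(u)$ and using $\tr\bigl(P[P,X]\bigr)=0$ (cyclicity of the trace) gives $2\tr\bigl(P(u)^2\bigr)=0$, hence $P(u)=0$ for every $u$, contradicting $P\not\equiv0$. So the image is one-dimensional: there is a single extra generator $T_{a,b,C}$ modulo the automorphisms fixing $\gamma$ (and modulo scale), present exactly when \eqref{ExtraSymmetryEqn} admits a solution with $(a,b)\neq(0,0)$ — the second alternative.

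The Killing-equation bookkeeping is routine and largely duplicates that of Lemma~\ref{CommutantAutomorphisms}; the only genuine obstacle is the last step, showing that two independent ``$(a,b)$-directions'' are incompatible with nonvanishing tidal curvature. The trace identity $\tr\bigl(P[P,X]\bigr)=0$, which forces $\tr(P^2)\equiv0$ and hence $P\equiv0$, is exactly what closes that gap.
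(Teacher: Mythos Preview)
Your proof is correct, and the Killing-equation bookkeeping is essentially identical to the paper's. The one genuine difference is in the final step, where you must rule out two independent $(a,b)$-directions. The paper integrates the two instances of \eqref{ExtraSymmetryEqn} separately, obtaining explicit formulas $P=e^{Cu}P_0e^{-Cu}$ (from $(a,b)=(1,0)$) and an analogous conjugation formula in $\ln u$ (from $(a,b)=(0,1)$), and then compares these to force $P$ constant and commuting with $C,C'$, whence $-2P=[C',P]=0$. Your approach is more direct: you combine the two linear constraints algebraically to get $2P=[P,\,C_1+uC_0]$ and kill it in one stroke with the trace identity $\tr\bigl(P[P,X]\bigr)=0$, which immediately yields $\tr(P^2)\equiv0$ and hence $P\equiv0$. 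This bypasses the explicit integration entirely and is both shorter and more robust (no need to worry about the domain of $\ln u$ or matching constants of integration); the paper's route, on the other hand, has the advantage of feeding directly into the explicit form of $P$ stated in Theorem~\ref{ExtraSymmetryP}.
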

Note that in the case of flat space, there are two such extra automorphisms, since we can take $C=0$ and there is no restriction on the parameters $a,b$. 
\begin{proof}
  Consider an automorphism
  $$T = \mathcal A\partial_u + \mathcal B\partial_v + \mathcal C(\partial_x).$$
  Because it commutes with the dilations, $\mathcal A$ has weight zero (so is a function of $u$), $\mathcal B$ has weight two, and $\mathcal C$ has weight $1$.  Thus $\mathcal A=a(u), \mathcal B=b(u)v + x^TB(u)x$ with $B$ symmetric, $\mathcal C=C(u)x$.  A priori, the functions depend on $u$ only.

  The metric is
  $$G = 2\,du\,dv + x^TP(u)x\,du^2 - dx^Tdx.$$
  The Killing equation is $\mathscr L_TG=0$.  We analyze the components of the equation.
  \begin{itemize}
  \item $du\,dv$:  The term is $2\,(\dot a+b)\,du\,dv$.  Thus we have $\dot a=-b$.
  \item $dx\,dx$: $C+C^T=0$ ($C$ is skew)
  \item $du\,dx$: $B - \dot C=0$.  But $\dot C$ is skew and $B$ is symmetric, so $B=0$ and $C$ is constant.
  \item $du^2$: $2 \dot bv + a x^T\dot Px + 2 \dot a x^TPx + x^T(C^TP + PC)x $.  We conclude that $\dot b=0$, so that $a=a_0 - bu$ and $(a_0-bu)\dot P - 2bP + [P,C]= 0$.
  \item $dv^2$:  None
  \item $dv\,dx$: None
  \end{itemize}
  The $du^2$ term thus implies the dichotomy stated by the theorem.

  Now, suppose that we have a pair of solutions $(a,b,C)$ and $(a',b',C')$ to \eqref{ExtraSymmetryEqn}, such that $(a,b)$ and $(a',b')$ are linearly independent vectors in $\mathbb R^2$.  Without loss of generality, we may assume $(a,b)=(1,0)$ and $(a',b')=(0,1)$.  Then, solving \eqref{ExtraSymmetryEqn} for each respective solution we have, on the one hand
  $$ P = e^{C u}P_0 e^{-Cu},$$
  and on the other hand, in an open interval around $u_0$ (assuming, as we may, that $u_0\not=0$)
  $$ P = e^{C' \ln u}Q_0e^{-C'\ln u}.$$
  Therefore, $P=P_0=Q_0$, and $P$ commutes with $C$ and $C'$.  But \eqref{ExtraSymmetryEqn} now gives $-2P=[C',P]=0$,
  so $P=0$, and the metric is flat.
\end{proof}

In the exceptional case, we have an equation
$$(a-bu)\dot P - 2bP = [C,P].$$
Further specialization to the case $b=0,a\not=0$, this equation evidently gives the evolution of $P$ under the action of a one-parameter group of rotations, so that $P=e^{uC}P_0e^{-uC}$.  It is easy to see that the case of $b\not=0$ can be reduced to that of $b=0$ by reparameterizing the $u$ coordinate and simultaneously rescaling $P$:
$$d\widetilde u =\frac{du}{a-bu}, \quad \widetilde P = (a-bu)^2P.$$
Indeed, we find then:
$$\frac{d\widetilde P}{d\widetilde u} = [C,\widetilde P].$$
So, replacing $\widetilde u$ by $t$, we have proven:
\begin{theorem}\label{ExtraSymmetryP}
  For a special plane wave with extra symmetry $T_{a,b,\rho}$, the tidal curvature is
  $$P = \left(\frac{dt}{du}\right)^2e^{\rho t}P_0e^{-\rho t} $$
  where $P_0$ is constant and $dt = \frac{du}{a-bu}$.
\end{theorem}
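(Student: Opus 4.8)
The plan is to take as input the equation characterizing the extra automorphism, established in Theorem~\ref{Dichotomy}, namely $(a-bu)\dot P - 2bP + [P,C] = 0$, equivalently
\[
  (a-bu)\dot P - 2bP = [C,P],
\]
and to convert it into an autonomous linear equation by a change of independent variable together with a compensating rescaling of $P$. Concretely, I would introduce the new parameter $t$ by $dt = du/(a-bu)$ --- this is legitimate on any subinterval of $\mathbb U$ on which $a-bu\neq 0$, and since $a,b$ are not both zero the bad locus is at most the single point $u=a/b$, so nothing is lost locally --- and set $\widetilde P := (a-bu)^2 P$. The claim will follow once I show that $\widetilde P$, viewed as a function of $t$, satisfies $\dfrac{d\widetilde P}{dt} = [\rho,\widetilde P]$ with $\rho = C$, and then integrate this relation.

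The key step is the chain-rule computation. Writing $f = a-bu$, so that $\dot f = -b$ and $\frac{d}{dt} = f\frac{d}{du}$, one computes
\[
  \frac{d\widetilde P}{dt} = f\frac{d}{du}\bigl(f^2 P\bigr) = 2f^2\dot f\,P + f^3\dot P = -2b f^2 P + f^3\dot P .
\]
Substituting $f\dot P = 2bP + [C,P]$ from the symmetry equation gives $f^3\dot P = 2b f^2 P + f^2[C,P]$, hence
\[
  \frac{d\widetilde P}{dt} = f^2[C,P] = [C, f^2 P] = [C,\widetilde P] = [\rho,\widetilde P],
\]
using only that $[C,\cdot\,]$ is linear. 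This is precisely the computation already displayed before the statement; its sole content is the bookkeeping of the powers of $f$.

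Finally I would recognize $\dot{\widetilde P} = [\rho,\widetilde P]$ as the infinitesimal form of conjugation by the one-parameter rotation group $e^{\rho t}\in SO(\mathbb X)$: fixing a base value $t_0$, the matrix-valued curve $t\mapsto e^{\rho(t-t_0)}\widetilde P(t_0)e^{-\rho(t-t_0)}$ solves this linear ODE with the correct value at $t_0$, so by uniqueness it equals $\widetilde P(t)$; and since $\rho$ is skew, $e^{\rho t}$ is orthogonal, so the conjugate stays symmetric, consistent with $P$ being symmetric. Letting $P_0$ denote the resulting constant symmetric matrix (it absorbs the harmless constant factors $e^{\mp\rho t_0}$ and the value of $f$ at the base point --- this is why in the special case $b=0$ the formula could be written simply as $P = e^{uC}P_0 e^{-uC}$), and undoing the substitution through $P = f^{-2}\widetilde P$ with $f^{-1} = dt/du$, yields
\[
  P = \Bigl(\frac{dt}{du}\Bigr)^2 e^{\rho t}P_0 e^{-\rho t},
\]
as asserted. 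I expect no real obstacle here: the computation is routine and the single point deserving a word of care is that for $b\neq 0$ the reparameterization $t$ is singular at $u=a/b$, so the identity is to be read locally in $u$, exactly in keeping with the discussion preceding the theorem.
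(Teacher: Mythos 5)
Your proof is correct and follows essentially the same route as the paper: the substitution $dt = du/(a-bu)$, $\widetilde P = (a-bu)^2P$ reducing \eqref{ExtraSymmetryEqn} to the autonomous equation $d\widetilde P/dt = [C,\widetilde P]$, solved by conjugation under $e^{\rho t}$ with $\rho = C$. Your chain-rule bookkeeping and the remark about locality away from $u=a/b$ simply make explicit what the paper's terser computation leaves to the reader.
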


As an example, supposes that $\rho=0$ in the theorem.  Then $P$ satisfies $(a-bu)\dot P-2bP=0$, so that
$$P = (a-bu)^{-2}P_0.$$
The Brinkmann metric is then
$$G = 2\,du\,dv + (a-bu)^{-2} x^TP_0X\, du^2 - dx^Tdx.$$
The extra symmetry is
$$T_{a,b} = (a-bu)\partial_u+bv\partial_v,$$
and we have $\mathscr L_{T_{a,b}}G=0$.

Note that the metric $G$ has an essential singularity if $P_0\not=0$ and $b\not=0$, at $u=a/b$, because the tidal curvature of the central null geodesic blows up there.  In particular, it is not geodesically complete.

\subsection{Homogeneous plane waves}\label{HomogeneousPlaneWavesSection}
As noted in the previous section, some Penrose limits have an extra infinitesimal symmetry.  For example, if $P$ and $\Omega$ are constant, then the Alekseevsky metric
\begin{equation}\label{MicrocosmAlex}
  \mathcal A(P,\Omega) = 2\,du\,dv + x^TPx\,du^2 + [x^T\Omega\,dx -dx^T\Omega x]\,du - dx^Tdx
\end{equation}
has $u\to u+u_0$ as an extra symmetry.  Thus the symmetry group is transitive on each wave front, and transitive on the set of wave fronts, and so is transitive on the spacetime.  It is even transitive on the set of null directions transverse to some wavefront (i.e., as points in the projective null cone bundle, a $2n+2$ dimensional space).

Alternatively, for constant $P=P^T$ and $\rho=-\rho^T$, the Brinkmann metric
\begin{equation}\label{MicrocosmBrink}
\mathcal A_\beta(P,\rho) = 2\,du\,dv + x^Te^{\rho u}Pe^{-\rho u}x\,du^2 + dx^Tdx
\end{equation}
has for extra symmetry $u\to u+u_0$, $x\to e^{-\rho u_0}x$.
\begin{lemma}\label{MicrocosmAB}
  Every Penrose limit of the form \eqref{MicrocosmAlex} is globally isomorphic to one of the form \eqref{MicrocosmBrink}, and vice versa.  
\end{lemma}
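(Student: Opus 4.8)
The plan is to recognize the two one-parameter-rotation families \eqref{MicrocosmAlex} and \eqref{MicrocosmBrink} as nothing but the constant-coefficient specializations of the metrics $\mathcal G_A$ and $\mathcal G_B$ for which Lemma \ref{GAGBIso} was stated, and then to invoke that lemma directly rather than redo any computation.

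First I would normalize notation. Because $\Omega$ is skew, the scalar $dx^T\Omega x$ equals $-\,x^T\Omega\,dx$, so $x^T\Omega\,dx - dx^T\Omega x = 2\,x^T\Omega\,dx$, and hence $\mathcal A(P,\Omega) = \mathcal G_A(-\Omega, P)$ in the notation of Lemma \ref{GAGBIso} (the minus sign on the skew argument reflects the convention $\alpha = 2\,dv - 2\,x^T\omega\,dx + x^Tpx\,du$). Likewise, comparing the $du^2$ coefficients and again using skewness of $\rho$ one has $e^{\rho u}Pe^{-\rho u} = e^{-u(-\rho)}P\,e^{u(-\rho)}$, so $\mathcal A_\beta(P,\rho) = \mathcal G_B(-\rho, P)$. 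Since $P,\Omega,\rho$ are constant, the hypotheses of Lemma \ref{GAGBIso} are satisfied for both.

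Applying Lemma \ref{GAGBIso} then gives, on the one hand, $\mathcal A(P,\Omega) = \mathcal G_A(-\Omega,P) \cong \mathcal G_B(-\Omega,\, P - \Omega^2) = \mathcal A_\beta(P - \Omega^2,\, \Omega)$, and on the other hand $\mathcal A_\beta(P,\rho) = \mathcal G_B(-\rho, P) \cong \mathcal G_A(-\rho,\, P + \rho^2) = \mathcal A(P + \rho^2,\, \rho)$, which is the asserted correspondence in both directions. The isomorphisms furnished by Lemma \ref{GAGBIso} preserve the dilation $\mathcal D_t(u,v,x) = (u,e^{2t}v,e^tx)$ and fix the central null geodesic $\{v=0,\,x=0\}$ pointwise, so they are isomorphisms of Penrose limits, which is exactly what ``globally isomorphic'' asks for here. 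If one prefers an argument that does not route through Lemma \ref{GAGBIso}, the explicit diffeomorphism is linear in the transverse coordinate with coefficient the rotation subgroup $e^{\pm u\Omega}$ and leaves $u$ and $v$ unchanged; it is therefore manifestly a global diffeomorphism, commutes with $\mathcal D_t$, and fixes the central null geodesic, and the entire computation reduces to expanding $dx^Tdx$ under the substitution, producing a cross term in $dx\,du$ (absorbed into the Alekseevsky one-form $\alpha$) and a correction proportional to $\Omega^2$ in the $du^2$ coefficient.

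I do not expect a genuine obstacle here: the only points that require care are the sign bookkeeping between the skew parameters ($\Omega$ versus $-\omega$, and $\rho$ versus $-\omega$) and remembering that $\Omega^2 = -\,\Omega^T\Omega$ is negative semidefinite, so that passing between the Alekseevsky and Brinkmann normal forms genuinely shifts the constant tidal matrix $P$ by $\pm\,\Omega^2$ rather than leaving it untouched.
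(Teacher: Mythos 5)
Your proposal is correct and is essentially the paper's own argument: the paper proves the lemma by the direct substitution $x=e^{\rho u}X$, which is exactly the change of variables underlying Lemma \ref{GAGBIso} that you invoke (and which you also reproduce as your fallback computation), and your sign bookkeeping between $\mathcal A(P,\Omega)=\mathcal G_A(-\Omega,P)$, $\mathcal A_\beta(P,\rho)=\mathcal G_B(-\rho,P)$, and the resulting shift by $\pm\rho^2$ agrees with the paper's conclusion $\mathcal A_\beta(P_1,\rho)\cong\mathcal A(P_1+\rho^2,\rho)$.
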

\begin{proof}
  Beginning with
  $$G = 2\,du\,dv + x^Te^{\rho u}P_1e^{-\rho u} x\,du^2 - dx^Tdx,$$
  put $x=e^{\rho u}X$,
  $$2\,du\,dv + X^T(P_1+\rho^2)X\,du^2 + (X^T\rho dX - dX^T\rho X)du  - dX^TdX $$
  which is \eqref{MicrocosmAlex} with $P\to P_1+\rho^2$, $\Omega=\rho$.  This isomorphism is clearly reversible.
\end{proof}

\begin{theorem}\label{MicrocosmTheorem}
  Suppose that a Penrose limit has an extra infinitesimal automorphism $T_{a,b,\rho}$.  Then there is a conformal isometry of each component of $a-bu\not=0$ in $\mathbb U$ onto an open subset of $\mathcal A(P-b^2/4,\Omega)$, with conformal factor $|a-bu|$.  
\end{theorem}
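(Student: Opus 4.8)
The plan is to realize the conformal isometry as a composite of explicit moves: the reparametrization of $u$ singled out by Theorem~\ref{ExtraSymmetryP}, a conformal rescaling by $|a-bu|$, and compensating (mildly $u$-dependent) rescalings of the transverse and null coordinates; the outcome is a Brinkmann microcosm, which Lemma~\ref{MicrocosmAB} then puts into Alekseevsky form.

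By Theorem~\ref{ExtraSymmetryP}, the tidal curvature of a Penrose limit carrying the symmetry $T_{a,b,\rho}$ is $P(u)=(a-bu)^{-2}e^{\rho t}P_0e^{-\rho t}$, with $P_0$ a constant symmetric matrix and $t=t(u)$ a primitive of $(a-bu)^{-1}$. Fix one connected component of $\{u\in\mathbb U:\ a-bu\neq0\}$; on it $u\mapsto t$ is a diffeomorphism onto an interval, and, writing $\lambda:=a-bu$, one has $d\lambda/dt=-b\lambda$. In the coordinate $t$ (so $du=\lambda\,dt$) the metric reads
\[
G=2\lambda\,dt\,dv+x^{T}e^{\rho t}P_0e^{-\rho t}x\,dt^{2}-dx^{T}dx .
\]
I would then conformally rescale by $|\lambda|^{-1}$ and at the same time set $x=|\lambda|^{1/2}Y$ and $v=\bar v-\tfrac{b}{4}|Y|^{2}$ (on the component where $\lambda<0$ one also negates $v$, which is why the conformal factor comes out as $|a-bu|$ rather than $a-bu$). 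The rescaling normalizes the $dt\,dv$ coefficient to $2$; since $d|\lambda|^{1/2}/dt=-\tfrac{b}{2}|\lambda|^{1/2}$, the substitution $x=|\lambda|^{1/2}Y$ turns $-|\lambda|^{-1}dx^{T}dx$ into $-dY^{T}dY+b\,Y^{T}dY\,dt-\tfrac{b^{2}}{4}|Y|^{2}dt^{2}$, while the shift in $v$ contributes precisely $-b\,Y^{T}dY\,dt$, cancelling the cross term. What survives is
\[
|\lambda|^{-1}G=2\,dt\,d\bar v+Y^{T}e^{\rho t}\!\bigl(P_0-\tfrac{b^{2}}{4}I\bigr)e^{-\rho t}Y\,dt^{2}-dY^{T}dY=\mathcal A_{\beta}\!\bigl(P_0-\tfrac{b^{2}}{4}I,\ \rho\bigr),
\]
the Brinkmann microcosm \eqref{MicrocosmBrink} ($\tfrac{b^{2}}{4}I$ commutes through $e^{\rho t}$). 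By Lemma~\ref{MicrocosmAB} this is globally isometric to $\mathcal A\!\bigl(P_0+\rho^{2}-\tfrac{b^{2}}{4}I,\ \rho\bigr)$. Composing the coordinate changes and recording the one conformal rescaling, one obtains a conformal isometry $\phi$ from the chosen component onto an open subset of $\mathcal A(P-\tfrac{b^{2}}{4},\Omega)$, where $P:=P_0+\rho^{2}$ and $\Omega:=\rho$, with $\phi^{*}\mathcal A=|a-bu|^{-1}G$, i.e.\ $G=|a-bu|\,\phi^{*}\mathcal A$. When $b=0$ everything is constant in $t$ and this reduces to Lemma~\ref{MicrocosmAB} after an overall rescaling by $a^{-1}$. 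Running the construction on each component of $\{a-bu\neq0\}$ gives the theorem.

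The calculation is routine; the content is that the three moves of the middle step must be carried out in concert. Reparametrizing by $t$ is forced — it is the unique reparametrization under which the exponential $e^{\rho t}$ carried by $P(u)$ becomes the one of a microcosm (equivalently, under which $T_{a,b,\rho}$ becomes translation plus a multiple of the dilation plus a rotation) — but it inevitably puts the non-constant factor $\lambda=a-bu$ in front of $dt\,dv$, which can only be removed conformally; that rescaling then spoils $dx^{T}dx$ and the $dt^{2}$ term unless $x$ is rescaled by exactly $|\lambda|^{1/2}$ and $v$ is corrected by a term quadratic in $Y$. The one place where sign and coefficient slips lurk is in tracking how $d|\lambda|^{1/2}/dt$ feeds into the $dt^{2}$ coefficient, since that contribution is the sole source of the shift $-b^{2}/4$. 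One should also check that $\phi$ carries the central null geodesic $\{v=x=0\}$ to $\{\bar v=Y=0\}$ and is compatible with the dilation, so that its image is an open sub-Penrose-limit of the homogeneous one.
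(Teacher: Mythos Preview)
Your proof is correct and follows essentially the same route as the paper: reparametrize $u$ by $t$ via $dt=(a-bu)^{-1}du$, substitute $x=|a-bu|^{1/2}X$ and shift $v$ by $-\tfrac{b}{4}X^TX$, then invoke Lemma~\ref{MicrocosmAB}. Your write-up is in fact somewhat more careful than the paper's about the sign of $a-bu$ and about identifying the Alekseevsky parameters $(P,\Omega)=(P_0+\rho^2,\rho)$ at the end.
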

\begin{proof}
  Start with the metric
  $$G=2\,du\,dv + x^TP(u)x\,du - dx^Tdx.$$
  Changing signs of both $a,b$ if necessary, we can assume $a-bu>0$.  Because of the extra symmetry,
  $$P(u) = \left(\frac{dt}{du}\right)^2e^{\rho t}P_0e^{-\rho t} $$
  with $(a-bu)dt=du$.  So
  $$G=2\,du\,dv + x^Te^{\rho t}P_0e^{-\rho t}x\,dt^2 - dx^Tdx.$$
  With $x=(a-bu)^{1/2}X$, $v=V - \tfrac{b}{4}X^2$,
  \begin{align*}
    G&=2(a-bu)\,dt\,dv + x^Te^{\rho t}P_0e^{-\rho t}x\,dt^2 - dx^Tdx\\
     &=(a-bu)\left(2\,dt\,dV + X^Te^{\rho t}\left[P_0-\tfrac{b^2}{4}I\right]e^{-\rho t} X\,dt^2 - dX^TdX\right).
  \end{align*}
  By Lemma \ref{MicrocosmAB}, we are done.
\end{proof}

\begin{definition}
A plane wave is homogeneous if its isometry group is transitive on points.
\end{definition}

Noting that the $t$ variable in the above proof can be chosen so that (modifying signs of $b$ if necessary), $e^{-bt}=a-bu$, we obtain the following Corollary (where $t$ has been replaced by a new $u$):
\begin{corollary}
  A homogeneous plane wave is isometric to a metric of the form
  \begin{equation}\label{GeneralHomogeneous}
    e^{-bu}\left(2\,du\,dv + x^Te^{\rho u}Pe^{-\rho u}x - dx^Tdx\right).
  \end{equation}
  Moreover, every locally homogeneous plane wave isometrically embeds into a unique metric of the form \eqref{GeneralHomogeneous}.
\end{corollary}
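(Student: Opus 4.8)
The plan is to deduce this corollary directly from Theorem~\ref{MicrocosmTheorem} together with Lemma~\ref{GAGBIso} and Lemma~\ref{MicrocosmAB}, by tracking the reparameterization carefully and then separately addressing the local-to-global embedding. First I would observe that a homogeneous plane wave has a transitive isometry group, hence in particular has an extra infinitesimal automorphism not fixing the central null geodesic; so Theorem~\ref{Dichotomy} applies and furnishes parameters $a,b,C=\rho$ satisfying \eqref{ExtraSymmetryEqn}. Theorem~\ref{ExtraSymmetryP} then gives the tidal curvature in the form $P(u)=(dt/du)^2 e^{\rho t}P_0e^{-\rho t}$ with $dt=du/(a-bu)$, and Theorem~\ref{MicrocosmTheorem} produces a conformal isometry (on each component of $a-bu\neq 0$) onto an open subset of $\mathcal A(P_0-b^2/4,\rho)$ with conformal factor $|a-bu|$.

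Next I would carry out the change of parameter indicated in the statement: solving $dt=du/(a-bu)$ gives $t=-b^{-1}\ln|a-bu|$ up to an additive constant when $b\neq 0$, so one may normalize $e^{-bt}=a-bu$, and the conformal factor $|a-bu|$ becomes exactly $e^{-bt}$. Relabeling $t$ as $u$, the homogeneous metric is then conformal, with factor $e^{-bu}$, to the Brinkmann-type metric $2\,du\,dv + x^Te^{\rho u}Pe^{-\rho u}x\,du^2 - dx^Tdx$ (here $P=P_0-b^2/4$, which I will simply rename $P$), which is precisely \eqref{GeneralHomogeneous}. The case $b=0$ is even simpler: then $t=u/a$ is an affine reparameterization, the conformal factor is the constant $|a|$, which may be absorbed, and one lands on the same form with $b=0$. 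One should also note that \eqref{GeneralHomogeneous} really is homogeneous: the vector fields $\partial_v$ and the Heisenberg translations give transitivity on each wave front, and the symmetry $u\mapsto u+u_0$, $x\mapsto e^{-\rho u_0}x$ (an isometry of the Brinkmann factor, conjugated by the conformal factor which transforms by the constant $e^{-bu_0}$, hence still a conformal isometry—and in fact an isometry since homogeneity was our hypothesis) moves between wave fronts. I would remark that among the constants $b$ and $\rho$, the value $\rho=0$ recovers the Cahen--Wallach–type spaces after passing to Alekseevsky form via Lemma~\ref{GAGBIso}.

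For the ``moreover'' clause, I would argue that a locally homogeneous plane wave—one whose Killing algebra is locally transitive, equivalently one admitting a local extra automorphism—still satisfies the hypotheses of Theorem~\ref{Dichotomy} and Theorem~\ref{ExtraSymmetryP}, since those are statements about the germ of $p(u)$ and hold on each component of $a-bu\neq 0$; thus the same reparameterization embeds a neighborhood of any point isometrically (after rescaling by the conformal factor, which is now absorbed to make the embedding isometric rather than merely conformal) into a metric of the form \eqref{GeneralHomogeneous}. Uniqueness comes from the rigidity already implicit in the proof of Theorem~\ref{Dichotomy}: the data $(b,\rho,P)$ are determined up to the residual normalizations (conjugating $\rho$ and $P$ by a constant rotation, and the affine freedom $u\mapsto u+u_0$ which only conjugates $P_0$), so two such ambient models receiving the same germ must agree. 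The main obstacle, and the step I would be most careful about, is this uniqueness and the precise sense of ``isometric embedding'': I must check that the conformal factor $|a-bu|=e^{-bu}$ can genuinely be absorbed into the coordinate change (rescaling $v$ and $x$) to produce an honest isometry in the homogeneous case, that the two components $a-bu>0$ and $a-bu<0$ patch correctly (or that homogeneity forces $b$-consistency across them), and that the normalization $e^{-bt}=a-bu$ does not secretly lose a component or a sign—this is exactly where one could drop a case if one is not careful, so I would state the residual gauge freedom $(\rho,P)\mapsto(Q\rho Q^T, QP_0Q^T)$, $Q\in O(\mathbb X)$, and $u\mapsto u+u_0$ explicitly and verify the embedding is compatible with it.
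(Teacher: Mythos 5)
Your argument is essentially the paper's: the Corollary is obtained by running the proof of Theorem~\ref{MicrocosmTheorem}, normalizing the new parameter so that $e^{-bt}=a-bu$, and renaming $t$ as $u$, so that the conformal factor $|a-bu|$ becomes $e^{-bt}$ and the identity $G=e^{-bt}\left(2\,dt\,dV+X^Te^{\rho t}[P_0-\tfrac{b^2}{4}I]e^{-\rho t}X\,dt^2-dX^TdX\right)$ is already the asserted isometry. The one point to fix is your flagged ``main obstacle'': no absorption of the conformal factor into a coordinate change is needed (nor is it possible in general, since for $b\neq 0$ the metric \eqref{GeneralHomogeneous} is geodesically incomplete while the unrescaled Brinkmann form need not be) --- the factor $e^{-bu}$ is deliberately retained as part of the target metric \eqref{GeneralHomogeneous}, so ``conformal with factor $e^{-bu}$ to the Brinkmann form'' is by definition ``isometric to \eqref{GeneralHomogeneous}'', and your worry dissolves.
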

Here the global symmetry is $u\to u+c$, $x\to e^{(2^{-1}b+\rho)c} x$.

Note that, for $b\not=0$, the metric is not geodesically complete, because the affine parameter $e^{-bu}du$ does not parameterize the whole real line.  This does not bother us: since the conformal factor is rather trivial, it can be dealt with later if desired.  But from our perspective, it is the conformal structure that is primarily of interest.

Putting together this Lemma, and the proof of Theorem \ref{MicrocosmTheorem}:
\begin{theorem}
  Every homogeneous plane wave is globally conformal to a metric of the form \eqref{MicrocosmAlex}.
\end{theorem}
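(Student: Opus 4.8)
The plan is to assemble the final theorem directly from the pieces already in hand. By definition, a homogeneous plane wave has an isometry group transitive on points; in particular it is locally homogeneous, so it has an extra infinitesimal automorphism not fixing the central null geodesic pointwise (the dichotomy of Theorem~\ref{Dichotomy} then forces us into the second alternative). Thus we may invoke Theorem~\ref{MicrocosmTheorem}: on each connected component of $\{a-bu\neq 0\}$ there is a conformal isometry, with conformal factor $|a-bu|$, onto an open subset of an Alekseevsky metric of the form $\mathcal{A}(P-b^2/4,\Omega)$, which is exactly the form \eqref{MicrocosmAlex}. Since the conformal factor $|a-bu|$ is nowhere zero on each such component, this is a genuine conformal isometry there.

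The remaining issue is globality: \emph{every} homogeneous plane wave should be globally conformal to some \eqref{MicrocosmAlex}, not merely piecewise so. Here I would use the Corollary preceding the statement: a homogeneous plane wave is isometric to a metric of the form \eqref{GeneralHomogeneous}, namely $e^{-bu}(2\,du\,dv + x^Te^{\rho u}Pe^{-\rho u}x\,du^2 - dx^Tdx)$, defined on all of $\mathbb{M}$ with $u$ ranging over a full interval (the reparameterization $e^{-bt}=a-bu$ in that Corollary's proof is precisely what removes the singular locus by spreading it to $u=\pm\infty$). Now simply observe that \eqref{GeneralHomogeneous} differs from the Brinkmann metric $\mathcal{A}_\beta(P,\rho)$ of \eqref{MicrocosmBrink} only by the globally defined, nowhere-vanishing conformal factor $e^{-bu}$ (up to the harmless sign convention on the $dx^Tdx$ term). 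By Lemma~\ref{MicrocosmAB}, $\mathcal{A}_\beta(P,\rho)$ is globally isometric to a metric of the form \eqref{MicrocosmAlex}, namely $\mathcal{A}(P+\rho^2,\rho)$. Composing, the homogeneous plane wave is globally conformal — with conformal factor $e^{-bu}$ — to a metric of the form \eqref{MicrocosmAlex}, which is what we want.

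The main obstacle, such as it is, is bookkeeping rather than mathematics: one must check that the change of variables $x=e^{\rho u}X$ of Lemma~\ref{MicrocosmAB} is a global diffeomorphism of $\mathbb{M}$ (it is, being linear and invertible in $X$ for each fixed $u$, with $v$ untouched), and that it genuinely carries the global form \eqref{GeneralHomogeneous} to a scalar multiple of \eqref{MicrocosmAlex} rather than merely doing so locally — i.e. that no singularity is introduced by the substitution. Since $e^{\rho u}$ is orthogonal for all $u$ and the conformal factor $e^{-bu}$ is smooth and positive throughout, no such obstruction arises, and the local statement of Theorem~\ref{MicrocosmTheorem} upgrades to the global one. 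I would also remark, as the paragraph after \eqref{GeneralHomogeneous} already does, that geodesic completeness is generally lost for $b\neq 0$, but since we only claim a \emph{conformal} isometry this is immaterial.
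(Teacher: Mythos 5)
Your proposal is correct and follows essentially the same route as the paper: pass to the global normal form \eqref{GeneralHomogeneous} via the preceding Corollary (which is where the reparametrization $e^{-bt}=a-bu$ from the proof of Theorem~\ref{MicrocosmTheorem} absorbs the locus $a-bu=0$), strip off the global nowhere-vanishing conformal factor $e^{-bu}$, and convert the resulting Brinkmann form \eqref{MicrocosmBrink} to the Alekseevsky form \eqref{MicrocosmAlex} by Lemma~\ref{MicrocosmAB}. The paper's own proof is just the one-line instruction to combine these ingredients, so your write-up is a faithful (and more explicit) version of the intended argument.
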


\begin{definition}
  A geodesically complete homogeneous plane wave is called a {\em microcosm}.
\end{definition}

The following combination of observations made thus far gives a convenient shorthand for thinking concretely about microcosms.
\begin{lemma}
  Every microcosm has a global metric of the form
  $$\mathcal G_A(p,\omega) = du\,\alpha -dx^Tdx, \qquad \alpha=2\,dv + x^Tpx\,du - 2x^T\omega\,dx$$
  where $p=p^T,\omega=-\omega^T$ are real constant $n\times n$ matrices.
\end{lemma}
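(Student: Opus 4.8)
The plan is to reduce the claim to facts already established for homogeneous plane waves, with geodesic completeness the only genuinely new ingredient. A microcosm $(\mathbb M,\tilde G)$ is in particular a homogeneous plane wave, so by the preceding Corollary it is globally isometric to a metric of the form \eqref{GeneralHomogeneous},
\[
  \tilde G \;\cong\; e^{-bu}\Bigl(2\,du\,dv + x^Te^{\rho u}Pe^{-\rho u}x\,du^2 - dx^Tdx\Bigr),
\]
for some real constant $b$ and constant matrices $\rho=-\rho^T$ and $P=P^T$; moreover homogeneity forces the $u$-interval $\mathbb U$ to be all of $\mathbb R$. Since geodesic completeness is preserved by isometries, it suffices to prove that a metric of this form can be geodesically complete only when $b=0$, and then to identify the resulting ($b=0$) metric with a metric of the asserted type.

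The first step is to show that \eqref{GeneralHomogeneous} is geodesically incomplete whenever $b\neq0$; this is the remark recorded just after the Corollary, and the argument is as follows. Write $G_0 = 2\,du\,dv + x^Te^{\rho u}Pe^{-\rho u}x\,du^2 - dx^Tdx$ for the Brinkmann metric inside the bracket, whose central null geodesic is $\gamma(\lambda)=(\lambda,0,0)$, affinely parameterized by $\lambda=u$. Conformal rescaling by $e^{-bu}$ leaves $\gamma$ a null pregeodesic, and an affine parameter $\mu$ of $\gamma$ for $\tilde G=e^{-bu}G_0$ satisfies $d\mu/d\lambda = e^{-b\lambda}$ (the conformal factor evaluated along $\gamma$), so $\mu = -b^{-1}e^{-b\lambda}+\mathrm{const}$. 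For $b\neq0$ this confines $\mu$ to a proper half-line of $\mathbb R$, while $\gamma(\lambda)$ leaves every compact subset of $\mathbb M$ as $\lambda\to\pm\infty$ and is therefore inextensible; hence $\gamma$ is an incomplete geodesic and $\tilde G$ is not geodesically complete. This is the one place where any care is needed, and even it is routine once the standard conformal transformation law for affine parameters along null geodesics is invoked. Applied to the microcosm, it forces $b=0$.

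With $b=0$ the microcosm is isometric to the Brinkmann metric $2\,du\,dv + x^Te^{\rho u}Pe^{-\rho u}x\,du^2 - dx^Tdx$ with $\rho$ and $P$ constant, which is exactly the form treated in Lemma \ref{MicrocosmAB}. That lemma --- via the substitution $x=e^{\rho u}X$ --- rewrites it globally as the Alekseevsky metric \eqref{MicrocosmAlex}, namely $\mathcal A(P+\rho^2,\rho) = 2\,du\,dv + X^T(P+\rho^2)X\,du^2 + \bigl[X^T\rho\,dX - dX^T\rho X\bigr]\,du - dX^TdX$. It then remains only to match \eqref{MicrocosmAlex} with the target notation: expanding $\mathcal G_A(p,\omega)=du\,\alpha-dx^Tdx = 2\,du\,dv + x^Tpx\,du^2 - 2\,x^T\omega\,dx\,du - dx^Tdx$, and observing that skewness of $\omega$ yields $x^T\omega\,dx - dx^T\omega x = 2\,x^T\omega\,dx$ as $1$-forms, one sees that $\mathcal A(P+\rho^2,\rho)$ is precisely $\mathcal G_A(p,\omega)$ with $p=P+\rho^2$ (symmetric, since $(\rho^2)^T=(\rho^T)^2=\rho^2$) and $\omega=-\rho$ (skew). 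Both $p$ and $\omega$ are constant, so the microcosm has a global metric $\mathcal G_A(p,\omega)$ of the stated form, and the proof is complete. The main obstacle is the completeness argument of the second paragraph; everything else is bookkeeping with results already proved.
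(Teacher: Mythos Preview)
Your proof is correct and follows precisely the route the paper intends: the lemma is stated as a ``combination of observations made thus far'' with no separate proof, and you have supplied exactly those observations --- the Corollary giving the form \eqref{GeneralHomogeneous}, the remark that $b\neq0$ makes the central null geodesic incomplete (via the affine parameter $\int e^{-bu}\,du$), and Lemma~\ref{MicrocosmAB} for the passage to Alekseevsky form. The bookkeeping identifying $\mathcal A(P+\rho^2,\rho)$ with $\mathcal G_A(p,\omega)$ is also handled correctly.
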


\subsection{Brinkmann isometries}\label{DiffeomorphismsBrinkmann}
Given two Brinkmann plane wave metrics
\begin{align*}
  G &= 2\,du\,dv + x^Tp(u)x\,du^2 - dx^Tdx\\
  \bar G&= 2\,d{\bar u}\,d{\bar v} + {\bar x}^T{\bar p}({\bar u}){\bar x}\,d{\bar u}^2 - d{\bar x}^Td{\bar x}
\end{align*}
on, respectively, the manifolds $\mathbb M=\mathbb U\times\mathbb R\times\mathbb X$ and $\bar{\mathbb M}=\bar{\mathbb U}\times\mathbb R\times\mathbb X$, we determine the conditions under which there exists a diffeomorphism $\phi:\mathbb M\to\bar{\mathbb M}$ that is an isometry of $G$ onto $\bar G$.

\begin{theorem}\label{BrinkmannDiffeos}
  The Brinkmann metrics $G,\bar G$ are isometric if and only if $\bar{\mathbb U}$ is an image of $\mathbb U$ under an affine transformation ${\bar u}=a(u-u_0)$, where $a$ is a nonzero real constant, and there exists a constant $\gamma\in\op{O}(n)$ such that
  \begin{equation}\label{gamma}
    a^2\gamma^T{\bar p}({\bar u})\gamma = p(u).
  \end{equation}
  In that case, an isometry between the metrics is:
  \[{\bar x} = \gamma x, \qquad {\bar v}=v/a.\]
(The general isometry between the spacetimes is the composite of one of this form with an automorphism of $G$.)
\end{theorem}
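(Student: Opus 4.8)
The plan is to exploit the structural rigidity built into the Brinkmann form, using the two canonical objects attached to a plane wave: the null Killing field $\partial_v$ and the dilation $\mathcal D$ (or rather the foliation by wave fronts $u = \text{const}$). First I would reduce to the case that $\phi$ carries the central null geodesic of $G$ to that of $\bar G$: by the result quoted from \cite{SEPI}, the isometry group of $\bar G$ acts transitively on transverse null geodesics, and (for plane waves with $p \not\equiv 0$, which is the interesting case — the flat case being handled separately) an isometry preserves transversality, so composing $\phi$ with an ambient isometry of $\bar G$ we may assume $\phi(\gamma) = \bar\gamma$ pointwise-up-to-parameterization. The opposite direction of the iff is trivial: given the affine reparameterization $\bar u = a(u - u_0)$ and $\gamma \in \op{O}(n)$ satisfying \eqref{gamma}, one checks by direct substitution that $\bar x = \gamma x$, $\bar v = v/a$, $\bar u = a(u-u_0)$ pulls $\bar G$ back to $G$: the cross term gives $2\,d\bar u\,d\bar v = 2a\,du\,(dv/a) = 2\,du\,dv$, the Euclidean term is preserved since $\gamma$ is orthogonal, and the $du^2$ term transforms exactly by \eqref{gamma}.

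For the forward direction the key steps are: (1) Since $\partial_v$ and $\partial_{\bar v}$ are, up to constant scale, the unique null Killing fields normalized against the respective central geodesics (Lemma on Penrose limits), $\phi_* \partial_v = c\,\partial_{\bar v}$ for some nonzero constant $c$; equivalently $\phi^* \bar v = v/c + (\text{function of } u, x)$ and $\phi^* \bar u$, $\phi^* \bar x$ are independent of $v$. (2) The wave-front foliation $du = 0$ is intrinsic (it is the foliation by null hypersurfaces tangent to the dilation, or: the orthogonal complement distribution of $\partial_v$), so $\phi^* \bar u = f(u)$ is a function of $u$ alone, with $f' \neq 0$. (3) Writing $\phi : (u,v,x) \mapsto (f(u), v/c + B(u,x), \mathbf C(u,x))$, I substitute into $\phi^* \bar G = G$ and match coefficients of $dv^2, dv\,dx, dv\,du, dx\,dx, dx\,du, du^2$ exactly as in the proof of Theorem \ref{Dichotomy}. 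The $dv\,dx$ and $dv^2$ terms force $B$ to depend only on $u$ and $\mathbf C$ to be $v$-independent (already known). The $dx\,dx$ term forces $\partial_x \mathbf C$ to be $u$-independent and orthogonal: $\mathbf C(u,x) = \gamma x + \mathbf c(u)$ with $\gamma \in \op{O}(n)$; since $\phi$ fixes $\gamma$, $\mathbf c \equiv 0$. The $du\,dv$ term gives $f'(u)/c = 1$, so $f$ is affine, $f(u) = a(u - u_0)$ with $a = c$. The $dx\,du$ term, once $B$ is a function of $u$ only and $\mathbf c = 0$, forces $\dot B = 0$ (it sits as $x^T(\cdots)\,dx\,du$ with a symmetric-vs-skew split, as before), so $B$ is constant and can be absorbed / is zero after fixing the geodesic. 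Finally the $du^2$ term yields precisely $a^2\,\gamma^T \bar p(\bar u)\gamma = p(u)$, which is \eqref{gamma}.

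The main obstacle I expect is step (2) together with the verification that an isometry preserves transversality for generic plane waves — i.e., that the reduction to a geodesic-preserving isometry is legitimate. The cleanest argument is that the wave-front distribution is the span of $\partial_v$ together with the radical-free description: $du = 0$ is the unique null hypersurface foliation tangent to $\partial_v$, hence intrinsic, so any isometry permutes the leaves and $\phi^*\bar u$ is constant on leaves; that this forces $\phi^* \bar u$ to be a function of $u$ alone is then immediate. Transversality preservation can be argued by noting that a transverse null geodesic is one along which $u$ is a non-constant affine parameter, and this is captured by the behavior of the tidal curvature $p(u)$ along the geodesic being non-constant for generic $p$ — but since the hypothesis $p \not\equiv 0$ is what we use, and the statement as given does not restrict to that case, the honest move is to invoke the transitivity result of \cite{SEPI} directly, reducing to geodesic-preserving isometries at the outset and flagging the flat case ($p \equiv \bar p \equiv 0$) as trivially satisfying the conclusion with no constraint beyond $\bar{\mathbb U} = a(\mathbb U - u_0)$. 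Everything else is the same coefficient-matching bookkeeping already carried out in Theorem \ref{Dichotomy}, now with two metrics instead of one.
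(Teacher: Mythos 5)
Your overall route is the same as the paper's---reduce to an isometry fixing the central null geodesic, identify $\partial_v$ and the wave-front foliation intrinsically, then match coefficients---and the coefficient-matching in step (3) is sound (in particular the symmetric-versus-skew argument in the $dx\,du$ term is exactly what kills the $u$-dependence of the rotation). But the step you yourself flag as the main obstacle is a genuine gap, and neither of your proposed fixes closes it. The issue is to show that $\phi\gamma$ is a \emph{transverse} null geodesic of $\bar G$. Invoking the transitivity of the isometry group of $\bar G$ on transverse null geodesics is circular here: that result only lets you move $\phi\gamma$ to the central null geodesic \emph{after} you know $\phi\gamma$ is transverse, and it says nothing about the non-transverse geodesics (the integral curves of $\partial_{\bar v}$ lying inside a wave front), which form a separate orbit. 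Your alternative sketch---that transversality is ``captured by the behavior of the tidal curvature along the geodesic''---is not an argument; a non-transverse null geodesic sees no tidal curvature at all, so you cannot distinguish the two cases by looking at $p$ restricted to the image curve.

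The paper's resolution is Lemma \ref{DilationOfDvIffFlat}: since $\phi$ is an isometry, the pushforward $\phi_*\mathcal D$ of the standard dilation is a dilation of $\bar G$ fixing the null geodesic $\phi\gamma$; if $\phi\gamma$ were non-transverse, i.e.\ an integral curve of $\partial_{\bar v}$, that lemma forces $\bar G$ to be globally flat, contradicting the non-flat hypothesis. That lemma is itself not a triviality---its proof leans on the classification of conformal Killing fields in Lemma \ref{FundamentalLemma} (applied to a homothety), together with the $B(X)=8X+XW-WX$ eigenvalue argument and Theorem \ref{ExtraSymmetryP}. So the missing ingredient in your proposal is a substantive lemma, not a routine verification. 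Everything else---the converse direction by substitution, the identification $\phi_*\partial_v=c\,\partial_{\bar v}$ (the paper phrases this via the center of the first derived Killing algebra rather than uniqueness of the normalized null Killing field, but both work), the affineness of $\bar u$ from the $du\,dv$ coefficient, and the constancy of $\gamma$---matches the paper.
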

\begin{proof}
  The converse direction is obvious: the given transformation is easily seen to be the claimed isometry.

  Also, we see immediately that if $G$ and $\bar G$ are diffeomorphic and flat then the functions $p$ and ${\bar p}$ are both identically zero.   Conversely if either $p$ or ${\bar p}$ is identically zero, then either the metrics are not diffeomorphic or both $p$ and ${\bar p}$ are identically zero.

  For the other direction, suppose that there is an isometry $\phi:\mathbb M\to\bar{\mathbb M}$, such that $\phi^*\bar G=G$, and $\mathbb M$ is not flat.  Let $\gamma$ and $\bar\gamma$ be the respective central null geodesics in $\mathbb M$ and $\bar{\mathbb M}$, and $\mathcal D,\bar{\mathcal D}$ the respective standard dilations.  Then $\phi\gamma$ is a null geodesic in $\bar{\mathbb M}$, fixed by the pushforward dilation $\phi_*\mathcal D$.  By Lemma \ref{DilationOfDvIffFlat} below, $\phi\gamma$ is transverse to the wave fronts of the dilation $\bar{\mathcal D}$.  By applying a symmetry of $\bar{\mathbb M}$, we can therefore map $\phi\gamma$ to the central null geodesic.

Because the isometry must send orbits of the symmetry group of $G$ to orbits of the symmetry group of $\bar G$, it follows that ${\bar u}$ is a function of $u$ only.  Moreover, ${\bar u}$ can only be an affine function of $u$, because it is an affine parameter of the central null geodesic.  For \eqref{gamma}, the isometry must send the quadratic form $dx^Tdx$ to $d{\bar x}^Td{\bar x}$ on the $u$ and ${\bar u}$ constant hypersurfaces, so that ${\bar x}$ is a rotation of $x$ for each $u$.  Because the vector field $\partial_x$ is parallel-transported up the central null geodesic, the additional freedom is precisely a constant rotation $\gamma$ of the coordinates $x$.

Finally, because the center of the first derived algebra of Killing symmetries of $G$ must match that of $\bar G$, so the vector fields $\partial_v$ and $\partial_{{\bar v}}$ must agree up to a constant:
\[ \partial_{{\bar v}} = a \partial_v.\]
Here $a$ must be the same constant as \eqref{gamma}, because $du\,dv=d{\bar u}\,d{\bar v}$.
\end{proof}

\begin{lemma}\label{DilationOfDvIffFlat}
  There exists a dilation fixing the null geodesic $u=0$, $x=0$ in the Brinkmann spacetime $(\mathbb M,G)$, if and only if $\mathbb M$ is (globally) Minkowski space ($\mathbb R^{n+2}$) and $G$ is the standard flat metric $2\,du\,dv - dx^Tdx$.
\end{lemma}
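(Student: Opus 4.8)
The plan is to pass to the infinitesimal picture. Write $L$ for the null geodesic $\{u=0,\ x=0\}$. If $\mathcal D_t$ is such a dilation, then differentiating $\mathcal D_t^*G=e^{2t}G$ at $t=0$ shows its generator $D$ satisfies the homothety equation $\mathscr L_DG=2G$, and since $\mathcal D_t$ fixes $L$ pointwise, $D$ vanishes identically along $L$. (This uses only that $\mathcal D$ is a one-parameter group of diffeomorphisms of $\mathbb M$ scaling $G$ and fixing $L$.) Writing $D=A\,\partial_u+B\,\partial_v+C\cdot\partial_x$ with $A,B,C$ a priori arbitrary smooth functions of $(u,v,x)$, I would expand the independent components of $\mathscr L_DG=2G$ in the Brinkmann coordinates, much as in the proofs of Lemma \ref{CommutantAutomorphisms} and Theorem \ref{Dichotomy}, but without the simplifying hypothesis that $D$ commute with the standard dilation. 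The $vv$-, $va$-, $vu$- and $ab$-components are algebraic or linear and force, successively: $A=A(u,x)$; then $C^a=v\,\partial_a A+\tilde C^a(u,x)$; then $A$ affine in $x$ with $\tilde C$ satisfying the inhomogeneous Euclidean Killing equation $\partial_a\tilde C^b+\partial_b\tilde C^a=2\delta_{ab}$, so $\tilde C=(I+\omega(u))x+t(u)$ with $\omega$ skew; and $B$ determined up to a $u$-dependent term. Substituting into the $ua$- and $uu$-components, using equality of mixed partials for $B$ to see that $\omega$ must be constant, and imposing the boundary conditions $A|_L=B|_L=C|_L=0$, I expect to be driven to $A=2u$, $B=\dot t(u)^Tx$, $C=(I+\omega)x+t(u)$, together with
\[ 2u\,\dot p(u)+4\,p(u)-[\omega,p(u)]=0,\qquad \ddot t(u)+p(u)t(u)=0, \]
and $t(0)=0$. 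This bookkeeping is the main technical obstacle: extracting the exact normalization $A=2u$ and the displayed equations requires feeding the conditions along $L$ into the component equations in the right order, so as to kill the spurious terms linear in $v$ and independent of $x$.

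The remainder is short. The first matrix equation above is an ODE on each component of $\mathbb U\setminus\{0\}$: the substitution $p(u)=u^{-2}q(u)$ collapses it to $\dot q=\tfrac1{2u}[\omega,q]$, whose solution on $u>0$ is $q(u)=e^{\frac12(\ln u)\omega}\,q_0\,e^{-\frac12(\ln u)\omega}$ for a constant symmetric $q_0$, with $e^{\frac12(\ln u)\omega}\in\op{SO}(\mathbb X)$; similarly for $u<0$. Thus $p(u)$ is $u^{-2}$ times an orthogonal conjugate of $q_0$, so $\tr\bigl(p(u)^2\bigr)=u^{-4}\,\tr(q_0^2)$, which is unbounded as $u\to 0$ unless $q_0=0$. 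Since $0\in\mathbb U$ and $p$ is smooth there, this forces $q_0=0$ on each side, so $p\equiv 0$ (and then $\ddot t\equiv 0$), i.e. $G=2\,du\,dv-dx^Tdx$ is flat. Finally, the flow of $D$ on the $\mathbb U$-factor is $u\mapsto e^{2t}u$, which must be a global one-parameter group of diffeomorphisms of the open interval $\mathbb U$; the only open interval containing the fixed point $0$ that is invariant under all of these maps is $\mathbb R$. Hence $\mathbb U=\mathbb R$ and $\mathbb M=\mathbb R\times\mathbb R\times\mathbb X\cong\mathbb R^{n+2}$.

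The converse is immediate: on flat $\mathbb M=\mathbb R^{n+2}$ with $G=2\,du\,dv-dx^Tdx$, the one-parameter group $\mathcal D_t(u,v,x)=(e^{2t}u,\ v,\ e^tx)$ satisfies $\mathcal D_t^*G=e^{2t}G$, has $\mathcal D_t^{-1}p\to(0,v,0)\in L$ as $t\to\infty$ for every point $p=(u,v,x)$, and has $\op{Fix}(\mathcal D)=L$, so it exhibits $(\mathbb M,G)$ as a Penrose limit whose dilation fixes the null geodesic $L$.
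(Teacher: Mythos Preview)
Your approach is correct and arrives at the same governing matrix ODE as the paper, $2u\,\dot p+4p=[\omega,p]$ with $\omega$ constant and skew, but the route and the endgame differ in useful ways. The paper obtains this equation by invoking its later Lemma~\ref{FundamentalLemma} (the classification of conformal Killing fields of a \emph{conformally curved} Brinkmann metric) and must then patch the argument separately when $\tilde p\equiv 0$, since that lemma assumes nonvanishing Weyl curvature. Your direct component analysis with $S\equiv 2$ sidesteps this case split: the quadratic term in $A$ dies because $S$ is constant in $v$, and the linear term $x^Ty$ dies because $C=yv+\cdots$ must vanish on $L$ for all $v$, neither step requiring $\tilde p\not\equiv 0$. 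Your conclusion that $p\equiv 0$ is also obtained differently: the paper evaluates the ODE at $u=0$ to get $8\tilde p(0)+[\tilde p(0),W]=0$, notes that $X\mapsto 8X+[X,W]$ has spectrum with real part $8$ on trace-free symmetric matrices, hence $\tilde p(0)=0$, and then appeals to Theorem~\ref{ExtraSymmetryP} to propagate this to all $u$; your blow-up argument via $\tr(p(u)^2)=u^{-4}\tr(q_0^2)$ is more elementary and self-contained. Finally, you make explicit the global step---that the flow $u\mapsto e^{2t}u$ forces the open interval $\mathbb U$ to be all of $\mathbb R$---which the paper leaves implicit in the word ``globally.''
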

\begin{proof}
  Suppose that $D$ is a dilation fixing the null geodesic in question, which has tangent vector $\partial_v$.  The proof is basically an application of Lemma \ref{FundamentalLemma}, which would proceed along the following lines. By that lemma, we would have (under the additional assumption that $G$ is conformally curved)
  $$D = w\partial_u + 2^{-1}[\dot wx(\partial_x) + (Wx)(\partial_x)]$$
  where $W$ is a constant skew-symmetric matrix, $\dot w=2$,
  \begin{equation}\label{lemma6fundamental}
  \begin{array}{rl}
    0 &= -4\tr p\dot w -2\tr\dot pw\\
    0 &= 4\tilde p\dot w+2\dot{\tilde p} w + \tilde pW-W\tilde p.
  \end{array}
\end{equation}
Assuming without loss of generality that $w=2u$, we have
  $$8\tr p + 2\tr\dot pw = 0 \implies \tr p = \frac{C}{u^2}$$
  which must exist for all $u$, and therefore $\tr p=0$.

  The second equation gives
  $$8\tilde p + 4u\dot{\tilde p} + \tilde pW-W\tilde p=0.$$
  At $u=0$, this equation is the linear equation on $\tilde p(0)$:
  $$8\tilde p(0) + \tilde p(0)W-W\tilde p(0).$$
  Consider the linear operator $B(X) = 8X + XW-WX$ acting on trace-free symmetric matrices $X$, where $W$ is fixed and skew.  Then $B$ is a non-zero multiple of the identity operator plus a skew operator.  Therefore, its eigenvalues all have the same nonzero real part, and a fortiori are all nonzero.  So the kernel of $B$ is zero, and therefore $\tilde p(0) = 0$.  Therefore $p(0)=0$.  But, by Theorem \ref{ExtraSymmetryP}, if $p$ vanishes at any point of a homogeneous plane wave, then it vanishes at every point, and the metric is flat.

  This argument does not quite hold up though, because of the extra assumption that $G$ is conformally curved, so we must show that the conclusion of Lemma \ref{FundamentalLemma} still holds for {\em dilations} in the conformally flat case.  That is, if
  $$D = A\partial_u+B\partial_v +\partial_x\cdot C$$
  satisfies $\mathscr L_DG=2G$, then we have to show that \eqref{lemma6fundamental} still holds.  There were two places in the proof of Lemma \ref{FundamentalLemma} where the non-vanishing of $\tilde p$ was invoked: once to show that $q=0$ and once to show that $y=0$.  We claim that these are both automatic in the present case.  In Lemma \ref{FundamentalLemma}, we had $S = 2\,x^T\dot y + \dot w +r_0 + 2q(u)v$, but in the present case $S=2$.  Therefore $\dot y=0$, $q=0$.  On the other hand, $y$ also satisfies the Jacobi equation $\ddot y+py=0$.  Hence, because $y$ is constant and $p$ is pure trace, either $y=0$ or $p=0$.  The latter case is the content of the present Lemma, and in the former case, the argument of the first two paragraphs of the proof goes through.
\end{proof}

\begin{lemma}
Any two dilations of a plane wave fixing the same null geodesic pointwise have the same unstable manifolds.
\end{lemma}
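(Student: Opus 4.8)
The plan is to reduce the statement, via Theorem~\ref{SEPITheorem1}, to an elementary fact about Killing fields along the central null geodesic. Note first that the common fixed null geodesic, call it $\gamma$, must be the central null geodesic of each of the two dilations, since the fixed locus of a dilation is a single null geodesic. So I would put the plane wave in Brinkmann form $G = 2\,du\,dv + x^Tp(u)x\,du^2 - dx^Tdx$ adapted to the first dilation $\mathcal D$: its generator becomes $D = 2v\,\partial_v + x\cdot\partial_x$, its fixed locus is $\gamma = \{v=0,\ x=0\}$, and its unstable manifolds are the hypersurfaces $\{u = \mathrm{const}\}$. Writing $D'$ for the generator of the second dilation $\mathcal D'$, both satisfy $\mathscr L_D G = 2G = \mathscr L_{D'}G$ (a dilation rescales $G$ at unit rate by definition), so the difference $K := D' - D$ is a Killing field; and since each dilation fixes $\gamma$ pointwise, $D$, $D'$, and hence $K$, all vanish along $\gamma$.

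The key reduction is that it suffices to prove that the $\partial_u$-component of $K$ vanishes identically. Indeed, $D$ has no $\partial_u$-component, so this gives $D' u\equiv 0$, i.e. the coordinate $u$ is constant along the orbits of $\mathcal D'$; then $\mathcal D'$ preserves each leaf $\{u = u_0\}$, which meets $\gamma$ only in $\gamma(u_0)$. The Penrose-limit axiom for $\mathcal D'$ — every backward orbit $(\mathcal D'_t)^{-1}(x)$ converges to a point of $\gamma$ — then forces every backward orbit that starts inside $\{u=u_0\}$ to converge to $\gamma(u_0)$; hence $\{u = u_0\}$ is exactly the unstable manifold of $\gamma(u_0)$ for $\mathcal D'$, just as it is for $\mathcal D$. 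So the two dilations have the same unstable manifolds.

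For the remaining step I would invoke, or reprove, the fact that a Killing field $K$ of a non-flat plane wave has $Ku$ a function of the coordinate $u$ alone — this is part of the description of the Killing algebra in \cite{SEPI}, and is visible in the proof of Theorem~\ref{Dichotomy}, where the induced flow on the $u$-line is $u\mapsto a - bu$. Since $K$ vanishes on $\gamma = \{v=x=0\}$, the function $Ku$ vanishes at $(u,0,0)$ for every $u$; being a function of $u$ alone, it is therefore identically zero, which is what was wanted. Alternatively one can run $\mathscr L_K G = 0$ directly as in the proof of Theorem~\ref{Dichotomy}, without the a priori hypothesis that $K$ commute with $\mathcal D$: the $dx\,dx$, $du\,dx$ and $du\,dv$ components, together with $p\not\equiv 0$, pin the $\partial_u$-component of $K$ down to a function of $u$ alone.

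I expect the one load-bearing point to be this last one. It uses non-flatness, which is guaranteed here by the standing assumption of conformal curvature ($p = \tilde p + PI$ with $\tilde p\not\equiv 0$, hence $p\not\equiv 0$), and it genuinely fails in the conformally flat case: Minkowski space carries null-rotation Killing fields such as $K = (b^T x)\,\partial_u + v\, b\cdot\partial_x$ for constant $b\in\mathbb X$, which vanish on $\{v=x=0\}$ yet have $\partial_u$-component $b^T x$, and from such a $K$ one can build a second dilation fixing the same null geodesic whose unstable manifolds differ from those of the standard dilation. So the real content of the lemma is the interaction between the curvature and the vanishing of $K$ on $\gamma$; everything else — identifying $D' - D$ as a Killing field, checking it vanishes on $\gamma$, and converting ``no $\partial_u$-component'' into ``same unstable manifolds'' via the Penrose-limit axiom — is routine bookkeeping.
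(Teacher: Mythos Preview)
Your argument is correct and follows the same line as the paper: put the metric in Brinkmann form adapted to the first dilation and show the second has no $\partial_u$-component, hence is tangent to the wavefronts $du=0$. The paper's proof is a terse two-liner that simply asserts $a=0$ and stops at tangency; your version supplies the missing justification (via the Killing difference $K=D'-D$, whose $\partial_u$-component must be a function of $u$ alone), spells out the passage from tangency to equality of unstable manifolds via the Penrose-limit axiom, and correctly flags that non-flatness is needed --- a point the paper's proof leaves entirely implicit.
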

\begin{proof}
  Using a Brinkmann metric $G$, suppose that a geodesic $\gamma$ is fixed by the standard dilation $D=2v\partial_v+x\partial_x$.  Then any other dilation
  $$D' = a\partial_u + b\partial_v + c\partial_x$$
  fixing $\gamma$ pointwise must have $a=0$, so $D'$ is tangent to the wavefronts of $D$ ($du=0$).
\end{proof}

\subsection{A family of plane-waves}\label{Example}
The purpose of this section is to construct a family $\mathcal F$ of vacuum Brinkmann spacetimes on $\mathbb M = \mathbb R\times\mathbb R\times\mathbb R^n$, parameterized by the Hilbert cube $[0,1/2]^{\mathbb Z}$, such that two elements of the sequence are diffeomorphic if and only if they are shift-equivalent under the Bernoulli shift.

Let $\mathcal P$ be the space of Brinkmann plane wave metrics on $\mathbb M$ with the standard dilation.  Then an element of $\mathcal P$ is specified by a smooth function $p\in C^\infty(\mathbb R, \mathbb R^n\odot\mathbb R^n)$ with values in the symmetric $n\times n$ real matrices.

We give the space $\mathcal P$ the weakest topology compatible with the family of seminorms:
$$\|p\|_k = \sup_{u\in[-k,k]}(|p(u)| + |\dot p(u)| +\dots + |p^{(k)}(u)|) $$
where $k$ is a positive integer, and $|\cdot|$ denotes any matrix norm.

The resulting topology is completely metrizable.  Indeed, a metric is obtained by the usual trick:
$$d(p,q) = \sum_{k=1}^\infty 2^{-k}\frac{\|p-q\|_k}{1+\|p-q\|_k}.$$
If $p_n$ is a Cauchy sequence in $\mathcal P$, then $p_n^{(k)}$ converges uniformly on every compact subset of $\mathbb R$, and so converges in the topology of $\mathcal P$ to a smooth function.

By the general arguments of \cite{panagiotopoulos2023incompleteness}, we can show the following:
\begin{theorem}
  There is no Borel measurable function $f:\mathcal P\to\mathcal X$ into a Polish space $\mathcal X$, such that $p\cong p'$ (isometry of plane waves) if and only if $f(p)=f(p')$.
\end{theorem}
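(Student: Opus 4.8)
The plan is to apply the Hjorth--Kechris theory of turbulence (in the form developed in \cite{panagiotopoulos2023incompleteness}) to the Polish space $\mathcal P$ equipped with the isometry equivalence relation $\cong$. First I would verify that $\cong$ is indeed induced by a continuous action of a Polish group on $\mathcal P$: by Theorem \ref{BrinkmannDiffeos}, two Brinkmann metrics $p,p'\in\mathcal P$ on the fixed manifold $\mathbb M=\mathbb R\times\mathbb R\times\mathbb R^n$ are isometric if and only if there is an affine reparameterization $\bar u=a(u-u_0)$ with $a\neq 0$ and a constant $\gamma\in\mathrm O(n)$ with $a^2\gamma^Tp'(a(u-u_0))\gamma = p(u)$. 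Thus $\cong$ is exactly the orbit equivalence relation of the natural action of the Polish group $H := (\mathrm{Aff}^+(\mathbb R)\times\mathrm O(n))$ — more precisely $\mathrm{Aff}(\mathbb R)\rtimes(\text{sign})\times \mathrm O(n)$, the group generated by affine reparameterizations of $u$ (including reversals) and rotations — acting on $\mathcal P$ by $(a,u_0,\gamma)\cdot p \,=\, u\mapsto a^{-2}\gamma^T p(a^{-1}u + u_0)\gamma$ (up to bookkeeping about the orientation-reversing piece). One checks this action is continuous for the Fréchet topology on $\mathcal P$ described above, since precomposition with a fixed diffeomorphism and conjugation by a fixed matrix are continuous in every seminorm $\|\cdot\|_k$.

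Next I would reduce to the abelian, connected part of the group, because turbulence arguments run most cleanly there. Restrict attention to metrics supported so that the $u_0$-translation subgroup $\mathbb R$ acts in a way that is topologically transitive on a suitable invariant Polish subspace, while the residual compact part $\mathrm O(n)$ and the dilation part $\mathbb R^\times$ contribute only "locally closed" behavior; since a group action is generically turbulent as soon as one closed normal (or suitably large) subgroup acts turbulently and the quotient acts "tamely", the key is to exhibit turbulence for the translation flow. Concretely, following the template of \cite{panagiotopoulos2023incompleteness}, I would build a closed translation-invariant subspace $\mathcal P_0\subseteq\mathcal P$ consisting of $p$ that are, say, uniformly bounded together with all derivatives and that "asymptotically forget" their starting point — an orbit of the $\mathbb R$-shift $\sigma_s\cdot p = p(\cdot + s)$ whose closure is all of $\mathcal P_0$ but every orbit is meager — so that the shift action on $\mathcal P_0$ is turbulent: every orbit is dense (topological transitivity), every orbit is meager, and every local orbit is somewhere dense. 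The standard way to arrange density of orbits with meager orbits is to take $\mathcal P_0$ to be (the closure of) an orbit of a single generic $p$ whose translates are dense — a "space of $p$'s modeled on a minimal flow with no periodic points" — which is exactly the kind of construction carried out in \cite{panagiotopoulos2023incompleteness}.

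Having produced a turbulent action, the Hjorth turbulence theorem gives that $\cong$ restricted to $\mathcal P_0$ is not classifiable by countable structures, and in particular there is no Borel $f:\mathcal P_0\to\mathcal X$ into a Polish space $\mathcal X$ with $p\cong p' \iff f(p)=f(p')$; pulling back along the inclusion $\mathcal P_0\hookrightarrow\mathcal P$ (which is a Borel reduction of $\cong{\restriction}\mathcal P_0$ to $\cong$) gives the statement for all of $\mathcal P$. The one extra wrinkle is that our group $H$ is not abelian — it has the compact factor $\mathrm O(n)$ and the scaling factor — so strictly one invokes the relative version of turbulence: if $N\trianglelefteq H$ is closed with $H/N$ acting "non-turbulently-obstructing" (here $H/N$ compact-by-abelian is harmless, as the relevant obstruction in \cite{panagiotopoulos2023incompleteness} is only to the existence of a classification by countable structures, and adjoining a compact group or a one-dimensional Lie group to a turbulent action preserves non-classifiability), then turbulence of the $N$-action suffices.

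The main obstacle I expect is the construction of the invariant Polish subspace $\mathcal P_0$ and the verification that the translation flow on it is genuinely turbulent: one must simultaneously arrange (i) that the closure of a single orbit is all of $\mathcal P_0$ (density), (ii) that no orbit is comeager in any open set, i.e. every orbit is meager — which forces the underlying "symbolic" flow coded by $\mathcal P_0$ to have no isolated periodic behavior and to be, roughly, a minimal flow of positive "entropy-like" complexity — and (iii) the local orbit condition. Packaging a smooth matrix-valued function $p(u)$ so that its $\mathbb R$-translates realize such a flow, while keeping $\mathcal P_0$ closed in the Fréchet topology and inside the vacuum ($\operatorname{tr}p\equiv 0$) locus so that the metrics are genuinely vacuum plane waves as advertised, is the delicate part; but this is precisely the construction performed in \cite{panagiotopoulos2023incompleteness}, and the role of this section is to observe that Theorem \ref{BrinkmannDiffeos} makes $\cong$ on $\mathcal P$ fit that template verbatim.
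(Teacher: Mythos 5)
Your overall strategy --- realize $\cong$ as the orbit equivalence relation of a Polish group action and then invoke anti-classification machinery --- starts from a correct observation (Theorem \ref{BrinkmannDiffeos} does show that $\cong$ on the non-flat part of $\mathcal P$ is the orbit relation of the group generated by affine reparametrizations of $u$ and constant elements of $\op{O}(n)$), but the tool you reach for cannot work. That group, and a fortiori its translation subgroup $\mathbb R$ whose flow you propose to show is turbulent, is \emph{locally compact}. A continuous action of a locally compact Polish group has an orbit equivalence relation that is essentially countable (Borel reducible to a countable Borel equivalence relation), hence classifiable by countable structures; Hjorth's turbulence is precisely an obstruction to classifiability by countable structures, so no action of a locally compact group --- in particular no $\mathbb R$-translation flow --- is ever turbulent. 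Your plan to ``exhibit turbulence for the translation flow'' therefore fails at the level of the definitions, independently of how cleverly the invariant subspace $\mathcal P_0$ is chosen. Moreover, the statement being proved is only non-smoothness (no Borel reduction to equality on a Polish space), which is far weaker than non-classifiability by countable structures; turbulence is the wrong-calibre weapon even if it were available.

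What the paper actually does, and what you should do, is exhibit a concrete Borel (in fact continuous, homeomorphic-onto-image) reduction of a known non-smooth equivalence relation to $\cong$. The paper builds, from the bump functions $g_a$ and the products $p_\alpha(u)=-1+\prod_n g_{\alpha(n)}(u-n)$, an embedding $\alpha\mapsto p_\alpha$ of the Hilbert cube $[0,1/2]^{\mathbb Z}$ into $\mathcal P$; the zero set $p_\alpha^{-1}(0)=\mathbb Z$ forces (via Theorem \ref{BrinkmannDiffeos}) any isometry to be an integer translation, so $G_{p_\alpha}\cong G_{p_\beta}$ iff $\alpha$ and $\beta$ are Bernoulli-shift equivalent (Theorem \ref{ShiftEquivalentBrinkmann}). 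Since the shift relation on $[0,1/2]^{\mathbb Z}$ is generically ergodic (dense orbits, all orbits meager) it is non-smooth by the Glimm--Effros dichotomy, and the reduction transports this to $\cong$ on $\mathcal P$. Your proposal defers exactly this construction --- which you yourself identify as ``the delicate part'' --- to the cited reference, so even apart from the turbulence issue, the substantive content of the proof is missing. To repair your argument: drop turbulence entirely, replace it with generic ergodicity (or a direct reduction from $E_0$ or the shift), and supply an explicit family of tidal curvatures whose isometry classes are controlled rigidly enough by Theorem \ref{BrinkmannDiffeos} to encode the shift.
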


We shall prove this theorem with the following construction.  Let the {\em Hilbert cube} $\mathbb H=[0,1/2]^{\mathbb Z}$ be the space of bi-infinite sequences of reals in the interval $[0,1/2]$, with the product topology.  There is then a homeomorphic action of the Bernoulli shift
$$\sigma(x)_n = x_{n+1} $$
for any $x\in \mathbb H$.  We construct a homeomorphism $\mathcal F:\mathbb H\to\mathcal P$ onto a (necessarily compact) subset, such that $\mathcal F(x)\cong\mathcal F(x')$ if and only if $x'$ is a shift of $x$. 

Fix a real number $a$ in the real interval $[0, 1)$.  Given $a$, define a function $f_a: (0, 1) \rightarrow \mathbb{R}^+$ by the formula, valid for each real number $u \in (0, 1)$:
\[ f_a(u) =  \frac{a}{u(1-u)} + \frac{(1 -a)}{u^2(1 - u)}.\]
Note that the function $f_a$ is well-defined, everywhere positive and smooth (even real  analytic).  Also we have the limits 
\[ \lim_{u \rightarrow 1^-} f_a(u) =  \infty, \quad \lim_{u \rightarrow 0^+} f_a(u) =  \infty. \]
If we write $a = 8^{-1}(9 - b^2)$, where $1 \le b \le 3$, so $b = \sqrt{9 -8a}$, then the minimum value of $f_a(u)$, for $u$ in the interval $(0, 1)$ is $\frac{(b + 3)^3}{8(b + 1)}$,  achieved when $u = \frac{b+1}{b+3}$. \\ This minimum value increases from $4$, when $a = b = 1$,  to $\frac{27}{4}$, when $a = 0$ and $b = 3$.

Define a function $g_a: \mathbb{R} \rightarrow \mathbb{R}^+$, by the formula:
\[ g_a(u) = \begin{cases} 1  &\qquad \text{if $u\ge 1$, }\\
 1  &\qquad \text{if  $u\le 0$, }\\
1 + e^{4-f_a(u)}  &\qquad \text{if  $0  < u < 1$.}
\end{cases}
\]
Then it is clear that the function $g_a: \mathbb{R} \rightarrow \mathbb{R}^+$ is well-defined and everywhere smooth and bounded, taking values in the interval $[1, 2]$.  Also $1 < g_a(u) \le 2$ in the open interval $(0, 1)$.

Now let $\alpha: \mathbb{Z} \rightarrow [0, 1)$ be a map.

For $u$ real define $p_{\alpha}(u)$ by the formula: 
\[ p_{\alpha}(u) = -1 + \Pi_{n\in \mathbb{Z}}g_{\alpha(n)}(u - n).\]
Then it is clear that the function $p_\alpha: \mathbb{R} \rightarrow \mathbb{R}$ is well-defined and smooth and 
takes values in the real interval $[0, 1]$.  Also we have the relation:
\[ p_\alpha(k) = 0 \iff k \in \mathbb{Z}.\]
We will employ this function $p_\alpha$ to define a collection of vacuum spacetimes, actually complete plane wave spacetimes, one for each map $\alpha: \mathbb{Z} \rightarrow [0, 1)$. 

\begin{theorem}\label{ShiftEquivalentBrinkmann}
Functions $\alpha,\beta:\mathbb Z\to [0,1)$ are shift-equivalent if and only if the plane-wave metrics $G_{p_\alpha}$, $G_{p_\beta}$ are equivalent under isometry.
\end{theorem}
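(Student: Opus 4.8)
The plan is to reduce the assertion to a rigidity property of the scalar functions $p_\alpha$, the passage to geometry being supplied by Theorem~\ref{BrinkmannDiffeos}. Write the Brinkmann tidal curvature of $G_{p_\alpha}$ as $p_\alpha(u)\Lambda$ for a fixed nonzero trace-free symmetric matrix $\Lambda$ (any such choice makes the metric vacuum). Two properties of $p_\alpha$ will be used throughout: it is everywhere nonnegative with zero set exactly $\mathbb Z$, so in particular it is not identically zero and $G_{p_\alpha}$ is non-flat; and on each interval $(n,n+1)$ it equals the single ``bump'' $u\mapsto e^{4-f_{\alpha(n)}(u-n)}$, because every factor $g_{\alpha(m)}(u-m)$ with $m\neq n$ is $1$ there. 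The ``if'' direction is then immediate: if $\beta=\sigma^m\alpha$, reindexing the product defining $p_\beta$ gives $p_\beta(u)=p_\alpha(u+m)$, so the translation $(u,v,x)\mapsto(u+m,v,x)$ realizes an isometry $G_{p_\beta}\cong G_{p_\alpha}$.

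For the ``only if'' direction, assume $G_{p_\alpha}\cong G_{p_\beta}$. Since both metrics are non-flat, Theorem~\ref{BrinkmannDiffeos} supplies a nonzero constant $a$, a constant $u_0$, and $\gamma\in\op{O}(n)$ with $a^2\,\gamma^T\,p_\beta\bigl(a(u-u_0)\bigr)\Lambda\,\gamma=p_\alpha(u)\Lambda$ for all $u$. As $\gamma^T\Lambda\gamma$ is orthogonally conjugate to $\Lambda$ it has the same eigenvalues; if $\gamma^T\Lambda\gamma=c\Lambda$ then $c\neq 0$ permutes the finite multiset of eigenvalues of $\Lambda$, so comparing the largest absolute value among them forces $|c|=1$, and $c=-1$ is impossible because $p_\alpha,p_\beta\ge 0$ and neither is identically zero. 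Hence we obtain the scalar identity $a^2\,p_\beta\bigl(a(u-u_0)\bigr)=p_\alpha(u)$ for all $u$.

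The core of the proof is the rigidity hidden in this identity. Its two sides have equal zero sets, so $u_0+a^{-1}\mathbb Z=\mathbb Z$; subtracting elements gives $a^{-1}\mathbb Z=\mathbb Z$, whence $|a|=1$, and then $u_0\in\mathbb Z$. To exclude $a=-1$: if $a=-1$ and $u_0=m\in\mathbb Z$, then $p_\beta(m-u)=p_\alpha(u)$, and restricting to $u\in(n,n+1)$ and matching the two bump formulas yields $f_{\alpha(n)}(t)=f_{\beta(\ell)}(1-t)$ on $(0,1)$ with $\ell=m-n-1$; but clearing denominators in the explicit formula for $f_a$ shows that $f_a(t)=f_{a'}(1-t)$ as an identity in $t$ forces $a=a'=1$, contradicting $a,a'\in[0,1)$. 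Therefore $a=1$ and $u_0=m\in\mathbb Z$, so $p_\beta(u)=p_\alpha(u+m)$. Comparing bumps on each interval $(n,n+1)$ now gives $f_{\beta(n)}=f_{\alpha(n+m)}$ on $(0,1)$, and since $a\mapsto f_a$ is injective (the difference $f_a-f_{a'}$ is a nonzero scalar multiple of $t^{-2}$ whenever $a\neq a'$), we conclude $\beta(n)=\alpha(n+m)$ for every $n$, i.e.\ $\beta=\sigma^m\alpha$.

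I expect the rigidity step to be the main obstacle: extracting $|a|=1$ and $u_0\in\mathbb Z$ from the functional equation, and especially ruling out the orientation-reversing possibility $a=-1$. This is exactly where the restriction of $\alpha,\beta$ to $[0,1)$ rather than $[0,1]$ is essential, since a reflection of a single bump corresponds precisely to the boundary parameter value that the half-open interval excludes; the positivity of $p_\alpha$ (needed to kill the sign $c=-1$) and the fact that $f_a$ is even in neither variable are the other inputs. Once $a=1$ and $u_0\in\mathbb Z$ are established, identifying $\beta$ with a shift of $\alpha$ is a routine computation with the explicit product formula for $p_\alpha$ together with the injectivity of $a\mapsto f_a$.
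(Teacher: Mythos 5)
Your proof is correct and follows essentially the same route as the paper's: invoke Theorem~\ref{BrinkmannDiffeos}, use the zero set $\mathbb Z$ of $p_\alpha$ to force $a=\pm1$ and $u_0\in\mathbb Z$, and rule out the reflection $a=-1$ via the asymmetry of the bumps. In fact you supply several details the paper only asserts: the reduction of the orthogonal conjugation $\gamma^T\Lambda\gamma=c\Lambda$ to the scalar identity with $c=1$, the explicit verification that $f_a(t)=f_{a'}(1-t)$ forces $a=a'=1$ (the paper's bare claim that ``$p_\alpha$ is asymmetrical''), and the injectivity of $a\mapsto f_a$ needed in the final identification $\beta(n)=\alpha(n+m)$.
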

\begin{proof}
The Theorem follows straightforwardly from Theorem \ref{BrinkmannDiffeos}.  If $\alpha,\beta$ are shift-equivalent, then $\alpha(n)=\beta(n+m)$ for some $m\in\mathbb Z$, and so $G_{p_\alpha}$ and $G_{p_\beta}$ differ by a translation $u\mapsto u+m$.
  
Conversely, suppose that $G_{p_\alpha}$ and $G_{p_\beta}$ are diffeomorphic.  We have $p_\alpha(u)$ (resp. $p_\beta(U)$) vanishes if and only if $u$ (resp. $U$) is an integer. Using Theorem \ref{BrinkmannDiffeos}, for each integer $n$:
\[ p_\beta(a(n - u_0)) = 0.\]
So $a(n - u_0)$ is integral for each integer $n$.  So $a(n - m)$ is integral for each pair of integers $m$ and $n$.  So $a$ is a non-zero integer and  $u_0 = a^{-1}p$, for some integer $p$. But we have for any integer $k$:
\[ 0 = p_\beta(k) = p_\alpha(a^{-1}(k + p)).\]
So for any integer $t$ we have that $ a^{-1} t$ is an integer.  So $a^{-1}$ is an integer.  So both $a$ and $a^{-1}$ are integral, so $a = \pm 1$ and we have now that:
\[ U = \epsilon(u - u_0), \quad p_\beta(U) = p_\alpha(u)\gamma^2, \quad \epsilon^2 = 1.\]
Since $p_\alpha$ is asymmetrical, we must have $\epsilon=+1$, so $\alpha$ and $\beta$ are shift-equivalent.
\end{proof}

We give the Hilbert cube the following metric:
$$d(\alpha,\beta) = \sum_{k=0}^\infty 2^{-k}\frac{\|\alpha-\beta\|_k}{1 + \|\alpha-\beta\|_k} $$
where $\|\alpha\|_k = \max_{i\in[-k,k]}\alpha(i).$

\begin{lemma}
  The mapping $\rho:\alpha\mapsto p_\alpha$ is a homeomorphism of $\mathbb H=[0,1/2]^{\mathbb Z}$ onto its image in $\mathcal P$.
\end{lemma}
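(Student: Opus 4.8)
The plan is to reduce everything to a compactness argument: since $\mathbb H=[0,1/2]^{\mathbb Z}$ is compact (Tychonoff) and $\mathcal P$ is metrizable, hence Hausdorff, it suffices to show that $\rho$ is a continuous injection. A continuous bijection from a compact space onto a (subspace of a) Hausdorff space is automatically a homeomorphism onto its image, and that image is then necessarily compact. So there are two things to check: injectivity of $\rho$ and continuity of $\rho$.

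For injectivity, the point is that $p_\alpha$ is built \emph{locally} out of the values $\alpha(n)$. For $u$ in the closed interval $[n,n+1]$, every factor $g_{\alpha(m)}(u-m)$ with $m\neq n$ is identically $1$, since its argument lies outside $(0,1)$; hence $p_\alpha(u)=-1+g_{\alpha(n)}(u-n)$ on $[n,n+1]$, and the pieces glue smoothly because $g_a$ equals $1$ to infinite order at the integers. Thus $p_\alpha=p_\beta$ forces $g_{\alpha(n)}=g_{\beta(n)}$ on $[0,1]$, hence $f_{\alpha(n)}=f_{\beta(n)}$ on $(0,1)$, for every $n$. A one-line computation gives $f_a(u)=f_0(u)-a\,u^{-2}$, so $f_a(u)-f_b(u)=(b-a)u^{-2}$, which is nonzero on $(0,1)$ whenever $a\neq b$; therefore $\alpha(n)=\beta(n)$ for all $n$, and $\rho$ is injective.

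For continuity I would use the same local structure together with the explicit formula. From $f_a=f_0-a\,u^{-2}$ we get, on $(0,1)$,
\[ g_a(u)-1 \;=\; e^{4-f_a(u)} \;=\; e^{a/u^2}\bigl(g_0(u)-1\bigr), \qquad g_0(u)-1=e^{4-1/(u^2(1-u))}, \]
where $g_0-1$ is a fixed smooth function on $\mathbb R$, vanishing to infinite order at $u=0$ and $u=1$. For $a\in[0,1/2]$ one has $e^{a/u^2}\le e^{1/(2u^2)}$, and since $1/(u^2(1-u))\ge 1/u^2$ on $(0,1)$, the factor $g_0-1$ decays at least like $e^{-1/u^2}$ near $u=0$ (and faster near $u=1$); hence $e^{a/u^2}(g_0(u)-1)$ and all its $u$-derivatives are bounded uniformly in $a\in[0,1/2]$. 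Moreover $a\mapsto g_a$ is Lipschitz from $[0,1/2]$ into $C^k([0,1])$ for each $k$, since $g_a-g_b=(e^{a/u^2}-e^{b/u^2})(g_0-1)$ and, by the mean value theorem, $|e^{a/u^2}-e^{b/u^2}|\le |a-b|\,u^{-2}e^{1/(2u^2)}$, the resulting poles in $1/u$ being absorbed by the decay of $g_0-1$; the analogous bound holds after differentiating in $u$. Restricting to $[-k,k]$ and using the local formula for $p_\alpha$ then yields
\[ \|p_\alpha-p_\beta\|_k \;\le\; C_k\max_{-k\le n\le k}|\alpha(n)-\beta(n)| \]
with $C_k$ independent of $\alpha,\beta$. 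The right-hand side is continuous for the product topology on $\mathbb H$, and the seminorms $\|\cdot\|_k$ generate the topology of $\mathcal P$, so $\rho$ is continuous. Together with injectivity and compactness of $\mathbb H$, this gives the claim.

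The only genuine work is the $C^k$ estimate on $a\mapsto g_a$ near the endpoints $u=0,1$, where $f_a$ blows up and successive differentiation of $e^{a/u^2}$ produces ever higher order poles in $1/u$; the content is that these are always dominated by the much faster decay of $g_0-1$. This is exactly where the restriction $a\in[0,1/2]$ (rather than $a\in[0,1)$) is essential, so that $e^{a/u^2}e^{-1/u^2}\le e^{-1/(2u^2)}\to 0$ as $u\to 0^+$; it is the reason the Hilbert cube here is modeled on $[0,1/2]$.
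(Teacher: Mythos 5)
Your proof is correct, but it runs in the opposite direction from the paper's. You establish that $\rho$ is a continuous injection and then invoke compactness of $\mathbb H$ (Tychonoff) plus the Hausdorff property of the metrizable space $\mathcal P$ to conclude that a continuous bijection from a compact space onto a Hausdorff image is a homeomorphism. The paper instead proves continuity of the \emph{inverse} map $\rho^{-1}:\mathcal H\to\mathbb H$ directly, via the pointwise estimate $g_a(1/2)-g_b(1/2)\ge 4e^{-4}(a-b)$, and then argues that the image $\mathcal H$ is compact because it is closed and bounded in the nuclear space $\mathcal P$, where the Heine--Borel property holds; a continuous bijection from the compact $\mathcal H$ onto $\mathbb H$ is then a homeomorphism. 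Your route buys two things: it is more elementary (no appeal to nuclear spaces), and it actually supplies the continuity of $\rho$ and hence the closedness of $\mathcal H$, which the paper dismisses as ``obvious'' without argument --- the cleanest justification of that step is precisely your observation that $\mathcal H=\rho(\mathbb H)$ is the continuous image of a compact set. The cost is that you must do the genuine analytic work of the Lipschitz estimate $a\mapsto g_a$ into $C^k([0,1])$, handling the poles in $1/u$ produced by differentiating $e^{a/u^2}$ against the decay of $g_0-1$ near $u=0,1$; your identity $f_a=f_0-a u^{-2}$, hence $g_a-1=e^{a/u^2}(g_0-1)$, makes this clean, and the same identity gives injectivity immediately (the paper gets injectivity as a byproduct of its lower bound at $u=1/2$). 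One small caveat: your closing remark that the restriction to $a\in[0,1/2]$ is \emph{essential} is slightly overstated --- any interval $[0,1-\epsilon]$ would yield the same uniform bounds, since the exponent $a/u^2-1/(u^2(1-u))\le -(1-a)/u^2$ near $u=0$; what matters is staying uniformly away from $a=1$. Also, in your final estimate the maximum should run over the integers $n$ with $[n,n+1]\cap[-k,k]\neq\emptyset$, a harmless adjustment of the index range.
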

\begin{proof}
  Fix a positive integer $k$.  For $\alpha,\beta\in [0,1/2]^{\mathbb Z}$, let
  $$d_k(\alpha,\beta) = \max_{-k\le i\le k} |\alpha(i)-\beta(i)|,$$
  and
  $$d_k'(\alpha,\beta) = \max_{-k\le i\le k}\sup_{u\in [0,1]} (|g_{\alpha(i)}(u) - g_{\beta(i)}(u)| + \cdots + |g_{\alpha(i)}^{(k)}(u)-g_{\beta(i)}^{(k)}(u)|).$$
  We then have
  $$d_k(\alpha,\beta)\le Cd_k'(\alpha,\beta)$$
  for some constant $C$, which is independent of $k$.  Indeed, for fixed $a>b$,
  $$\sup_{u\in[0,1]}(g_a(u)-g_b(u)) \ge g_a(1/2)-g_b(1/2) = \frac{e^{4a}-e^{4b}}{e^4} \ge 4e^{-4}(a-b)$$
  by the mean value theorem.  Therefore, we have
  $$d_k'(\alpha,\beta) \ge \max_{-k\le i\le k}\sup_{u\in [0,1]}|g_{\alpha(i)}(u)-g_{\beta(i)}(u)| \ge 4e^{-4}\max_{-k\le i\le k}|\alpha(i)-\beta(i)|,$$
  and so $C=e^4/4$ will do.
  
  Now let $\mathcal H\subset\mathcal P$ denote the image of $\mathbb H$ under the mapping $\rho(\alpha) = p_\alpha$.  The above argument shows that $\rho^{-1}:\mathcal H\to\mathbb H$ is continuous. Therefore to prove that $\rho$ is a homeomorphism, it is sufficient to show that $\mathcal H$ is compact in $\mathcal P$.  But $\mathcal H$ is obviously closed and it is bounded (being bounded in {\em each} of the seminorms defining the topology).\footnote{See Theorem 6.1.5 in \cite{narici2010topological}.}  A standard fact of functional analysis is that $\mathcal P$ is a nuclear space,\footnote{See Example 5.4.3 in \cite{hogbe2011nuclear}} so the Heine--Borel theorem is valid.  Therefore, the image $\mathcal H$ of $\mathbb H$ is compact in $\mathcal P$ and the mapping $\rho:\mathbb H\to\mathcal H$ is a homeomorphism.
\end{proof}

\subsection{Rosen universe isomorphisms}\label{RosenIsometries}
This section contains a description of all isomorphisms of Rosen universes.  But we first need to say what is meant by isomorphism.

\subsubsection{Isomorphisms}
In \cite{SEPI}, we introduced the class of {\em Rosen universes}, which are tensors on $\mathbb M$ having the form
$$G_\rho(h) = 2\,du\,dv - dx^Th(u)dx$$
where $h$ is a positive semidefinite matrix depending smoothly on $u$, where the set of singular points $\mathbb S$ is a discrete subset of $\mathbb U$, and each singularity is removable by a coordinate transformation.  So a Rosen universe is an ordinary plane wave, but a certain discrete set $\mathbb S$ of wave fronts where the metric degenerates in a controlled way.  Specifically, there exists a Brinkmann metric $G_\beta(p)$, globally on $\mathbb U$, such that $h=L^TL$ for some Lagrangian matrix $L$ satisfying the Jacobi equation $\ddot L + pL=0$.  The associated mapping is $\beta_L:(\mathbb M,G_\rho(h))\to(\mathbb M,G_\beta(\bar h))$, given by
\begin{equation}\label{BrinkmannizationEq}
  \beta_L(u,v,x) = \left(u,\quad v + 2^{-1}x^TL^T\dot Lx,\quad Lx\right)
\end{equation}
which satisfies $\beta_L^*G_\beta(p)=G_\rho(h)$.
\begin{definition}
  The mapping $\beta_L:\mathbb M\to\mathbb M$ given in \eqref{BrinkmannizationEq}, determined by a Lagrangian matrix $L$, is called a Brinkmannization map.
\end{definition}
  Suppose that $R$ is a constant orthogonal matrix, and denote its action on $\mathbb M$ by $\mu_R(u,v,x) = (u,v,Rx)$, which is always an isomorphism of Brinkmann spacetimes.  Note then that
$$\beta_{RL}(u,v,x) =  \left(u,\quad v + 2^{-1}x^TL^T\dot Lx,\quad RLx\right) = (\mu_R\circ \beta_L)(u,v,x).$$
Therefore,
\begin{lemma}
  If $L,\bar L$ are any two Lagrangian matrices with $L^TL=\bar L^T\bar L=h$, then the Brinkmannizations $\beta_L$ and $\beta_{\bar L}$ are isomorphic.
\end{lemma}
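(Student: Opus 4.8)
The plan is to show that $\beta_L$ and $\beta_{\bar L}$ differ by composition with the rotation map $\mu_R$ for a suitable constant orthogonal $R$, using the identity $\beta_{RL}=\mu_R\circ\beta_L$ already derived just above the statement. So the entire content of the lemma reduces to the following algebraic fact: \emph{if $L$ and $\bar L$ are Lagrangian matrices with $L^TL=\bar L^T\bar L=h$, then $\bar L=RL$ for some constant $R\in\op{O}(n)$}.

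First I would establish this on the open dense set $\mathbb U-\mathbb S$ where $h$ is positive definite, hence where $L$ and $\bar L$ are invertible. There set $R(u):=\bar L(u)L(u)^{-1}$. From $L^TL=\bar L^T\bar L$ we get $R^TR = (L^{-1})^T\bar L^T\bar L L^{-1} = (L^{-1})^T(L^TL)L^{-1}=I$, so $R(u)$ is orthogonal for each such $u$. The point is then to show $R$ is locally constant. The natural way is to use the Sachs data: $\dot L = SL$ with $S=\dot L L^{-1}$ symmetric, and similarly $\dot{\bar L}=\bar S\bar L$ with $\bar S$ symmetric. Differentiating $\bar L = RL$ gives $\dot{\bar L} = \dot R L + R\dot L = \dot R L + RSL$, while also $\dot{\bar L} = \bar S\bar L = \bar S R L$; cancelling the invertible $L$ yields $\dot R + RS = \bar S R$, i.e. $\dot R R^{-1} = \bar S - RSR^{-1}$. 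The left side is skew (differentiate $RR^T=I$), while the right side is symmetric (both $\bar S$ and $RSR^{-1}=RSR^T$ are symmetric). A matrix that is simultaneously symmetric and skew vanishes, so $\dot R=0$ on each component of $\mathbb U-\mathbb S$.

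Next I would rule out jumps across the singular points of $\mathbb S$ and across the (a priori) different components. Since $\mathbb S$ is discrete and $\mathbb U$ is an interval, $\mathbb U - \mathbb S$ is an open dense subset whose closure is all of $\mathbb U$; both $L$ and $\bar L$ extend smoothly (indeed are defined smoothly) across $\mathbb S$ by hypothesis, so $\bar L=RL$ holds on a dense set with $R$ piecewise constant, and I would argue that matching across a puncture forces the constant to be the same on both sides. The cleanest route: pick a point $u_0$ where $h(u_0)$ is invertible, set $R:=\bar L(u_0)L(u_0)^{-1}\in\op{O}(n)$, and observe that $\bar L$ and $RL$ are two smooth solutions of the \emph{same} linear second-order ODE $\ddot Y + pY = 0$ with $Y(u_0)=\bar L(u_0)$ and, from $\dot R=0$ near $u_0$, $\dot Y(u_0)=\dot{\bar L}(u_0)$; by uniqueness for linear ODEs, $\bar L = RL$ on all of $\mathbb U$. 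Then $\beta_{\bar L}=\beta_{RL}=\mu_R\circ\beta_L$, which is the asserted isomorphism.

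The main obstacle is a subtle one: verifying that the two solutions of the Jacobi equation genuinely agree in first derivative at $u_0$, i.e. that $\frac{d}{du}(RL)(u_0)=\dot{\bar L}(u_0)$, which needs $R$ constant on a neighbourhood of $u_0$ — and for that one needs $h$ positive definite on a whole neighbourhood of $u_0$, which is guaranteed because $\mathbb S$ is discrete. Once the ODE-uniqueness argument is set up this way, the discrete singular set and the possibly multiple components of $\mathbb U-\mathbb S$ cause no trouble, since the Jacobi equation is regular everywhere on $\mathbb U$. (One should also note the trivial edge case $h\equiv 0$ or where $h$ is nowhere invertible, but the definition of a Rosen universe excludes this, so $\mathbb S$ is proper and discrete.)
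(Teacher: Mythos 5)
Your overall strategy is the right one, and it is the one the paper intends (the paper leaves the key step implicit, deducing the lemma directly from the identity $\beta_{RL}=\mu_R\circ\beta_L$): everything reduces to showing $\bar L=RL$ for a single constant $R\in\op{O}(n)$. Your local argument on $\mathbb U-\mathbb S$ is correct and complete: $R(u)=\bar L(u)L(u)^{-1}$ is orthogonal because $L^TL=\bar L^T\bar L$, and $\dot R R^{-1}=\bar S-RSR^{-1}$ is simultaneously skew and symmetric, hence zero, so $R$ is constant on each component of $\mathbb U-\mathbb S$.

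The gap is in your ``cleanest route'' for gluing across the punctures. The matrices $\bar L$ and $RL$ are \emph{not} solutions of the same second-order ODE: if $\ddot L+pL=0$, then $RL$ satisfies $\ddot Y+(RpR^T)Y=0$, whereas $\bar L$ satisfies $\ddot Y+\bar p\,Y=0$, and the identity $\bar p=RpR^T$ is known only on the component where you have already established $\bar L=RL$ --- it is essentially equivalent to the conclusion. Left-multiplication by a constant matrix does not preserve the solution space of a Jacobi equation (only right-multiplication, i.e.\ taking linear combinations of the columns, does), so ODE uniqueness cannot be invoked as stated. The repair is a different continuity argument at each $s\in\mathbb S$: if $R_\pm$ are the constant values of $R$ on the two adjacent components, letting $u\to s$ in $\bar L=R_\pm L$ and $\dot{\bar L}=R_\pm\dot L$ gives $(R_--R_+)L(s)=0$ and $(R_--R_+)\dot L(s)=0$, so it suffices to show that the column spaces of $L(s)$ and $\dot L(s)$ together span $\mathbb X$. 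This follows from the Lagrangian property: the injective map $c\mapsto(L(s)c,\dot L(s)c)$ (injective by Jacobi uniqueness) has image an $n$-dimensional isotropic, hence Lagrangian, subspace $\Lambda\subset\mathbb X\oplus\mathbb X$ for $\omega((a,b),(a',b'))=a^Tb'-b^Ta'$, using continuity of $L^T\dot L-\dot L^TL=0$ at $s$. If $w$ is orthogonal to both column spaces, then $(w,0)$ and $(0,w)$ lie in $\Lambda^{\perp_\omega}=\Lambda$, so $w=L(s)c$ for some $c$, whence $w^Tw=w^TL(s)c=0$ and $w=0$. Hence $R_-=R_+$ at every puncture, and the rest of your argument goes through.
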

The following definition is therefore natural:
\begin{definition}
  Rosen universes $(\mathbb M,G_\rho(h))$ and $(\bar{\mathbb M},G_\rho(\bar h))$ are isomorphic if and only if they have isomorphic Brinkmannizations.
\end{definition}
We mention briefly that it should be possible to give local necessary and sufficient conditions, analogously to the results of \cite{SEPI}, but the formulation is a little intricate and other matters more pressing.

An isomorphism of Rosen universes is best illustrated by a diagram
\[\begin{tikzcd}
  (\mathbb M,G_\rho(h))\arrow[r,"\tilde{\phi}",dashed]\arrow[d,"\beta_L"] & (\bar{\mathbb M},G_\rho(\bar h))\arrow[d,"\beta_{\bar L}"]\\
  (\mathbb M,G_\beta(p))\arrow[r,"\phi"]&(\bar{\mathbb M},G_\beta(\bar p))
\end{tikzcd}\]
where the top dashed arrow has as domain a dense open subset of $\mathbb M$, on which it is an isometry onto a dense open subset of $\bar{\mathbb M}$.

\subsubsection{Local isometries}
We next describe the {\em local} isometries between (regular) Rosen metrics.  Recall that a Rosen metric is a metric of the form
$$G_\rho(h) = 2\,du\,dv - dx^Th(u)dx $$
where $h(u)$ is a symmetric $n\times n$ positive-definite matrix, depending smoothly on $u\in\mathbb U$ (a real interval).  Now, if $\bar h$ is defined on another interval $\bar{\mathbb U}$, we consider the problem of finding a local isometry.  We suppose, as above that $G_\rho(h)$ and $G_\rho(\bar h)$ are non-flat.  Then $G_\rho(h)$ is globally isometric to a Brinkmann space-time $G_\beta(p)$, where $p$ is the tidal curvature of $G_\rho(h)$.  Likewise, $G_\rho(h)$ is isometric to a $G_\beta(\bar p)$.  Therefore we can fill in the remaining edge in the diagram with an isometry:
\begin{equation}\label{commutativediag}
\begin{tikzcd}
  G_\rho(h) \arrow[r, dashed]\arrow[d] & G_\rho(\bar h)\ar[d]\\
  G_\beta(p) \arrow[r] & G_\beta(\bar p)
\end{tikzcd}
\end{equation}
Thus every isometry of Rosen plane waves is induced by an isometry of Brinkmann plane waves.  To describe the general isometry of Rosen, modulo the (comparatively trivial) isometries of Brinkmann, we identify the two spacetimes $G_\rho(h)$ and $G_\rho(\bar h)$, and the diagram \eqref{commutativediag} collapses to a triangle
\[
\begin{tikzcd}
 G_\rho(h) \arrow[rr, dashed]\arrow[dr] && G_\rho(\bar h)\ar[dl]\\
 & G_\beta(p)  &
\end{tikzcd}
\]
Because of the hypothesis that $G_\rho$ is non-flat, we can assume without loss of generality that all metrics are on a common domain $\mathbb U$ and the central null geodesic is pointwise fixed.  After putting the problem into this normal form, the problem is then that of characterizing maps from Brinkmann metrics to Rosen metrics, fixing the central null geodesic pointwise.  This is now easily done locally:
\begin{lemma}\label{BrinkmannizationLemma}
Given any isomorphism from a Brinkmann Penrose limit $G_\beta(p)=2\,du\,dv + x^Tpx\,du^2 - dx^Tdx$ onto a Rosen Penrose limit $G_\rho(h)=2\,du\,dV - dX^Th(u)dX$, on $M=\mathbb U\times\mathbb R\times\mathbb X$, is given by the change of variables $x=LX$, $v=V + x^TSx/2$, where $L$ is a Lagrangian matrix satisfying $\dot L = SL$ where $S$ is a solution to the Sachs equation $\dot S + S^2 + p = 0$.
\end{lemma}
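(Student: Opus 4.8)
The plan is to write down the most general diffeomorphism compatible with the Penrose-limit structure, pull back $G_\rho(h)$, and match it term by term against $G_\beta(p)$. In the normalization already arranged, the isomorphism intertwines the dilation $\mathcal D_t(u,v,x)=(u,e^{2t}v,e^tx)$ and fixes the central null geodesic $\{v=x=0\}$ pointwise; hence it preserves the foliation by wave fronts ($u$ constant) and leaves $u$ itself unchanged. Exactly as in the proof of Lemma \ref{CommutantAutomorphisms}, a weight count then forces the remaining new coordinates to be $V=a(u)v+x^TB(u)x$ with $B=B^T$ (weight two) and $X=C(u)x$ with $C$ invertible (weight one; invertibility because the map is a diffeomorphism and $h$ is positive definite). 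So the first step is simply to record that every candidate isomorphism has this shape.

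The second step is the coefficient comparison. Writing $dV=(\dot av+x^T\dot Bx)\,du+a\,dv+2x^TB\,dx$ and $dX=\dot Cx\,du+C\,dx$, I substitute into $2\,du\,dV-dX^Th\,dX$, expand the symmetric products, and equate with $2\,du\,dv+x^Tpx\,du^2-dx^Tdx$. The $du\,dv$ component gives $a\equiv 1$ (hence $\dot a=0$, consistent with the absence of any $v\,du^2$ term); the $dx^Tdx$ component gives $C^ThC=I$, i.e.\ $h=L^TL$ with $L:=C^{-1}$; the $du\,dx$ component gives $2B=\dot C^ThC$; and the $du^2$ component gives $2\dot B-\dot C^Th\dot C=p$.

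The third step is to rewrite the last two identities in terms of $L$ and $S:=\dot LL^{-1}$. Differentiating $LC=I$ yields $\dot C=-L^{-1}S$, after which a one-line computation collapses $\dot C^ThC$ to $-S^T$ and $\dot C^Th\dot C$ to $S^2$. The $du\,dx$ equation then reads $2B=-S^T$; since $B$ is symmetric, this forces $S=S^T$ --- precisely the condition that $L$ be Lagrangian --- and $B=-\tfrac12 S$. Feeding $\dot B=-\tfrac12\dot S$ into the $du^2$ equation gives $-\dot S-S^2=p$, i.e.\ the Sachs equation $\dot S+S^2+p=0$; and since $\dot S=\ddot LL^{-1}-S^2$ this is the same as $\ddot L+pL=0$, confirming that $L$ is a genuine Lagrangian matrix. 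Reading the coordinate change back off, $X=Cx=L^{-1}x$ and $V=v-\tfrac12 x^TSx$, which rearrange to $x=LX$, $v=V+x^TSx/2$, as claimed. Conversely, that any Lagrangian $L$ with $S=\dot LL^{-1}$ solving Sachs yields an isometry is exactly the reverse of this computation --- the Brinkmann-to-Rosen change of variables already recalled from \cite{SEPI}.

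I do not expect a serious obstacle: the argument is bookkeeping. The only delicate points are keeping the factors of $2$ and $4$ and all signs straight when expanding the symmetric products in the $du\,dx$ and $du^2$ rows, and noticing that the unwieldy combinations $\dot C^ThC$ and $\dot C^Th\dot C$ simplify to $-S^T$ and $S^2$; once those are in hand, everything is forced.
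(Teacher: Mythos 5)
Your argument is correct, and it reaches the same conclusion as the paper, but it is organized rather differently. The paper's proof is two sentences: the forward direction is the substitution check (which you also carry out, in reverse, at the end), and the hard direction is dispatched conceptually by observing that the Rosen coordinate frame $\partial_X$ is a Lagrangian basis of Jacobi fields, so the matrix $L$ with $x=LX$ is automatically Lagrangian --- the symmetry of $S=\dot LL^{-1}$ and the Sachs equation then come for free from the theory developed in \cite{SEPI}. You instead do the whole thing by brute force: normalize the isomorphism using the dilation-weight argument of Lemma \ref{CommutantAutomorphisms} to get the ansatz $V=a(u)v+x^TB(u)x$, $X=C(u)x$, pull back, and match coefficients, so that $a=1$, $h=L^TL$, the symmetry of $B$ forces $S=S^T$ (the Lagrangian condition), and the $du^2$ row yields the Sachs equation directly. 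What your route buys is self-containedness and an explicit derivation of \emph{why} $S$ must be symmetric and \emph{why} Sachs holds, rather than importing both from the Jacobi-field machinery; what the paper's route buys is brevity and the geometric insight that the Rosen frame is the Lagrangian structure. Your computations check out (with $\dot C=-L^{-1}S$, one indeed gets $\dot C^ThC=-S^T$ and $\dot C^Th\dot C=S^TS$), and your appeal to the ambient normalization --- that the isomorphism fixes the central null geodesic pointwise and preserves $u$ --- is exactly the normal form the paper sets up in the paragraph preceding the lemma, so no gap there.
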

\begin{proof}
  For any Lagrangian matrix $L$, the indicated substitution transforms $G_\beta(p)$ into a $G_\rho(h)$ for $h=L^TL$.
  Conversely, observe that the coordinate vectors $\partial_X$ in the Rosen coordinates are a Lagrangian basis of Jacobi fields, so the change of coordinates $L$ such that $x=LX$ is Lagrangian.
\end{proof}

\subsubsection{Global isomorphisms}
Lemma \ref{BrinkmannizationLemma} was in the case of regular Rosen metrics, but note that it is not quite satisfactory because there are more metrics $G_\rho(L^TL)$ locally then there are Rosen metrics regular on the interval $\mathbb U$, because $L^TL$ can degenerate.  The argument goes through, however, provided ``isometry of regular Rosen metrics'' is replaced by ``isomorphism of Rosen universes''.  
Because the Heisenberg isometries of a (non-flat) Rosen plane wave act transitively on the transverse null geodesics, we can select the central null geodesic of $\bar{\mathbb M}$ so that it is the image of the central null geodesic of $\mathbb M$.  Replacing $\bar{\mathbb U}$ by an affine image, the affine parameters of the central null geodesics can be made to match, and so $\bar{\mathbb U}=\mathbb U$ without loss of generality.  So, we now have two Rosen universes which are Penrose limits, and the isomorphism commutes with the dilation.  Therefore, the metrics $G_\rho(h)$ and $G_\rho(\bar h)$ map to a common $G_\beta(p)$, in a way that commutes with the dilation.

\begin{theorem}\label{RosenIsometriesTheorem}
  Let non-flat Rosen universes $G_\rho(h) = 2\,du\,dv - dx^T h(u) dx$ and $G_\rho(\bar h) = 2\,d\bar u \,d\bar v - d\bar x^T \bar h(\bar u) d\bar x$ be isomorphic.  Then after affine reparametrizations of the variable $\bar u$, and transformations by isometries belonging to the isometry group, and constant invertible coordinate changes, the isomorphism may be reduced to the transformation:
 \[ \bar u = u, \hspace{7pt} x = (I + H(u)E)\bar x,  \hspace{7pt}v = \bar v + 2^{-1}\bar x^T E H(u)E\bar x + 2^{-1}\bar x^T E \bar x. \]
 Here $H(u)$ is any (fixed) solution to the differential equation $h(u)\dot H(u) = I$ for $u\in\mathbb U-\mathbb S$, such that $(u-s)H(u)$ admits a smooth continuation at all $s\in\mathbb S$, and $E$ is a constant symmetric endomorphism.  We then have:
\[ \bar h(u) = (I + H(u)E)^Th(u)(I + H(u)E).\]
\end{theorem}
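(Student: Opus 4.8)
The plan is to reduce, as the paragraph before the theorem indicates, to the case where both Rosen universes are Penrose limits on a common interval $\mathbb{U}$ with a common central null geodesic, so that the isomorphism commutes with the standard dilation $\mathcal{D}_t$. Concretely, I would pass through Brinkmannizations: choose Lagrangian matrices $L,\bar L$ with $L^TL = h$, $\bar L^T\bar L = \bar h$, so that $\beta_L$ and $\beta_{\bar L}$ carry the two Rosen universes to Brinkmann metrics $G_\beta(p)$ and $G_\beta(\bar p)$. Since the Rosen universes are assumed isomorphic, their Brinkmannizations are isomorphic, and since both Brinkmann metrics have the same central null geodesic and dilation after applying Theorem~\ref{BrinkmannDiffeos} and an isometry, we may assume $p = \bar p$ and that the Brinkmann identification is the identity. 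Then the composite $\tilde\phi = \beta_{\bar L}^{-1}\circ\beta_L$ is (on a dense open set) the sought-after isomorphism of Rosen universes, and it remains to compute it explicitly. This is where Lemma~\ref{BrinkmannizationLemma} enters: $\beta_L$ corresponds to a solution $S = \dot L L^{-1}$ of the Sachs equation $\dot S + S^2 + p = 0$, and $\beta_{\bar L}$ to another solution $\bar S = \dot{\bar L}\bar L^{-1}$ of the \emph{same} Sachs equation.

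The next step is to understand the difference of two solutions of a fixed Sachs equation. If $S$ and $\bar S$ both solve $\dot S + S^2 + p = 0$, then $D := \bar S - S$ satisfies a Riccati-type relation $\dot D + DS + \bar S D = 0$, equivalently $\dot D = -\bar S D - DS$. The standard trick is to set $D = \bar L E L^T$ or, more precisely, to note that both $L$ and $\bar L$ are Lagrangian solutions of the same Jacobi equation $\ddot L + pL = 0$, so there is a \emph{constant} matrix relating them: $\bar L = L M$ for some constant invertible $M$ (after matching initial data up to the constant coordinate change and isometry, which we have already absorbed). Then $\bar h = \bar L^T\bar L = M^T h M$ would be the naive answer, but this is only the part of the freedom coming from the identity piece; the genuine nontrivial freedom is the symmetric constant $E$. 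The cleaner route is: Lagrangian matrices for a fixed $p$ with the same central-null-geodesic data form a homogeneous space under right multiplication by the symplectic stabilizer, and the condition that $S = \dot L L^{-1}$ stay symmetric cuts this down so that the residual freedom is $L \mapsto L(I + \int (L^TL)^{-1}\,du \cdot E)$-type; carrying this through with $H$ defined by $h\dot H = I$ produces exactly $\bar L = L(I + HE)$ — here one checks $\overline{S} = \dot{\bar L}\bar L^{-1}$ is symmetric precisely because $E$ is symmetric and $\dot H = h^{-1}$ is symmetric. Substituting into $\bar h = \bar L^T \bar L$ gives $\bar h = (I + HE)^T h (I + HE)$, and feeding $\bar L = L(I+HE)$ into the formula $\tilde\phi = \beta_{\bar L}^{-1}\circ\beta_L$ from \eqref{BrinkmannizationEq} yields, after simplification using $\dot H = h^{-1}$ and the symmetry of $E$, the stated change of variables $x = (I + HE)\bar x$, $v = \bar v + 2^{-1}\bar x^T EHE\bar x + 2^{-1}\bar x^T E\bar x$.

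The main obstacle I anticipate is two-fold. First, justifying that \emph{every} isomorphism of Rosen universes arises this way requires care about the removable-singularity structure: $\tilde\phi$ is only defined on the dense open set where both $L$ and $\bar L$ are invertible, and one must argue that the constant $M$ (or the pair $(M,E)$) extracted on one component of $\mathbb{U}-\mathbb{S}$ is the same on all components — this is where the hypothesis that $(u-s)H(u)$ extends smoothly across each $s\in\mathbb{S}$ is used, to glue the local descriptions into a global one. Second, the bookkeeping of \emph{which} transformations have been quotiented out: the theorem says "after affine reparametrizations of $\bar u$, transformations by isometries, and constant invertible coordinate changes", so the argument must be organized to show that modulo the Heisenberg group, the constant rotation freedom of Theorem~\ref{BrinkmannDiffeos}, the $\bar u$-affine freedom, and a constant linear change $x\mapsto Nx$ (which rescales $h\mapsto N^ThN$), the remaining one-parameter-family-of-matrices worth of freedom is exactly the symmetric $E$. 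I would handle this by first proving the displayed normal form exists (the "converse" computation, which is routine), then proving uniqueness/exhaustiveness by the Brinkmannization argument above, tracking the stabilizer of the central null geodesic inside the full isometry pseudogroup at each stage.
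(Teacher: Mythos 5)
Your proposal is correct and follows essentially the same route as the paper: after reducing via Brinkmannization and the Heisenberg/affine freedom to two Rosen forms of a common $G_\beta(p)$ with matching data at $u=0$, the paper identifies the second Lagrangian subspace of Jacobi fields as the graph of a constant symmetric map $E$ against the complementary Lagrangian spanned by $\partial_x\cdot H(u)+x^T\partial_v$, which is exactly your $\bar L=L(I+HE)$ with $\dot H=h^{-1}$ and $E=E^T$, followed by the same direct substitution. The only wobble is your intermediate claim that $\bar L=LM$ for a constant $M$ (two solutions of $\ddot L+pL=0$ are not so related — the general solution is $L(A+HB)$), but you immediately discard this for the correct parametrization, so the argument as finally organized matches the paper's.
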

\begin{proof}
  Let $L_i=\partial_i$ be the Lagrangian basis of Jacobi fields corresponding to $G_\rho(h)$.  Consider a Lagrangian basis $\bar L_i$, such that at $u=0$, we have $L_i(0)=\bar L_i(0)$.  We have to determine the possible such $\bar L_i$.  Let $\mathbb L$ be the Lagrangian space spanned by the Jacobi fields $L_i$, and $\mathbb Q$ be the Lagrangian space of Jacobi fields vanishing at $u=0$.  Then $\mathbb L$ is spanned by $L=\partial_x$, and $\mathbb Q$ is spanned by $x^T\partial_v + \partial_x\cdot H(u)$, where $H$ is a regular primitive of $h^{-1}$.  The subspace $\bar{\mathbb L}$ spanned by $\bar L_i$ is the graph of a (constant) symmetric linear map $E:\mathbb L\to\mathbb Q$:
  $$\bar L = L + x^TE\partial_v + \partial_x\cdot HE.$$
  Putting $x=(I+HE)\bar x$ and $v=\bar v+2^{-1}\bar x^TEHE\bar x + 2^{-1}\bar x^TE\bar x$, we have
  \begin{align*}
    dx &= (I+HE)d\bar x + h^{-1}E\bar x\,du\\
    dv &= d\bar v + 2^{-1}\bar x^TEh^{-1}E\bar x\, du + \bar x^TE(I+HE)\,d\bar x.
  \end{align*}
  Therefore
  \[dx^Th\,dx = d\bar x^T(I+HE)^Th(I+HE)\,d\bar x + 2 \bar x^TE(I+HE)d\bar x\,du + \bar x^T Eh^{-1}E\bar x\,du^2\]
  \[2\,du\,dv - dx^Thdx = 2\,du\,d\bar v - d\bar x^T(I+HE)^Th(I+HE)d\bar x\]
  as required.
\end{proof}

\section{Conformal isometries}
\subsection{Brinkmann conformal automorphisms}\label{BrinkmannConformalSymmetries}

We determine the collection of conformal Killing vector fields of a given plane wave spacetime, whose metric, without loss of generality,  we take to be in the Brinkmann form:
\[ G = 2\,du\, dv + x^Tp(u)x\, du^2 - dx^T \,dx.\]
Here the spacetime manifold is $\mathbb{M} = \mathbb{U}\times \mathbb{R}\times \mathbb{X}$, with a point $X$ of $\mathbb{M}$ given by a triple $(u, v, x)$, with $u \in \mathbb{U}$,  a connected non-empty open subset of the reals, $v\in \mathbb{R}$ and $x \in \mathbb{X}$,  a Euclidean vector space of positive integral dimension $n$.

The co-ordinate differentials are written $(du, dv, dx)$, with $dx$ taking values in $\mathbb{X}$.  The corresponding partial derivatives are denoted $(\partial_u, \partial_v, \partial_x)$, with $\partial_x$ taking values in $\mathbb{X}^*$, such that the differential $d$  is $d =  du\,\partial_u +  dv \partial_v + \,dx(\partial_x)$.  When convenient, we represent the partial derivative with respect to $u$ by a dot.

\begin{lemma}\label{FundamentalLemma}
Given that  the Brinkmann plane wave $(\mathbb{M}, G)$ is conformally curved.  Let $V$ be a conformal Killing vector field, such that
$$\mathscr L_V G = SG.$$
Then
	\[ V = b\partial_v + k(2v\partial_v + x\partial_x) + x^T\dot q\partial_v + q(\partial_x) + V_{(w, W)}, \]
        \[ V_{(w, W)}=  w\partial_u + 4^{-1} \ddot w (x^Tx) \partial_v + 2^{-1}\dot wx(\partial_x) + Wx(\partial_x). \]	
Here $k$ and $b$ are real constants, whereas the endomorphism $W$ of $\mathbb{X}$ is constant and skew, representing  a (constant) element of the Lie algebra $\mathfrak{so}(\mathbb{X})$.  Also $w: \mathbb{U} \rightarrow \mathbb{R}$ is smooth.
\begin{itemize}
	\item  The conformal factor is: $S = \dot w  + 2k$
	\item  The smooth function $q: \mathbb{U} \rightarrow \mathbb{X}$ obeys the second-order homogeneous linear differential equation: \[\ddot q + pq = 0.\]
	\end{itemize}
The remaining conditions fixing the vector field $V_{(w, W)}$ are:
\begin{itemize}
	\item $0 = \dddot w + 4P\dot w + 2\dot Pw$,
	\item $0 = w\dot{\tilde p}   + 2 \dot w\tilde{p}  + \tilde{p}\omega - \omega \tilde{p}$
\end{itemize}
Here we have the smooth decomposition of $p$:
\[ p = \tilde{p} + PI, \quad  \textrm{tr}(\tilde{p}) = 0,\quad  \textrm{tr}(p) = nP.\] 
\end{lemma}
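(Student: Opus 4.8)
The plan is to start from a general smooth vector field $V = A\,\partial_u + B\,\partial_v + C(\partial_x)$ on $\mathbb M$, with $A,B$ real-valued and $C$ valued in $\mathbb X$, and to expand the conformal Killing equation $\mathscr L_V G = SG$ into its six independent components relative to the coframe $(du,dv,dx)$, using $G = 2\,du\,dv + x^Tp(u)x\,du^2 - dx^Tdx$. A direct computation yields:
\begin{itemize}
\item from $dv^2$: $\partial_v A = 0$;
\item from $dv\,dx$: $\partial_v C = \partial_x A$;
\item from $dx^Tdx$: for each fixed $(u,v)$, $C$ is an infinitesimal conformal isometry of the Euclidean space $\mathbb X$, with conformal factor $S$;
\item from $du\,dv$: $S = \partial_u A + \partial_v B$;
\item from $du\,dx$: $\partial_u C = \partial_x B + (x^Tpx)\,\partial_x A$;
\item from $du^2$: $A\,x^T\dot p\,x + 2\,x^TpC + 2\,\partial_u B + 2\,(x^Tpx)\,\partial_u A = S\,x^Tpx$.
\end{itemize}

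First I would carry out the purely \emph{kinematic} reduction, which uses no curvature hypothesis. The $dv^2$ equation gives $A=A(u,x)$; the $dv\,dx$ equation then gives $C = \partial_x(vA) + \widetilde C(u,x)$. Differentiating the $dx^Tdx$ equation in $v$ gives $2\operatorname{Hess}_x A = (\partial_v S)\,I$, and the elementary fact that a Hessian which is a multiple of the identity has that multiple independent of $x$ (valid for $n\ge 2$; for $n=1$ the plane wave is conformally trivial, so the hypothesis of the lemma is vacuous) shows that $A$ is at most quadratic in $x$ and that $\widetilde C$ is itself an infinitesimal conformal isometry of $\mathbb X$. Decomposing $\widetilde C$ into translation, dilation, rotation and special-conformal parts with $u$-dependent coefficients, and then imposing that $\partial_v C = \partial_x A$ be a gradient and that $A$ be independent of $v$, forces the rotation and special-conformal parts of $C$ to be $v$-independent and the remaining coefficients to be affine in $v$; the $du\,dv$ equation then makes $B$ a polynomial of degree at most two in $v$ with $x$-dependent coefficients, and the $du\,dx$ equation expresses $\partial_x B$ through the other data. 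At this stage $V$ has been reduced to a finite-dimensional family of shapes which, besides the terms in the statement, still contains a possible ``$u$-dependent dilation'' (the $v^2$-coefficient of $B$), a nonconstant-in-$x$ part of $A$, and a possible special-conformal generator in $C$ — exactly the pieces that in Minkowski space would produce the special conformal transformations and the second dilation.

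Next I would bring in the hypothesis $\tilde p\not\equiv 0$. Substituting the reduced shape of $V$ into the $du^2$ equation and sorting it by homogeneity degree in $x$ (and in $v$), the terms of highest degree — which come precisely from $(x^Tpx)\,\partial_x A$ via the $du\,dx$ equation and from $x^TpC$ via the special-conformal part of $C$ — must vanish identically. Writing $p=\tilde p + PI$, these constraints become linear equations on the offending tensors, and an argument on the invertibility of operators of the form $X\mapsto cX + XW-WX$ (with $c$ a nonzero scalar and $W$ skew) on symmetric, or trace-free symmetric, endomorphisms of $\mathbb X$ — of the kind carried out in the proof of Lemma \ref{DilationOfDvIffFlat} — together with a propagation argument along $\mathbb U$, forces these tensors to vanish. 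This kills the $v^2$-term of $B$, the special-conformal generator, and the nonconstant-in-$x$ part of $A$, leaving $A = w(u)$; the $dx^Tdx$ and $du\,dx$ equations then force $C = \tfrac12\dot w\,x + Wx + kx + q(u)$ with $W$ constant skew and $k$ constant, $B = \tfrac14\ddot w\,x^Tx + x^T\dot q + 2kv + b$ with $b$ constant, and $S = \dot w + 2k$ by the $du\,dv$ equation. Feeding this back into the $du^2$ equation collapses it to the polynomial identity $w\,x^T\dot p\,x + 2\dot w\,x^Tpx + x^T(pW-Wp)x + \tfrac12\dddot w\,x^Tx + 2\,x^T(\ddot q + pq) = 0$; its linear part gives $\ddot q + pq = 0$, and its quadratic part, split via $p=\tilde p+PI$ and using that $pW-Wp = \tilde pW - W\tilde p$ is symmetric and trace-free, gives $\dddot w + 4P\dot w + 2\dot P w = 0$ and $w\dot{\tilde p} + 2\dot w\tilde p + \tilde pW - W\tilde p = 0$, which are the remaining conditions of the lemma.

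The main obstacle I expect is precisely this last use of conformal curvature: one must organize the $du^2$ equation so that the ``extra'' pieces (the would-be $v^2$-dilation and special-conformal generators, and the nonconstant part of $A$) appear as the kernel of an operator which the assumption $\tilde p\not\equiv 0$ shows to be trivial at a point, and then propagate that vanishing over all of $\mathbb U$. The polynomial bookkeeping in the $du^2$ equation is genuinely delicate — there are terms of degree up to four in $x$ hidden in $B$ — and one must take care to separate the constraints valid for every plane wave from those that need the curvature assumption, and to treat the low-dimensional cases $n=1,2$ (where the Euclidean conformal algebra is exceptional) correctly.
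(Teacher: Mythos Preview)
Your plan is essentially the paper's own approach: write $V=A\partial_u+B\partial_v+C(\partial_x)$, split $\mathscr L_VG=SG$ into its six components, carry out the kinematic reduction (which leaves exactly the three ``extra'' pieces you identify: a scalar $q_0$ giving the quadratic part of $A$, a vector $y(u)$ giving the linear-in-$x$ part of $A$ and the special-conformal part of $C$, and a $u$-dependent rotation $\omega(u)$), and then use $\tilde p\not\equiv0$ to kill $q_0$ and $y$ and force $\omega$ constant; the remaining $du^2$ identity then splits exactly as you write.

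The one place your outline is off is the mechanism by which $\tilde p\not\equiv0$ kills $q_0$ and $y$. It is \emph{not} an operator-invertibility argument of the type $X\mapsto cX+XW-WX$; that operator does not appear here. In the paper the constraints come from the $du\,dx$ equation, not the $du^2$ equation: having $b_x=-(x^Tpx)(q_0x^T+y^T)+\dot c^T$, the symmetry of the Hessian $(b_x^T)_x$ is an integrability condition. Its terms quadratic in $x$ give a tensor identity whose trace is $n\,q_0\,\tilde p=0$, so $q_0=0$; its $x$-independent terms give that $\dot\omega$ is symmetric, hence $\omega$ is constant; and its linear terms give $(\tilde p x)\wedge y - x\wedge(\tilde p y)=0$, which after two contractions yields $(y^Ty)\,\tilde p=0$. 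Thus $y$ vanishes on the open set where $\tilde p\ne0$, and since $\ddot y+py=0$ this propagates to all of $\mathbb U$ --- exactly the propagation step you anticipate. Once you replace your operator argument by this Hessian-symmetry step, the rest of your outline goes through verbatim and matches the paper's proof.
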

\begin{proof}

So we have spacetime dimension at least four: $n > 1$. Also $\tilde{p}(u)$ is smooth and not identically zero, so $\tilde{p}(u)$ is non-vanishing on a non-empty open subinterval $\mathbb{U}_0$ of $\mathbb{U}$ (here the interval $\mathbb{U}_0$ is not by any means necessarily unique).

Let our putative conformal Killing field be:
\[ V = A\partial_u + B\partial_v + C \partial_x.\]
Here $A$, $B$ and $C$ are smooth on the manifold $\mathbb{M}$, with $A$ and $B$ real-valued, whereas $C$ takes values in $\mathbb{X}$.

For some unknown smooth real-valued function $S$ on $\mathbb{M}$,  we write out the equation, $SG = \mathcal{L}_V G$:
\[  S\left(2\,du\, dv + (x^Tp x) du^2 - dx^T \,dx\right) =  (Ax^T\dot px + 2x^Tp C) du^2 \]
\[ +  2 dv\,dA + 2\,du\,\left( (x^Tpx)\,dA + \,dB\right) - 2\,dx^T\,dC.\]
Decomposing into components, we get six equations to analyze:
\begin{itemize}
	\item $E_{vv}$: $0 = A_v$
	\item $E_{vx}$: $0 = A_x - C^T_v$
	\item $E_{uv}$: $S = (x^Tpx)A_v + \dot A + B_v$
	\item $E_{ux}$: $0 =  - (x^Tpx) A_x - B_x + \dot C^T$
	\item $E_{uu}$: $2\dot B = - (x^T\dot px)A - 2x^Tp C - x^Tp x(2\dot A- S)$
	\item  $E_{xx}$: $SI = C_x + (C_x)^T$
\end{itemize}
The equation $E_{vv}$: $A_v =0$ gives:
\[ A = a(u, x).\]
Here $a(u, x)$ is real-valued and smooth.
Back substituting, we now have:
\begin{itemize}  \item $E_{vx}$: $0 = a_x - C^T_v$
	\item $E_{uv}$: $S = \dot a + B_v$
	\item $E_{ux}$: $0 =  - (x^Tpx) a_x - B_x + \dot C^T$
	\item $E_{uu}$: $2\dot B = - (x^T\dot px)a - 2x^Tp C - x^Tp x(2\dot a- S)$
	\item  $E_{xx}$: $SI = C_x + C_x^T$
	\end{itemize}
	Integrating the $E_{vx} $ equation gives:
	\[ C = (a_x)^T v + c(u, x). \]
	Here $c(u, x)$ is smooth and takes values in $\mathbb{X}$.
	So now we have:
	\begin{itemize}
		\item $E_{uv}$: $S = \dot a + B_v$
	\item $E_{ux}$: $0 =  - (x^Tpx) a_x - B_x + \dot a_x v + \dot c^T$
	\item $E_{uu}$: $2\dot B = - (x^T\dot px)a - 2a_xpx v - 2 c^Tpx - x^Tp x(2\dot a- S)$
	\item  $E_{xx}$: $SI = c_x + c_x^T + 2v(a_x^T)_x$
	\end{itemize}
We see that from the $E_{ux}$ and $E_{uv}$ equations:
\[ B_{xv} = \dot a_x = S_x - \dot a_x, \]
In particular, we have $(S - 2\dot a)_x = 0$ so $S  = 2\dot a + \sigma(u, v)$, where $\sigma $ is smooth and real-valued.  But from the $E_{xx}$ equation we have, in particular that  $S_{vv} = 0$, so $\sigma$ is linear in $v$.  So we have, for some smooth functions $q(u)$ and $r(u)$:
\[ S = 2\dot a + r(u) + 2q(u)v.  \]

So now the $E_{uv}$  equation  gives:
\[ B_v = S - \dot a = \dot a + r + 2q v.\]
Integrating, we get:
\[ B = \dot a(u, x)v + r(u)v + q(u)v^2 +  b(u, x). \]
Here $b(u, x)$ is real-valued and smooth.

	Back substituting, we now have:
	\begin{itemize}
	\item $E_{ux}$: $0 =  - (x^Tpx) a_x - b_x + \dot c^T$
	\item $E_{uu}$: $2(\ddot av + \dot rv + \dot qv^2 + \dot b) = - (x^T\dot px)a - 2a_xpx v - 2 c^Tpx  + x^Tp x(r + 2qv)$
	\item  $E_{xx}$: $(2\dot a + r + 2qv)I = c_x + c_x^T + 2v(a_x^T)_x$
	\end{itemize}
	The $v^2$ term in $E_{uu}$ gives $\dot q = 0$, so $q(u) = q_0$, a real constant.  The $v$ term in $E_{uu}$ gives:
	\[ \ddot a + \dot r= - a_xpx   + q_0x^Tp x\]
	The $v$ term in $E_{xx}$ gives:
	\[ q_0I = (a_x^T)_x\]
	Integrating this last equation we get:
	\[ a(u, x) = 2^{-1}q_0 x^T x + x^T y(u) + w(u). \]
	Here $w(u)$ is smooth and real-valued, whereas $y(u)$ is smooth, taking values in $\mathbb{X}$.
	The equation $\ddot a + \dot r= - a_xpx   + qx^Tp x$ now gives:
	\[  x^T\ddot y + \ddot w + \dot r = - x^Tpy. \]
	So we get:
	\[ \ddot y + py = 0, \]
	\[ r = r_0 - \dot w.\]
	Here $r_0$ is a real constant.  The remaining equations become:
\begin{itemize}
	\item $E_{ux}$: $b_x =  - (x^Tpx) (q_0 x^T  + y^T)  +  \dot c^T$
	\item $E_{uu}$: $2\dot b = - (x^T\dot px)(2^{-1}q_0 x^T x + x^Ty + w) - 2 c^Tpx  + (r_0 - \dot w)x^Tp x$
	\item  $E_{xx}$: $(2x^T\dot y + r_0  + \dot w)I = c_x + c_x^T$
	\end{itemize}
Put $c = (x^T\dot y)x - 2^{-1}\dot y (x^T x) + 2^{-1}(r_0  + \dot w)x + z(u, x)$.  Then the $E_{xx}$ equation becomes just:
	\[ 0 = z_x + z_x^T. \]
	Using abstract indices this equation becomes:
	\[ \partial_a z_b + \partial_b z_a  = 0.\]
	Here $\partial_a$ denotes the partial derivative with respect to $x^a$.
	Then we get:
        \begin{align*}
	  \partial_a \partial_c  z_b  &= - \partial_c \partial_b z_a =   \partial_b \partial_a z_c, \\
	 \partial_a (\partial_c z_b - \partial_b z_c) &= 0, \\
	 \partial_a z_b - \partial_b z_a &= 2 \omega_{ab}(u) = - 2 \omega_{ba}(u).\\
	 \partial_a z_b &= \omega_{ab}, \\
	 z^b &= {\omega_{a}}^b(u)x^a + m^b(u).\\
        \end{align*}

	So now we have:
	\[  c = \left(x^T\dot y(u) + 2^{-1}(r_0  + \dot w)\right)x - 2^{-1}\dot y(u) x^T x +  \omega(u) x + m(u).\]
	Here $\omega(u)$ is smooth and skew, taking values in $\mathbb{X}^*\otimes \mathbb{X}$, whereas $m(u)$ is smooth, taking values in $\mathbb{X}$.
	This gives:
        \begin{align*}
	 c_x &= \left(x^T\dot y(u) + 2^{-1}(r_0  + \dot w)\right)I + x\dot y^T - \dot yx^T  +  \omega, \\
	 \dot c_x &= (- x^Tpy + 2^{-1} \ddot r)I - xy^Tp + pyx^T +  \dot\omega.
        \end{align*}
        The $E_{ux}$ equation, $b_x =  - (x^Tpx) (q_0 x^T  + y^T)  +  \dot c^T$,  now gives:
\[ (b_x^T)_x =  \left(- x^Tpy +2^{-1} \ddot w - q_0x^Tpx\right)I - xy^Tp + pyx^T +  \dot\omega  - 2 (q_0 x  + y)x^Tp.  \]
Here the left-hand side is the hessian of $b$, so is symmetric, so  the right-hand side should be symmetric also.
So we get from the terms quadratic in $x$ of the righthand side, using abstract indices:
\[q_0 \left(\tilde{p}^i_m\delta^j_k +  \tilde{p}^i_k\delta^j_m -   \tilde{p}^j_m\delta^i_k -  \tilde{p}^j_k\delta^i_m\right) = 0.\] 
Taking the trace over $j$ and $k$ we get:
\[ nq_0 \tilde{p}  = 0.\]
Since $n > 1$ and $q_0$ is constant and $\tilde{p}$ is not identically zero, we get:
\[ q_0 = 0.\]
Then from the terms independent of $x$ of the right-hand side we get that the skew endomorphism  $\omega$ is constant.
Finally the linear terms give:
\[  (\tilde{p}x)\wedge  y - x \wedge (\tilde{p}y) = 0.\]   
Using abstract indices this gives:
\[ \tilde{p}^i_m  y^j -  \tilde{p}^j_m  y^i - \delta^i_m \tilde{p}^j_k y^k  + \delta^j_m \tilde{p}^i_k y^k = 0.\]
Tracing over $i$ and $m$ gives:
\[  n \tilde{p}^j_k y^k  = 0.\]
Since $n > 1$, we have $p^j_k y^k = 0$ and back substituting  gives:
\[ \tilde{p}^i_m  y^j -  \tilde{p}^j_m  y^i  = 0.\]
Contracting with $y_j$ gives the relation $(y^Ty) \tilde{p}= 0$.
But $\tilde{p}$ is non-zero on a non-empty open interval $\mathbb{U}_0\subset \mathbb{U}$.  So $y(u)$ vanishes everywhere on that open interval. But $y$ also obeys the equation $\ddot y(u) + p(u)y(u) = 0$, so $y(u)$ is identically zero, since it has trivial initial data at any point $u_0$ of $\mathbb{U}_0$, $y(u_0) = \dot y(u_0) = 0$.

So now we have:
	\[  c =  2^{-1}(r_0  + \dot w(u))x +  \omega x + m(u).\]

Then remaining equations are:
\begin{itemize}
	\item $E_{ux}$: $b_x =  2^{-1}\ddot w x^T + \dot m^T.$
	\item $E_{uu}$: $\dot b = - 2^{-1}w(x^T\dot px)   - \dot wx^Tpx -  x^Tp\omega x - x^Tpm$
	\end{itemize}
Integrating the equation for $b_x$, we get:
	\[ b = 4^{-1}\ddot w(u)x^Tx + x^T\dot m(u)+ b_0(u).\]
	Here $b_0(u)$ is smooth and real-valued.
	Then the $E_{uu}$ equation gives:
\[ 4^{-1}\dddot w(u)x^Tx + x^T\ddot m+ \dot b_0 = - 2^{-1}w(x^T\dot px)   - \dot wx^Tpx -  x^Tp\omega x - x^Tpm.	\]
So we get first that $b_0$ is a  real constant.  Then we get that $m(u)$ obeys the second-order homogeneous linear differential equation:
	\[ \ddot m + pm = 0.\]
	Finally we need the terms quadratic in $x$ to balance:
\[ 4^{-1}\dddot wx^Tx  = - 2^{-1}w(x^T\dot px)   - \dot wx^Tpx -  x^Tp\omega x \]	
Splitting this last equation into its trace-free and pure trace parts we get:
\[ \dddot w + 4\dot wP + 2w\dot P = 0, \]
\[ 0 = w\dot{\tilde p}   + 2 \dot w\tilde{p}  + \tilde{p}\omega - \omega \tilde{p}. \]
\end{proof}

Denote by $\mathcal{P}$ the $2n$-dimensional real vector space of solutions of the homogeneous linear differential equation $\ddot q(u) + p(u)q(u) = 0$, with $q(u)$ smooth and taking values in $\mathbb{X}$.  If now  $q_1$ and $q_2$ are in $\mathcal{P}$, put:
\[ (\omega(q_1, q_2))(u) = q_1(u)^T \dot q_2(u) - q_2(u)^T \dot q_1(u).\]
Then $\left(\omega(q_1, q_2)\right)^\cdot  = 0$, so $ \omega(q_1, q_2)$ is a real constant.

 Then $\omega$ gives the space $\mathcal{P}$ a natural symplectic structure.
Given a base point $u_1$ in $\mathbb{U}$, put $\mathcal{P}^+_{u_1}$ the subspace of all $q \in \mathcal{P}$ such that $\dot q(u_1) = 0$ and put $\mathcal{P}^-_{u_1}$ the subspace of all $q \in \mathcal{P}$ such that $q(u_1) = 0$.  Then each of $\mathcal{P}^\pm_{u_1}$ is a Lagrangian subspace of $\mathcal{P}$ of dimension $n$ and we have the direct sum decomposition $\mathcal{P} = \mathcal{P}^+_{u_1}\oplus \mathcal{P}^-_{u_1}$.
\begin{corollary}  Given that  the Brinkmann plane wave $(\mathbb{M}, G)$ is conformally curved, a spanning collection of conformal vector fields are the vector fields:
\begin{itemize}\item $D = 2v\partial_v + x\partial_x$, 
\item $H = \partial_v$
\item $X_q = x^T\dot q\partial_v + q(\partial_x) $, where $\ddot q + pq = 0$, 
\item  $V_{(w, W)}=  w\partial_u + 4^{-1} \ddot w (x^Tx) \partial_v + 2^{-1}\dot wx(\partial_x) + (Wx)(\partial_x)$, where the pair $(w, W)$, with $W\in \mathfrak{so}(\mathbb{X})$ constant,  obeys:
\[ 0 = \dddot w + 4P\dot w + 2\dot Pw, \quad  0 = w\dot{\tilde p}   + 2 \dot w\tilde{p}  + \tilde{p}W - W \tilde{p}.  \]
\end{itemize}
The Lie algebra commutators are as follows:
\begin{itemize}\item $ [D, H] = - 2H, \quad  [D, X_q] = - X_q, \quad  [D, V_{(w, W})] = 0$, 
\item $[H, X_q] = 0,  \quad  [H, V_{(w, W)}] = 0$,
\item $[X_q, X_r] = \omega(q, r)H, \quad   [X_q, V_{(w, W)}]  = \hspace{-1pt}  X_s, \quad  s =   \hspace{-1pt}    2^{-1}\dot wq -w\dot q + Wq$, 
\item $[V_{(w, W)}, V_{(y, Y)}]  = V_{(z, Z)}, \quad   z = w\dot y - y\dot w, \quad  Z = YW - WY$. 
\end{itemize}
\end{corollary}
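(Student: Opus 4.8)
The plan is to derive both halves of the Corollary from Lemma~\ref{FundamentalLemma}. That lemma already writes an arbitrary conformal Killing field as
\[ V = bH + kD + X_q + V_{(w,W)}, \]
with $b,k$ real, $\ddot q + pq = 0$, and $(w,W)$ satisfying the two displayed linear equations, so the spanning assertion amounts to checking that each of $H$, $D$, $X_q$, $V_{(w,W)}$ is \emph{itself} a conformal Killing field. For $H=\partial_v$ this is immediate; for $D$ it is the standard homothety $\mathscr L_D G = 2G$; for $X_q$ a short Lie-derivative computation gives $\mathscr L_{X_q}G = 2\,x^T(\ddot q + pq)\,du^2$, which vanishes precisely under the Jacobi equation; and for $V_{(w,W)}$ it is exactly the list of component equations $E_{vv},\dots,E_{xx}$ produced in the proof of Lemma~\ref{FundamentalLemma}, read as equivalences with $b=k=0$, $q=0$. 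One also records the conformal factors: $S=0$ for $H$ and $X_q$, $S=2$ for $D$, and $S=\dot w$ for $V_{(w,W)}$.

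For the commutators I would lean on the dilation grading. One has $(\mathcal D_t)_*H = e^{2t}H$, $(\mathcal D_t)_*X_q = e^{t}X_q$, $(\mathcal D_t)_*V_{(w,W)} = V_{(w,W)}$, $(\mathcal D_t)_*D = D$, so $[D,\cdot] = \mathscr L_D$ multiplies $H$, $X_q$, $V_{(w,W)}$, $D$ by $-2$, $-1$, $0$, $0$, yielding $[D,H] = -2H$, $[D,X_q] = -X_q$, $[D,V_{(w,W)}] = 0$, $[D,D] = 0$ (each also confirmable by a direct coordinate computation). Moreover the decomposition above is a direct sum of $D$-eigenspaces --- eigenvalue $-2$ spanned by $H$, eigenvalue $-1$ by the $X_q$, eigenvalue $0$ by $D$ together with the $V_{(w,W)}$ --- so every bracket of two of these fields is forced into the appropriate eigenspace, and it only remains to pin down which element it is.

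The brackets among $H$, $X_q$, $V_{(w,W)}$ are then short computations in the coordinates $(u,v,x)$, using that these fields all have $v$-independent coefficients and that $H$, $X_q$ have no $\partial_u$-component. Thus $[H,X_q] = [H,V_{(w,W)}] = 0$ since $H=\partial_v$; in $[X_q,X_r]$ only the $\partial_v$-components contribute, giving $X_q(x^T\dot r)-X_r(x^T\dot q) = q^T\dot r - r^T\dot q = \omega(q,r)$, so $[X_q,X_r] = \omega(q,r)H$. For $[X_q,V_{(w,W)}]$ the $D$-eigenvalue is $-1$, so the bracket is some $X_s$; its $\partial_u$-component vanishes automatically and reading off the $\partial_x$-component gives $s = \tfrac12\dot w\,q - w\dot q + Wq$ directly. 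For $[V_{(w,W)},V_{(y,Y)}]$ the eigenvalue is $0$, so the bracket equals $\kappa D + V_{(z,Z)}$; but its $\partial_v$-component is $v$-independent while that of $D$ is $2v$, forcing $\kappa=0$, and then the $\partial_u$-component gives $z = w\dot y - y\dot w$ while the $Wx(\partial_x)$-type pieces bracket as in $\mathfrak{so}(\mathbb X)$ --- with the sign appropriate to linear vector fields, $[Ax(\partial_x),Bx(\partial_x)] = (BA-AB)x(\partial_x)$ --- to give $Z = YW - WY$.

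The main work is bookkeeping, and the one genuine point of care is exactly that $[X_q,V_{(w,W)}]$ and $[V_{(w,W)},V_{(y,Y)}]$ close into the stated forms with the correct signs and without stray $H$- or $D$-contributions; this is what the grading-plus-$v$-independence argument settles cleanly. The residual consistency checks --- that the $\partial_v$-component of $[X_q,V_{(w,W)}]$ really is $x^T\dot s$ (which uses $\ddot q + pq = 0$), that $s$ again solves the Jacobi equation, and that $(z,Z)$ again satisfies the two constraint equations --- need not be carried out by hand, since the bracket of two conformal Killing fields is again a conformal Killing field and hence automatically of the form guaranteed by Lemma~\ref{FundamentalLemma}.
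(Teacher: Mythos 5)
Your proposal is correct and follows essentially the same route the paper intends: the spanning statement is read off from Lemma~\ref{FundamentalLemma} (together with the easy check that each listed field is itself conformal Killing), and the commutators are verified by direct coordinate computation. The $\mathrm{ad}_D$-grading you use to organize the brackets, and the observation that closure of the conformal algebra spares you the residual consistency checks on $s$ and $(z,Z)$, are sound refinements of the same bookkeeping.
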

Note that the operators $H$ and $X_q$ as $q$ varies, span an {\em Heisenberg Lie algebra}, over the reals, of real dimension $2n + 1$, so for $n$ degrees of freedom.  Also the operators $D$, $H$ and $X_q$ as $q$ varies, span a Lie algebra, of real dimension $2n + 2$,  which we call the {\em conformal Heisenberg Lie algebra}; this algebra is graded by the operator $D$ and has first derived algebra its Heisenberg subalgebra, spanned by $H$ (grade minus two) and the $X_q$ (grade minus one).

Multiply the equation $0 = \dot{\tilde p}w + 2\tilde{p}\dot w +\tilde{p}W - W \tilde{p}$ by $\tilde{p}$ and take the trace.  The terms involving $W$ cancel because $W$ is skew.  Also put $Q = \textrm{tr}(\tilde{p})^2 \ge 0$.  Then  $Q$ is smooth and we have:
\[ 4\dot wQ + w\dot Q = 0, \quad (w^4 Q)^\cdot = 0, \quad  w^4 Q = Q_0.\]
Here $Q_0 \ge 0$ is a real constant.   Now, by assumption,  $\tilde{p}(u)$ is non-zero for each $u \in \mathbb{U}_0$,  a non-empty open interval, whence, the function $Q$ is positive  everywhere on $\mathbb{U}_0$. 

If now the constant $Q_0$ is zero, we infer that $w$ vanishes identically on $\mathbb{U}_0$.  But $w$ obeys also the third-order linear differential equation $ 0 = \dddot w + 4P\dot w + 2\dot Pw = 0$.  Since $w$ has trivial initial data $\left(w(u_0), \dot w(u_0), \ddot w(u_0)\right) = (0, 0, 0)$ at any base point $u_0\in \mathbb{U}_0$, $w$ must vanish identically on $\mathbb{U}$.  Thus we have proved:
\begin{lemma} Either $w$ vanishes identically on $\mathbb{U}$, or $Q_0 > 0$.
\end{lemma}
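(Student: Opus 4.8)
The plan is to boil the second structure equation of Lemma~\ref{FundamentalLemma},
\[ 0 = w\dot{\tilde p} + 2\dot w\,\tilde p + \tilde p W - W\tilde p, \]
down to a single scalar ODE by pairing it with $\tilde p$ under the trace form. First I would multiply on the left by $\tilde p$ and take the trace; because $W$ is skew and the trace is cyclic, $\tr(\tilde p\,\tilde p W) = \tr(\tilde p W\tilde p) = \tr(\tilde p^2 W)$, so the two $W$-terms cancel and one is left with $w\,\tr(\tilde p\dot{\tilde p}) + 2\dot w\,\tr(\tilde p^2) = 0$. Setting $Q = \tr(\tilde p^2)$ and using $\dot Q = 2\,\tr(\tilde p\dot{\tilde p})$, this becomes the first-order linear relation $w\dot Q + 4\dot w Q = 0$.

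Next I would recognize this as a conservation law with the obvious integrating factor: $(w^4 Q)^\cdot = w^3\bigl(w\dot Q + 4\dot w Q\bigr) = 0$, so $w^4 Q \equiv Q_0$ for a real constant $Q_0$, and $Q_0 \ge 0$ since $w^4 \ge 0$ and $Q = \tr(\tilde p^2)\ge 0$. Then comes the dichotomy: assume $Q_0 = 0$. On the non-empty open subinterval $\mathbb U_0$ where $\tilde p$ is non-vanishing we have $Q > 0$ (a real symmetric matrix $M$ with $\tr(M^2) = 0$ is zero), hence $w^4 = Q_0/Q = 0$ on $\mathbb U_0$, so $w$ vanishes identically on $\mathbb U_0$.

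Finally I would propagate the vanishing of $w$ from $\mathbb U_0$ to all of the (connected) interval $\mathbb U$ using the third-order linear ODE $\dddot w + 4P\dot w + 2\dot P w = 0$, which is the other condition supplied by Lemma~\ref{FundamentalLemma}: at any base point $u_0 \in \mathbb U_0$ one has $w(u_0) = \dot w(u_0) = \ddot w(u_0) = 0$, so uniqueness of solutions of linear ODEs forces $w \equiv 0$ on $\mathbb U$. The main point to get right here is not a genuine difficulty but rather the clean cancellation of the $W$-terms and the fact that $\mathbb U_0$ need not be all of $\mathbb U$, which is precisely why the ODE-propagation step --- rather than a direct conclusion from $w^4 Q = 0$ --- is needed.
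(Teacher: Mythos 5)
Your proof is correct and follows essentially the same route as the paper: multiply the structure equation by $\tilde p$, trace to kill the $W$-terms, integrate $(w^4Q)^\cdot=0$, and propagate $w\equiv 0$ from $\mathbb U_0$ to $\mathbb U$ via the third-order linear ODE with trivial initial data.
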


Now suppose that $w$ does not vanish identically on $\mathbb{U}$. Then $Q_0 > 0$.   So $Q$ and $w^4$ are both smooth and positive everywhere on $\mathbb{U}$.  At worst after a sign change for  the pair $(w, W)$, we have that each of  $Q$ and $w$ is smooth and positive on $\mathbb{U}$.  Then by scaling the pair  $(w, W)$ by a suitable positive real constant, to make $Q_0 = 1$, without loss of generality, we may assume that:
\[ w = Q^{-\frac{1}{4}}.\]

We now substitute $w = Q^{-\frac{1}{4}}$ into the equation $\dddot w + 4P\dot w + 2\dot Pw =0$. If the resulting equation does not hold, then we are done: there are no conformal Killing vectors with $w \ne 0$.  On the other hand,  if the equation does hold, we then further substitute $w = Q^{-\frac{1}{4}}$ into the second equation:
\[ 0 = 2\tilde{p}\dot w + \dot{\tilde p}w +\tilde{p}W - W \tilde{p}.\]
If now the resulting equation holds for some constant $W = Z \in \mathfrak{so}(\mathbb{X})$ we are done and the pair $ (w, W) = (Q^{-\frac{1}{4}}, Z)$ gives a conformal Killing vector, $V_{(w, W)}$.  The general solution is then $(w, W) = (tQ^{-\frac{1}{4}}, tZ + Y)$, where $t$ is a real constant and $Y$ is a constant skew endomorphism of $\mathbb{X}$, such that the equation $pY - Y p = 0$ holds. 
\begin{definition}Denote by $\frak{W}$ the space of all conformal Killing fields of the conformally curved  Brinkmann plane wave,  $(\mathbb{M}, G)$,  and by $\frak{W}_0$ the subspace of  $\frak{W}$ consisting of all conformal Killing fields with $w = 0$.
\end{definition} 
Then we have proved:
\begin{lemma}
	 Given that  the Brinkmann plane wave,  $(\mathbb{M}, G)$,  is conformally curved, the dimension of the quotient space $\frak{W}/\frak{W}_0$ is at most one. 
\end{lemma}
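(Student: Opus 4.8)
The plan is to show that the function $w$ appearing in Lemma~\ref{FundamentalLemma} is determined by the metric up to a single overall real scalar; once that is known, the assertion is immediate linear algebra. By Lemma~\ref{FundamentalLemma}, the $\partial_u$-component of any conformal Killing field $V\in\frak{W}$ is a function $w=w(u)$ of $u$ alone (the coefficients called $q_0$ and $y$ in that proof were forced to vanish). Hence $\Phi\colon\frak{W}\to C^\infty(\mathbb{U},\mathbb{R})$, $\Phi(V)=w$, is a well-defined $\mathbb{R}$-linear map whose kernel is, by the definition of $\frak{W}_0$, exactly $\frak{W}_0$. Thus $\frak{W}/\frak{W}_0\cong\im\Phi=:\mathcal{V}\subseteq C^\infty(\mathbb{U},\mathbb{R})$, and it suffices to prove $\dim_\mathbb{R}\mathcal{V}\le1$.

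Next I would exploit the conserved quantity already extracted above. Let $Q=\tr(\tilde p^2)\ge0$, the nonnegative smooth function from the discussion above. Given $w\in\mathcal{V}$, arising from a pair $(w,W)$ with $W$ constant and skew, multiply the structural identity $0=\dot{\tilde p}w+2\tilde p\dot w+\tilde pW-W\tilde p$ by $\tilde p$ and take the trace; the $W$-terms cancel by skewness and one obtains $(w^4Q)^\cdot=0$, so $w^4Q=Q_0$ for a real constant $Q_0\ge0$. By the Lemma immediately preceding, $Q_0=0$ forces $w\equiv0$ on $\mathbb{U}$. Hence if $w\in\mathcal{V}$ is not identically zero, then $Q_0>0$; then $Q=Q_0\,w^{-4}$ is finite and nowhere zero, so $Q>0$ on all of $\mathbb{U}$, and $w^2=(Q_0/Q)^{1/2}$. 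Since $w$ is continuous and nowhere zero on the connected interval $\mathbb{U}$, its sign is constant, so $w=\varepsilon\,Q_0^{1/4}\,Q^{-1/4}$ for some fixed $\varepsilon\in\{\pm1\}$.

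Finally I would conclude. If $\mathcal{V}=\{0\}$ the bound is trivial. Otherwise some nonzero $w\in\mathcal{V}$ exists, forcing $Q>0$ throughout $\mathbb{U}$, so that $Q^{-1/4}$ is a fixed smooth function of $u$ depending only on the metric, and the previous step shows that every element of $\mathcal{V}$, including $0$, is a real multiple of $Q^{-1/4}$. Therefore $\mathcal{V}\subseteq\mathbb{R}\cdot Q^{-1/4}$, whence $\dim_\mathbb{R}(\frak{W}/\frak{W}_0)=\dim_\mathbb{R}\mathcal{V}\le1$. I do not expect a genuine obstacle here: the analytic heart — the conserved quantity $w^4Q=Q_0$ together with the dichotomy $Q_0=0\Rightarrow w\equiv0$ — is already established. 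The only points needing care are checking that $\Phi$ is indeed $\mathbb{R}$-linear with kernel exactly $\frak{W}_0$, and the connectedness argument that fixes the sign of $w$ globally, so that two nonzero members of $\mathcal{V}$ are proportional over the whole of $\mathbb{U}$ and not merely locally.
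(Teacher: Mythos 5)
Your proposal is correct and follows essentially the same route as the paper: both arguments rest on the conserved quantity $w^4Q=Q_0$ obtained by tracing $\tilde p$ against the structural identity, the dichotomy that $Q_0=0$ forces $w\equiv 0$ while $Q_0>0$ pins $w$ down to $\pm Q_0^{1/4}Q^{-1/4}$, and the observation that $w$ therefore ranges over a line in $C^\infty(\mathbb U,\mathbb R)$. Your explicit formalization via the linear map $\Phi$ with kernel $\frak{W}_0$ is a harmless repackaging of what the paper does implicitly.
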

\begin{definition}We say that a conformally curved Brinkmann spacetime has an {\em extra  conformal symmetry} if and only if $\frak{W}/\frak{W}_0$  has dimension one, if and only if there is a conformal Killing vector  field with everywhere non-zero component in the $u$-direction.
\end{definition}

We can be more precise. Denote by $\lambda$ the (non-negative) dimension of the space of solutions $Y$ of the equation $p(u)Y - Y p(u) = 0$, where $u$ varies over $\mathbb{U}$ and $Y \in \mathfrak{so}(\mathbb{X})$ is constant  (so $\lambda$ is the (finite) dimension of a certain proper Lie subgroup of the orthogonal group of $\mathbb{X}$).   Then we have proved:
\begin{lemma}
	 Given that  the Brinkmann plane wave,  $(\mathbb{M}, G)$,  is conformally curved,  the dimension of the algebra $\frak{W}$ of its conformal Killing vectors is either $2n + 2 + \lambda$ or $2n + 3 + \lambda$.  In the former case all conformal Killing vectors have vanishing $\partial_u$ component.  
	  In the latter case, there is an extra conformal symmetry.
	  \end{lemma}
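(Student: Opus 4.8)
The plan is to reduce the statement to an exact count of $\dim\mathfrak{W}_0$, using the structure theorem (Lemma~\ref{FundamentalLemma} and its Corollary) together with the lemma just proved that $\dim(\mathfrak{W}/\mathfrak{W}_0)\le 1$. Since every conformal Killing field is, by the Corollary, a linear combination of $D=2v\partial_v+x\partial_x$, $H=\partial_v$, the fields $X_q$ with $q\in\mathcal{P}$, and the fields $V_{(w,W)}$, and since $\mathfrak{W}_0$ is by definition the subspace on which the $\partial_u$-component $w$ vanishes identically, the identity $\dim\mathfrak{W}=\dim\mathfrak{W}_0+\dim(\mathfrak{W}/\mathfrak{W}_0)$ will give the two claimed values once $\dim\mathfrak{W}_0$ is known and the dichotomy $\dim(\mathfrak{W}/\mathfrak{W}_0)\in\{0,1\}$ is invoked.

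First I would pin down $\mathfrak{W}_0$. When $w\equiv 0$ the two remaining constraints on the pair $(w,W)$ collapse: the third-order equation $\dddot w+4P\dot w+2\dot Pw=0$ becomes vacuous, while $w\dot{\tilde p}+2\dot w\tilde p+\tilde pW-W\tilde p=0$ reduces to $\tilde pW=W\tilde p$, equivalently $pW=Wp$ for all $u$ (since $p-\tilde p=PI$ is central). By the definition of $\lambda$ the space of admissible constant skew $W$ is therefore exactly $\lambda$-dimensional, so $\mathfrak{W}_0$ is spanned by $D$, $H$, the $2n$-parameter family $\{X_q:q\in\mathcal{P}\}$, and the $\lambda$-parameter family $\{V_{(0,W)}:pW=Wp\}$. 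These are linearly independent, which is the only place a genuine verification is needed: in a hypothetical relation $aD+bH+X_q+V_{(0,W)}=0$, the coefficient of $v\partial_v$ forces $a=0$, then the $\partial_v$-coefficient $b+x^T\dot q$ forces $\dot q=0$ and $b=0$, and finally the $\partial_x$-coefficient $q+Wx$ forces $q=0$ and $W=0$. Hence $\dim\mathfrak{W}_0=2n+2+\lambda$.

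Finally I would assemble the dichotomy. By the preceding lemma $\dim(\mathfrak{W}/\mathfrak{W}_0)$ is $0$ or $1$. In the first case $\mathfrak{W}=\mathfrak{W}_0$ has dimension $2n+2+\lambda$, and by the very definition of $\mathfrak{W}_0$ every conformal Killing field has vanishing $\partial_u$-component. In the second case $\dim\mathfrak{W}=2n+3+\lambda$, and there is a conformal Killing field with $w\not\equiv 0$; the earlier computation giving $w^4Q=Q_0$ with $Q>0$ on all of $\mathbb{U}$ shows such a $w$ is then nowhere zero, which is exactly the stated notion of an extra conformal symmetry. The main obstacle is purely bookkeeping: making sure the $w=0$ specialization of the $(w,W)$-equations leaves precisely the commuting-rotation condition $pW=Wp$ (so the contribution is $\lambda$, not more), and recalling the normalization $w=Q^{-1/4}$ so that "extra conformal symmetry" is genuinely equivalent to $\dim(\mathfrak{W}/\mathfrak{W}_0)=1$ rather than merely implied by it.
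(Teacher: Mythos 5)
Your proposal is correct and follows essentially the same route as the paper: the lemma is obtained there by combining the normal form of Lemma~\ref{FundamentalLemma} (which gives $\mathfrak{W}_0=\operatorname{span}\{D,H,X_q,V_{(0,W)}\}$ of dimension $2n+2+\lambda$, since $w\equiv 0$ reduces the constraints to $pW=Wp$) with the preceding lemma that $\dim(\mathfrak{W}/\mathfrak{W}_0)\le 1$ and the $w^4Q=Q_0$ computation showing a nonzero $w$ is nowhere vanishing. Your explicit linear-independence check of the spanning fields is a small addition the paper leaves implicit, but the argument is the same.
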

Note that we have the commutator:
\[  [V_{(w, W)}, V_{(0, Y)}] =  V_{(0, Z)}, \quad  Z = YW - WY.\]	  
Note also that for a Killing vector field of the form $V_{(0, W)}$, the constant endomorphism $W$ must obey $Wp(u) = p(u)W$, for each $u \in \mathbb{U}$.	  

By  inspection of the various commutators we get immediately:
\begin{corollary}The first derived algebra $\mathfrak{W}_1 = [\mathfrak{W}, \mathfrak{W}]$ of the conformal symmetry algebra $\mathfrak{W}$ of a conformally curved Brinkmann spacetime  is spanned by the Heisenberg algebra, spanned by $H$ and by the $X_q$, as $q$ varies, obeying $\ddot q = pq$ and by some of the operators, $V_{(0, W)}$, where the constant skew endomorphism $W$ of $\mathbb{X}$ obeys the relation $Wp = pW$.
\end{corollary}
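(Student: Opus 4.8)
The plan is to obtain $\mathfrak W_1=[\mathfrak W,\mathfrak W]$ purely by inspection of the commutator table of the spanning fields $D$, $H$, $X_q$, $V_{(w,W)}$ displayed above, invoking only two earlier results to handle the single bracket that is not already of the desired shape: the classification of conformal Killing fields in Lemma~\ref{FundamentalLemma}, and the structural fact that $\mathfrak W/\mathfrak W_0$ has dimension at most one. Throughout, let $X(\mathcal P)$ denote the $2n$-dimensional span of the fields $X_q$, $q\in\mathcal P$.

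First I would check that $\mathfrak W_1$ already contains the full Heisenberg subalgebra $\langle H\rangle\oplus X(\mathcal P)$. Since $[D,X_q]=-X_q$, every $X_q$ lies in $\mathfrak W_1$, hence so does all of $X(\mathcal P)$; and since $\omega$ is nondegenerate on $\mathcal P$, there are $q,r$ with $\omega(q,r)\neq0$, so $[X_q,X_r]=\omega(q,r)H$ puts $H$ in $\mathfrak W_1$ as well (alternatively, $[D,H]=-2H$ does the same).

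Next I would verify the reverse inclusion, that every bracket of two spanning fields lies in $\langle H\rangle+X(\mathcal P)$ plus the span of the operators $V_{(0,W)}$ with $pW=Wp$. Running down the table: $[D,H]$, $[D,X_q]$ and $[X_q,X_r]$ land in $\langle H\rangle+X(\mathcal P)$; the brackets $[D,V_{(w,W)}]$, $[H,X_q]$ and $[H,V_{(w,W)}]$ vanish; and $[X_q,V_{(w,W)}]=X_s$ with $s=\tfrac12\dot wq-w\dot q+Wq$, which is again an $X$-field because the bracket of two conformal Killing fields is a conformal Killing field, hence of the form given in Lemma~\ref{FundamentalLemma}, so $\ddot s+ps=0$ automatically. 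I would also note that no bracket in the table produces a term along $D$, so $D\notin\mathfrak W_1$. This leaves only $[V_{(w,W)},V_{(y,Y)}]=V_{(z,Z)}$ with $z=w\dot y-y\dot w$ and $Z=YW-WY$, for which I must show $z\equiv0$ and $pZ=Zp$.

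The vanishing $z\equiv0$ is the crux, and is where the conformal curvature hypothesis is actually used. If the spacetime admits no extra conformal symmetry, every conformal Killing field has vanishing $w$-component and $z\equiv0$ trivially. Otherwise $Q_0>0$; then $Q$ is strictly positive throughout $\mathbb U$ and, after the normalization already performed, the $w$-component of every conformal Killing field is a real constant multiple of the single globally defined function $w_0=Q^{-1/4}$. Writing $w=t_1w_0$ and $y=t_2w_0$ then gives $z=t_1t_2(w_0\dot w_0-w_0\dot w_0)=0$, since the two functions are proportional. As for $pZ=Zp$: being a bracket of conformal Killing fields, $V_{(0,Z)}$ is itself a conformal Killing field, so it satisfies the two constraints of Lemma~\ref{FundamentalLemma} with $(w,W)=(0,Z)$; with $w=0$ the third-order equation is vacuous and the remaining one collapses to $\tilde pZ-Z\tilde p=0$, which for skew $Z$ is equivalent to $pZ=Zp$. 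Combining the two inclusions gives exactly the asserted description of $\mathfrak W_1$. I expect the only genuine obstacle to be pinning down $z\equiv0$ correctly --- i.e.\ marshalling the one-dimensionality of $\mathfrak W/\mathfrak W_0$ together with the normalization $w_0=Q^{-1/4}$ --- with everything else reducing to mechanical bracket-chasing.
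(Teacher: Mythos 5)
Your proposal is correct and follows essentially the same route as the paper: the paper's proof is the phrase ``by inspection of the various commutators,'' relying on exactly the two facts you invoke --- the one-dimensionality of $\mathfrak W/\mathfrak W_0$ (which forces the $w$-components of any two conformal Killing fields to be proportional multiples of $Q^{-1/4}$, hence $z=w\dot y-y\dot w\equiv 0$) and the observation, stated just before the corollary, that a Killing field $V_{(0,W)}$ requires $Wp=pW$. Your write-up simply makes that inspection explicit.
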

The commutators for the derived algebra, $\mathfrak{W}_1$, are now:
\[  [H, X_q] = 0, \quad  [X_q, X_r] = \omega(q, r)H, \]
\[  [H, V_{(0, W)}] = 0,  \quad  [X_q, V_{(0, W})]  =  X_{(Wq)},  \]
\[ [V_{(0, W)}, V_{(0, Y)}] = V_{(0, Z)}, \quad  Z = YW - WY.\]
\begin{definition}Denote by $\frak{H}$ the one-dimensional subalgebra spanned by $H$.
\end{definition}
\begin{corollary} The center of  $\mathfrak{W}_1$, the first derived algebra of the conformal Killing algebra $\mathfrak{W}$ of a conformally curved Brinkmann spacetime,  is one-dimensional, the algebra $\frak{H}$.    The centralizer in $\mathfrak{W}$ of $\frak{H}$ is spanned by $H$, by all the operators $X_q$ and by  all the operators $V_{w, W}$.
\end{corollary}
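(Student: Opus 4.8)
The plan is to deduce everything directly from the two commutator tables already displayed (for $\mathfrak{W}$ and for $\mathfrak{W}_1$), together with two structural facts recorded just above: that the pairing $\omega$ is nondegenerate on the solution space $\mathcal{P}$, and that the evaluation map $\mathcal{P}\to\mathbb{X}$, $q\mapsto q(u_1)$, is onto (arbitrary initial position data is realized by a $p$-Jacobi field). Note also that the spanning set of $\mathfrak{W}_1$ gives a vector-space direct sum $\mathfrak{W}_1=\mathfrak{H}\oplus\langle X_q:q\in\mathcal{P}\rangle\oplus\langle V_{(0,W)}\rangle$, since the $\partial_x$-component of $X_q$ depends only on $u$ while that of $V_{(0,W)}$ is linear in $x$, and $H=\partial_v$ is independent of both; in particular $H\ne 0$, so $\mathfrak{H}$ is genuinely one-dimensional.

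First I would treat the center of $\mathfrak{W}_1$. One inclusion is immediate: the table for $\mathfrak{W}_1$ gives $[H,X_q]=0$ and $[H,V_{(0,W)}]=0$, so $\mathfrak{H}\subseteq Z(\mathfrak{W}_1)$. For the reverse inclusion, write a central element as $Z=aH+X_q+V_{(0,W)}$ and bracket against a generic $X_r$: the table yields $[Z,X_r]=\omega(q,r)\,H-X_{Wr}$. Since the summands $\omega(q,r)H$ and $X_{Wr}$ lie in complementary pieces of the above direct sum, vanishing of $[Z,X_r]$ for every $r\in\mathcal{P}$ forces $\omega(q,r)=0$ for all $r$, hence $q=0$ by nondegeneracy of $\omega$, and $X_{Wr}=0$ for all $r$. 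As $q\mapsto X_q$ is injective, $X_{Wr}=0$ means the function $Wr$ vanishes identically; evaluating at $u_1$ and letting $r(u_1)$ range over all of $\mathbb{X}$ gives $W=0$. Thus $Z=aH\in\mathfrak{H}$ (and one checks $[aH,V_{(0,Y)}]=0$, so no further condition arises), proving $Z(\mathfrak{W}_1)=\mathfrak{H}$.

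For the centralizer of $\mathfrak{H}$ inside the full algebra $\mathfrak{W}$, I would use that $\mathfrak{W}$ is spanned by $D$, $H$, the $X_q$, and the $V_{(w,W)}$, and write an arbitrary element as $\mathcal{V}=cD+aH+X_q+V_{(w,W)}$. From the $\mathfrak{W}$-table, $[X_q,H]=[V_{(w,W)},H]=0$ and $[D,H]=-2H$, so $[\mathcal{V},H]=-2cH$; since $H\ne 0$, this vanishes if and only if $c=0$. Conversely every element of $\langle H,X_q,V_{(w,W)}\rangle$ commutes with $H$, again straight from the table. Hence the centralizer of $\mathfrak{H}$ in $\mathfrak{W}$ is exactly the span of $H$, the $X_q$, and the $V_{(w,W)}$. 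The only step that is not pure bookkeeping with the tables is the passage from $X_{Wr}\equiv 0$ (all $r$) to $W=0$, which is where surjectivity of the evaluation map must be invoked; everything else is immediate, so I do not anticipate any real obstacle.
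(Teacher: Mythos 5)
Your proof is correct and follows essentially the same route as the paper: bracket a putative central element $aH+X_q+V_{(0,W)}$ against a generic $X_r$ to get $\omega(q,r)H - X_{Wr}=0$, then kill $W$ by surjectivity of the evaluation map $\mathcal P\to\mathbb X$ and kill $q$ by nondegeneracy of the symplectic form $\omega$. Your explicit appeal to the direct-sum decomposition of $\mathfrak W_1$ (to split the bracket into independently vanishing pieces) and your explicit treatment of the centralizer via $[D,H]=-2H$ are small points the paper leaves implicit, but the argument is the same.
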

For this let $\eta = a H + X_q + V_{(0, W)}$ be a central element of $\mathfrak{W}_1$, with $a$ a real constant, $q\in \mathcal{P}$ and $W$ a constant skew endomorphism of $\mathbb{X}$. Then we have, in particular, $0 = [\eta, X_r] = \omega(q, r)H - X_{(Wr)}$, for any $r \in \mathcal{P}$.  

So first we get $Wr = 0$, for all $r \in \mathcal{P}$. But at any point $u_0$, the values of $r(u_0)$ range over all of $\mathbb{X}$, as $r$ ranges over $\mathcal{P}$ and then for any non-zero constant skew endomorphism $W$ of $\mathbb{X}$, there exists an element $r$ of $\mathbb{X}$,  such that $Wr(u_0) \ne 0$.  So $W = 0$.  Then $\omega(q, r) = 0$, for all $r\in \mathcal{P}$.  But $\omega$ is a symplectic form on $\mathcal{P}$, so $q = 0$.  So $\eta = aH \in \frak{H}$.   Conversely every element of $\frak{H}$ is clearly central, so we are done.

We next closely analyze the case with an extra conformal symmetry.  As shown above, we may take this vector field to be:
\[ V  = w\partial_u + 4^{-1} \ddot w x^Tx \partial_v + 2^{-1}\dot wx(\partial_x)  + (Wx)(\partial_x).  \]
Here we have $w$ smooth and positive everywhere on $\mathbb{U}$, $W$ a constant element of $\mathfrak{so}(\mathbb{X})$ and the following equations are satisfied:
\begin{itemize}
	\item  $0 = \dddot w + 4P\dot w + 2\dot Pw$,
	\item $0 = 2\tilde{p}\dot w + \dot{\tilde p}w +\tilde{p}W - W \tilde{p}$.
	\item $\mathcal{L}_V G = \dot w G$.
\end{itemize}
Recall the Brinkmann metric:
\[ G = 2\,du\, dv  + x^Tp(u)x\, du^2 - dx^Tdx.\]
Now put:
\[ dU = w^{-1}(u)du.\]
Since $w$ is smooth and everywhere positive,  this gives a well-defined smooth co-ordinate change.   Then we have:
\[ G = 2w du\, dv + x^Tq(U)x\, du^2 - dx^T\,dx.\]
Here $q(U) = p(u)w(u)^2$.  Put $w(u) = (z(U))^2$, for $z(U)$ smooth and everywhere positive on $\bar{\mathbb U}$, a (connected, open) smoothly diffeomorphic image of $\mathbb{U}$.  Then put $x = z(U)X$. Then,  omitting the $U$ arguments and using prime now for $U$ derivatives, we get:
\[ z^{-2}G = 2\,du\, dv + X^T(q - (\dot zz^{-1})^2 I)X\,du^2   -  dX^TdX - 2 \dot zz^{-1} X^TdX\, du\,.\]
Finally put:
\[ v = V + 2^{-1} \dot zz^{-1}X^TX.\]
Then we get:
\[ z^{-2}G = 2\,du\, dv + X^TrX\, du^2 -  dX^TdX, \]
\[ r = q + z^{-2}(z\ddot z - 2 (\dot z)^2)I.\]
So now our conformal Killing vector field $V$ has:
\[ V(U) = V du\,\frac{dU}{du} = w^{-1} V du\, = 1.\]
So in the new co-ordinate system, our conformal Killing vector has $w(U) = 1$, so may be taken to be just:
\[  V  = \partial_U + (WX)(\partial_X).  \]
Also we have, since $w(U) = 1$:
\[ 0 =  \dddot w +4P\dot w + 2\dot Pw = 2\dot P.\]
So $P$ is constant.  
Renaming appropriately,  our metric is now globally conformal to the following metric (with conformal factor $z^{-2}$):
\[ G = 2\,du\, dv + X^T p(u)X\, du^2 - dX^T dX.\]
Here we also rename $\bar{\mathbb U}$ as $\mathbb{U}$.  The (global) non-trivial conformal Killing vector $V$ may be taken to be $\partial_u + (WX)(\partial_X)$, with $W \in \mathfrak{so}(\mathbb{X})$ constant.  Finally we have, for constant $W \in \mathfrak{so}(\mathbb{X}, g)$:
 \[ 0 = \dot{\tilde p} +\tilde{p}W - W \tilde{p}.\]
 So we have now a standard Brinkmann metric:
\[ G = 2\,du\, dv + X^TR^T(u)qR(u)X\, du^2 - dX^T dX, \]
\[ p = R^T(u)qR(u), \quad  R(u) = e^{-Wu}. \]
Here $q$ is a {\em constant}  constant symmetric endomorphism of $\mathbb{X}$, not a multiple of the identity automorphism,   and $R(u)$ is a (smooth, even real analytic) {\em one-parameter subgroup} of the orthogonal group $\mathfrak{so}(\mathbb{X})$. Note that in this form this metric naturally extends as a real-analytic metric on $\mathbb{R} \times \mathbb{R} \times \mathbb{X}$.

Lastly put $x = R(u)X$ and $\dot R(u) = -WR(u)$, with $W$ skew and constant.  Then we get:
\begin{align*}
 G &= 2\,du\, dv + x^Tqx\, du^2 - d(x^TR) d(R^{-1}x)\\
  &=  2\,du\, dv + x^Tqx\,  du^2 - (dx^T - x^TW  du)(dx + Wx\, du)\\
 &= 2\,du\, dv - 2\,du\,(x^T W\,dx) + x^T(q + W^2)x \, du^2 \\
 &=  2\,du\, dv - 2\,du\,(x^T W\,dx) + x^Tpx  du^2 - dx^T dx\\
 p &= q + W^2.\\
\end{align*}
We have proved: 
\begin{theorem}
Every conformally curved Brinkmann plane wave with an {\em extra  conformal symmetry}  is conformally isometric (via a conformal factor commuting with the action of the dilations) to a real analytic metric $G_\alpha$ or to $G_\beta$,  of the following forms:
\[  G_\alpha =   du\,\alpha - dx^Tdx, \quad  \alpha = 2\,dv + (x^T e^{\omega u} q e^{-\omega u} x)  du\,.\]
\[ G_\beta =   du\,\beta - dx^Tdx, \quad  \beta = 2\,dv - 2x^T\omega dx + (x^Tp x)  du\,.\]
Here each of  $\omega$,  $p$ and $q$ is a  {\em constant  endomorphism} of the Euclidean space $\mathbb{X}$.  Also $\omega$ is skew, whereas each of  $p$ and $q$ is symmetric.  Finally $q = p - \omega^2$ is not a multiple of the identity.
\end{theorem}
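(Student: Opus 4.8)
The proof is essentially a summary of the reduction carried out in the paragraphs preceding the statement; the plan is to organize that reduction into three moves and to check that the two hypotheses (conformal curvature, one extra conformal symmetry) are exactly what each move consumes.

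\emph{Straightening the conformal Killing field.} By the hypotheses together with Lemma \ref{FundamentalLemma} and the lemmas that follow it, there is a conformal Killing field
\[ V = w\partial_u + 4^{-1}\ddot w\,(x^Tx)\partial_v + 2^{-1}\dot w\,x(\partial_x) + (Wx)(\partial_x), \]
with $w$ smooth and (after a sign change) strictly positive on $\mathbb U$, $W\in\mathfrak{so}(\mathbb X)$ constant, $\mathscr L_V G=\dot w\,G$, and the structure equations $\dddot w+4P\dot w+2\dot Pw=0$ and $2\tilde p\dot w+\dot{\tilde p}w+\tilde pW-W\tilde p=0$. Since $w>0$ the reparametrization $dU=w^{-1}\,du$ is a genuine diffeomorphism of $\mathbb U$ onto an interval, and in the $U$-coordinate the $\partial_u$-component of $V$ becomes $1$.

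\emph{Absorbing the distortion into a conformal factor.} Writing $w=z^2$ with $z=z(U)>0$, the rescaling $x=zX$ produces a $du\,X^TdX$ cross term which is removed by the shear $v\mapsto v+2^{-1}\dot z z^{-1}X^TX$, at the cost of an overall factor $z^{-2}$. The essential observation — this is why the conformal factor commutes with $\mathcal D_t:(u,v,x)\mapsto(u,e^{2t}v,e^tx)$, as the theorem asserts — is that $z^{-2}$ depends on $U$ alone. After these changes the metric is conformal to a Brinkmann metric $2\,du\,dv+X^Tp(u)X\,du^2-dX^TdX$ whose extra conformal Killing field is just $\partial_u+(WX)(\partial_X)$ (its $w$ is now identically $1$), and the two structure equations collapse to $\dot P=0$ and $\dot{\tilde p}=W\tilde p-\tilde pW$.

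\emph{Integrating, and passing between the two forms.} With $P$ a constant and $PI$ central, the matrix equation integrates to $p(u)=e^{\omega u}q\,e^{-\omega u}$ with $\omega:=W$ and $q:=p(u_0)$ at any base point $u_0$; since $\tilde p$ is not identically zero while $\tilde p(u)=e^{\omega(u-u_0)}\tilde p(u_0)e^{-\omega(u-u_0)}$, the trace-free part $\tilde p(u_0)$ of $q$ is nonzero, so $q$ is not a multiple of $I$. As $\omega$ and $q$ are constant, $p$ is real-analytic and the metric extends real-analytically to $\mathbb R\times\mathbb R\times\mathbb X$; this is the form $G_\alpha$. Finally, the $u$-dependent orthogonal frame rotation $x=e^{-\omega u}X$ carries $G_\alpha$ to a metric of the form $G_\beta$ with $p=q+\omega^2$, that is $q=p-\omega^2$ (the sign of the cross term $x^T\omega\,dx$ is immaterial, since $\omega$ may be replaced by $-\omega$). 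Since $G_\alpha$ and $G_\beta$ are thus isometric, the metric is conformally isometric to each. The only steps requiring an actual computation are tracking the $du^2$-coefficient through the three substitutions and the final frame rotation, both routine matrix manipulations; the content of the theorem is that conformal curvature plus one extra conformal symmetry forces, after the elementary normalization of $V$, the rigid evolution $\dot{\tilde p}=[W,\tilde p]$ with $P$ constant.
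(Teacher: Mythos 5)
Your proposal is correct and follows essentially the same route as the paper: straighten the extra conformal Killing field by the reparametrization $dU=w^{-1}du$, absorb the distortion via $x=zX$ with $w=z^2$ and the accompanying shear in $v$ (picking up the $u$-dependent conformal factor $z^{-2}$, hence commuting with the dilation), observe that the structure equations reduce to $\dot P=0$ and $\dot{\tilde p}=[W,\tilde p]$, and integrate to obtain $G_\alpha$, passing to $G_\beta$ by the rotation $x=e^{-\omega u}X$. Your additional remark justifying that $q$ is not a multiple of the identity (conjugation by $e^{\omega u}$ preserves vanishing of the trace-free part, so conformal curvature forces $\tilde q\neq 0$) is a detail the paper leaves implicit, and is correct.
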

Note that, a priori, the given plane wave may not use the whole $u$-range, but the metric $G$,  just given,  allows $u$ to be any real.

The conformal symmetries $\mathfrak{W}$ of $G_\beta$ are as follows:
\[ J = \partial_u, \quad H = \partial_v,  \quad D = 2v\partial_v + x\partial_x,  \quad X_{B} = x^T(\dot B - \omega B)\partial_v + B \partial_x,  \quad L_{W} =  (Wx)(\partial_x).\]
Here $B(u)$ is real analytic,  taking values in $\mathbb{X}$ and we have:
\[ \ddot B - 2\omega \dot B +  pB = 0. \]
Also $W$ is a constant skew endomorphism of $\mathbb{X}$,  so a constant element of $\mathfrak{so}(\mathbb{X})$, which must preserve both $p$ and $\omega$.
\[ pW = Wp, \quad  \omega W = W\omega.\]
Note that in this form it is natural to allow $u$ to range over the entire real line, with $B$ real analytic on $\mathbb{R}$.  {\em Then the conformal symmetry group is transitive}.  Also in this form all symmetries are Killing symmetries except for the dilations. 

 The Lie algebra commutators are:
\[ [J, D] = 0, \quad [J, H] = 0,\quad  [J, X_B] = X_{\dot B},  \quad [J, L_W] = 0, \]
\[ [D, H] = - 2H, \quad [D, X_B] = - X_B, \quad [D, L_W] = 0, \]
\[ [H, X_B] = 0, \quad [H, L_W]   = 0, \]
\[ [X_{B}, X_{C}] = \mu(B, C) H, \quad [X_B, L_W] = X_{WB}, \]
\[ [L_W, L_Y] = L_Z, \quad  Z = YW - WY.\] 
Here $W$ and $Y$ are skew endomorphisms of $\mathbb{X}$ that commute with $p$ and $\omega$.  Also $B$ and $C$ are real analytic in $u$,  taking values in $\mathbb{X}$, such that $\ddot B - 2\omega \dot B +  pB = 0$ and $\ddot C - 2\omega \dot C +  pC = 0$. Finally $\mu(B, C) = B^T \dot C - C^T \dot B - B^T\omega C + C^T \omega B$ is a real constant skew bilinear over the reals in $B$ and $C$.
By inspection, we see that the first derived algebra $\mathfrak{W}_1$ is spanned by $H$, by all $X_B$ and by the first derived algebra of the subalgebra of $\mathfrak{so}(\mathbb{X})$, that preserves both $p$ and $\omega$.  In particular, its center is spanned by $H$.

Finally the isotropy algebra of the conformal symmetry group at the point $(u, v, x) = (u_0, 0, 0)$ on the central null geodesic is spanned by $D$ and by all $L_W$ and by all $X_B$, such that $B(u_0) = 0$.
\begin{definition} A conformally curved plane wave spacetime with an extra conformal symmetry, whose conformal symmetry group is {\em  transitive}, is called a {\em microcosm}.
\end{definition}
\begin{corollary} The general microcosm, $(\mathbb{M}, G)$ may be written, on the manifold $\mathbb{M} = \mathbb{R} \times \mathbb{R} \times \mathbb{X}$, where the real Euclidean vector space $\mathbb{X}$ has dimension $n > 1$, with metric $G$ given as follows, up to a positive conformal factor:
\[ G = 2\,du\,\beta - dx^T \,dx, \]
\[ \beta = 2 dv - 2x^T \omega \,dx + (x^T px)  du\,.\]
Here $\omega$ and $p$ are endomorphisms of $\mathbb{X}$, with $\omega$ skew and $p$ symmetric. Also $p - \omega^2$ is not pure trace.
\end{corollary}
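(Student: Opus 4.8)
The plan is to read this off the preceding Theorem together with the Brinkmann--Alekseevsky correspondence of Lemma~\ref{GAGBIso}. I would start by letting $(\mathbb M,G)$ be a microcosm; by definition it is conformally curved and admits an extra conformal symmetry, so $n>1$ is automatic and the Theorem applies: $(\mathbb M,G)$ is conformally isometric, via a positive conformal factor commuting with the dilations, to one of the two real-analytic model metrics $G_\alpha$ or $G_\beta$ on $\mathbb R\times\mathbb R\times\mathbb X$, and in both cases $q=p-\omega^2$ is not a multiple of the identity. If the model is $G_\beta$ there is nothing left to do, since $G_\beta$ is exactly the displayed normal form and $p-\omega^2=q$ is not pure trace.

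The remaining work is to dispose of the $G_\alpha$ case. Here $G_\alpha=2\,du\,dv+x^Te^{\omega u}qe^{-\omega u}x\,du^2-dx^Tdx$ is the homogeneous Brinkmann metric $\mathcal G_B$ with tidal curvature $e^{\omega u}qe^{-\omega u}$, so I would apply the substitution $x=e^{\omega u}X$ exactly as in Lemma~\ref{GAGBIso} (equivalently Lemma~\ref{MicrocosmAB}), which carries it isometrically onto the Alekseevsky metric $\mathcal G_A(-\omega,q+\omega^2)$ --- that is, onto $G_\beta$ for the choice $p=q+\omega^2$ (relabelling $-\omega\mapsto\omega$ if one likes, which is harmless as both are skew). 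Since this coordinate change is an isomorphism of Penrose limits it leaves the manifold $\mathbb R\times\mathbb R\times\mathbb X$ and the positivity of the conformal factor intact, and for this $G_\beta$ one again has $p-\omega^2=q$, not a multiple of the identity. In either case $G$ has been written, up to a positive conformal factor, in the asserted form with $p-\omega^2$ not pure trace.

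The only point that needs a little care is the bookkeeping of the $\omega^2$-shift relating the Brinkmann and Alekseevsky presentations, together with the observation that ``not a multiple of the identity'' for $q$ is literally the condition ``$p-\omega^2$ not pure trace''; everything else is a direct appeal to statements already proved. If one wants the converse as well, one notes that any metric of the displayed form with $p-\omega^2$ not pure trace is a microcosm: by Lemma~\ref{GAGBIso} it is isometric to $\mathcal G_B(\omega,p-\omega^2)$, whose trace-free tidal curvature $e^{-\omega u}\,\widetilde{(p-\omega^2)}\,e^{\omega u}$ is nonvanishing (hence it is conformally curved, as $n>1$), it carries the extra conformal Killing field $J=\partial_u$, and its conformal symmetry group is transitive by the commutator computation already performed for $G_\beta$.
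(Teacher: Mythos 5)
Your proposal is correct and follows essentially the same route as the paper: the paper derives the corollary by combining the preceding theorem (microcosm $\Rightarrow$ conformal to $G_\alpha$ or $G_\beta$) with the explicit substitution $x=R(u)X$, $R(u)=e^{-Wu}$, carried out in the text just before the theorem, which converts $G_\alpha$ into the Alekseevsky form $G_\beta$ with $p=q+W^2$; your appeal to Lemma~\ref{GAGBIso} is exactly that computation, and your observation that ``$q$ not a multiple of the identity'' is literally ``$p-\omega^2$ not pure trace'' is the same bookkeeping. The converse direction you sketch is a harmless bonus not demanded by the statement.
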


\subsection{Brinkmann conformal isometries}

We now prove:
\begin{theorem}
  Let $(\mathbb{M}, G_\beta(p))$ and $(\bar{\mathbb M}, G_\beta(\bar p))$ be conformally isometric Brinkmann plane waves, and suppose that $G_\beta(p)$ is conformally curved.  Then the conformal isometry $\phi:\mathbb M\to\bar{\mathbb M}$ can be factored as
  $$\phi = \alpha\circ\rho\circ\mu $$
  where:
  \begin{itemize}
  \item $\mu:\mathbb M\to\bar{\mathbb M}$ is a mapping
    $$\mu(u,v,x) = \left(U(u),\quad v\op{sgn}\dot U + x^Tx\frac{\ddot U}{4|\dot U|},\quad \frac{x}{|\dot U|^{1/2}}\right)$$
    induced by a reparametrization $U(u)$ of the coordinate $u$;
  \item $\rho$ belongs to the isometry group of $G_\beta(\bar p)$;
  \item $\alpha$ is the constant transformation, linear on each wave front
    $$\alpha(u,v,x) = (u,\ a^2v,\ aAx)$$
    where $a\not=0$, $a\in\mathbb R$, and $A\in\op{O}(\mathbb X)$.
  \end{itemize}
  \end{theorem}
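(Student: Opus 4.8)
The plan is to peel the three factors off $\phi$ from right to left, using at each stage a structural feature of conformal plane waves established above: the reparametrization map $\mu$ will absorb the freedom in the $u$–coordinate, the isometry $\rho$ will absorb the freedom to move the central null geodesic, and $\alpha$ will be whatever is left over.

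First I would pin down the $u$–coordinate. Because $\phi$ is a conformal isometry, $\phi_*$ carries the conformal Killing algebra $\mathfrak{W}$ of $G_\beta(p)$ isomorphically onto that of $G_\beta(\bar p)$, hence the first derived algebra onto the first derived algebra and, in particular, its one–dimensional centre $\langle\partial_v\rangle$ (one–dimensionality being exactly where the conformal curvature hypothesis enters, by the corollary above) onto $\langle\partial_{\bar v}\rangle$. Thus $\phi_*\partial_v=c\,\partial_{\bar v}$ for a nonzero constant $c$, so $\phi$ maps the orthogonal distribution $\partial_v^\perp$ onto $\partial_{\bar v}^\perp$; these distributions are integrable with leaves the wave fronts $\{u=\mathrm{const}\}$ and $\{\bar u=\mathrm{const}\}$, so $\bar u\circ\phi=U(u)$ for a diffeomorphism $U\colon\mathbb{U}\to\bar{\mathbb{U}}$. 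I would then form the map $\mu\colon\mathbb{M}\to\bar{\mathbb{M}}$ of the statement built from this $U$; a direct computation — of precisely the kind carried out in \S\ref{BrinkmannConformalSymmetries} when passing to the coordinate $U$ with $dU=w^{-1}\,du$ — shows that $\mu^*$ of a Brinkmann metric equals a positive function of $u$ times a Brinkmann metric, so $\mu$ is a conformal isometry onto $\bar{\mathbb{M}}$, and $\phi\circ\mu^{-1}$ is a conformal self-isometry of $\bar{\mathbb{M}}$ that now fixes $\bar u$.

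Next I would exploit transitivity on transverse null geodesics. Since $U$ is a diffeomorphism, $\phi(\gamma)$ meets every wave front of $\bar{\mathbb{M}}$, so it is a transverse null geodesic; by the transitivity of the Heisenberg subgroup of $\operatorname{Isom}(G_\beta(p))$ on transverse null geodesics (established in \cite{SEPI}), there is a Heisenberg isometry $\sigma$ of $G_\beta(p)$ with $(\phi\circ\sigma)(\gamma)=\bar\gamma$. As $\sigma$ fixes $u$, the map $\psi:=(\phi\circ\sigma)\circ\mu^{-1}$ is a conformal self-isometry of $\bar{\mathbb{M}}$ that fixes $\bar u$ and fixes $\bar\gamma$ \emph{pointwise}. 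Granting the identification $\psi=\alpha$ (see below), one reads off $\phi=\alpha\circ\mu\circ\sigma^{-1}=\alpha\circ\bigl(\mu\circ\sigma^{-1}\circ\mu^{-1}\bigr)\circ\mu$, and the middle factor $\rho:=\mu\circ\sigma^{-1}\circ\mu^{-1}$ — a conjugate of a Heisenberg isometry by the conformal map $\mu$ — is again a Heisenberg isometry of $G_\beta(\bar p)$, because conjugation by $\mu$ intertwines the two Heisenberg groups via the transformation law for Jacobi fields.

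The hard part will be the rigidity needed to identify $\psi$. Writing $\psi^*G_\beta(\bar p)=e^{2\Omega}G_\beta(p'')$, with $G_\beta(p'')$ still conformally curved (conformal curvature being a conformal invariant), I would compare components of this identity. The vanishing of the $dv^2$ and $dv\,dx$ terms on the right forces the $\bar x$–component of $\psi$ to be independent of $v$; the remaining components then force $e^{2\Omega}$ to be a function of $u$ alone and the $\bar x$–component to be $e^{\Omega(u)}R(u)x$ with $R(u)\in\operatorname{O}(\mathbb{X})$, after which integrability (symmetry of the Hessian in $x$) of the $\bar v$–component forces $R$ constant, and the absence of a term linear in $v$ in the $du^2$ component forces $\Omega$ constant. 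Hence $\psi$ is a homothety of the form $(u,v,x)\mapsto(u,a^2v,aAx)$ with $a\neq0$ and $A\in\operatorname{O}(\mathbb{X})$, i.e. $\psi=\alpha$; equivalently, once $\Omega$ is known to be constant one may rescale $(v,x)$ and quote Theorem \ref{BrinkmannDiffeos}, whose affine reparametrization is trivial here because $\psi$ fixes $\bar u$. Besides the bookkeeping of the composition, the one genuinely delicate point is tracking signs when $\dot U<0$, which is precisely why $\mu$ in the statement carries the factor $\operatorname{sgn}\dot U$ together with absolute values; I expect this rigidity computation, rather than any single conceptual step, to be the main obstacle.
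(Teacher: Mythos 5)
Your proposal is correct and follows essentially the same route as the paper: one-dimensionality of the centre of the first derived conformal algebra forces $\phi_*\partial_v=c\,\partial_{\bar v}$ and hence $\bar u=U(u)$, the reparametrization is absorbed into $\mu$, Heisenberg transitivity moves the image of the central null geodesic onto $\bar\gamma$, and a component-by-component rigidity computation identifies the residual map as $\alpha$. The only (cosmetic) difference is that you apply the Heisenberg symmetry on the source side and conjugate by $\mu$ — which works because $\sigma$ preserves $u$ while the conformal factor of $\mu$ depends only on $u$ — whereas the paper applies it directly on the target side.
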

\begin{proof}

  Because both metrics are conformally curved, the center of the first derived algebra of their respective conformal Killing algebras is one-dimensional, spanned by $\partial_v$ and $\partial_{\bar v}$, so that $d\phi(\partial_v) = a\partial_{\bar v}$ for some constant $a\not=0$.  Therefore, $\phi$ maps each null hypersurface orthogonal to $\partial_v$ onto one orthogonal to $\partial_{\bar v}$.  Therefore, $\phi$ sends the distribution $du=0$ onto $d\bar u=0$.  That is, $\bar u$ is a non-singular function of $u$.  Moreover, the vector field $\partial_v$ maps to a constant multiple of $\partial_V$, we can assume without loss of generality, after absorbing a positive conformal factor and rescaling the pair $(v,x)$ to $(c^2v,cx)$ for some constant real $c\not=0$.

  We first deal with the $\mu$ part of the conformal isometry.  By replacing $\bar u$ by $-\bar u$ and $\bar v$ by $-\bar v$ if necessary, we can assume that $\dot{\bar u} >0$.  Let $L^2(\bar u)\,du = d\bar u$.  Then, denoting the derivative with respect to $U=\bar u$ with a prime, the diffeomorphism
  $$U=U(u), \quad x=L(U)^{-1}X, \quad v = V-2^{-1}X^TX L^{-1}L'$$
  is an isometry between $L(U)^2G_\beta(p(u))$ and $G_\beta(P(U))$ where
  $$P(U) = L(U)^{-4}p(u(U)) - L''(U)/L(U).$$
  That is,
  $$L(U)^2(2\,du\,dv + x^Tp(u)x\,du^2 - dx^Tdx) = 2\,dU\,dV + X^T P(U)X\,dU^2 - dX^TdX.$$
  So we can take $\bar u = u$ after absorbing a conformal factor depending only on $u$.

  The mapping $\phi$ maps the central null geodesic of $G_\beta(p)$ to a transverse null geodesic for $G_\beta(\bar p)$.  Applying a Heisenberg group symmetry of $G_\beta(\bar p)$, we can assume without loss of generality that $\phi$ maps the central null geodesic to itself.

  We now describe the remaining freedom in the diffeomorphism, which we write
  \[ (\bar u, \bar v, \bar x) = \Phi(u, v, x) = \left(u, V(u, v, x), X(u, v, x)\right).\]
  Because the first derived algebra of the conformal algebra is invariant under the conformal isometry, not only is the $u$ coordinate preserved, but the affine structure on the wave fronts is preserved.  Therefore, at fixed $u$, $V$ and $X$ are affine functions, which are thus also linear because they preserve the central null geodesic.  Also,
  $$\partial_v = \bar \partial_{\bar v} = V_v\bar \partial_{\bar v} + X_v(\bar \partial_{\bar x})$$
  which implies that $X_v = 0$ and $V_v = 1$.  So the transformation simplifies to:
  \[V = v + x^Tw(u), \qquad X = L(u)x\]
  where $w,L$ are smooth in $u$ and $L$ is an invertible matrix.
  Substituting gives $\dot w=0$ and
  \begin{align*}
    2\,du\,dV &+ X^TP(U)X\,du^2 - dX^TdX \\
              &= 2\,du\,dv + X^TQ(U)X\,du^2 + 2(w^T - x^TL^T\dot L)\,dx\,du - dx^TL^TLdx\\
    Q(U)      &= P(U) - L^{-T}\dot L^T\dot LL^{-1}
  \end{align*}
  From the $dx\,du$ term, $w=0, \dot L=0$.  From the $dx^Tdx$ term, $L^TL=I$.
\end{proof}

As a corollary, Theorem \ref{ShiftEquivalentBrinkmann} holds if ``isometry'' is replaced by ``conformal isometry'':
\begin{corollary}
  In the construction of \S\ref{Example}, functions $\alpha,\beta:\mathbb Z\to [0,1)$ are shift-equivalent if and only if the plane-wave metrics $G_{p_\alpha}$, $G_{p_\beta}$ are equivalent under conformal diffeomorphism.
\end{corollary}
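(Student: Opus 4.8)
The forward implication requires nothing new: if $\alpha,\beta$ are shift-equivalent then, as in the proof of Theorem \ref{ShiftEquivalentBrinkmann}, $G_{p_\alpha}$ and $G_{p_\beta}$ differ by a translation of the $u$-coordinate, which is in particular a conformal isometry. So the content is the converse, and the plan is to rerun the argument of Theorem \ref{ShiftEquivalentBrinkmann} with the structure theorem for conformal isometries of Brinkmann plane waves established above in place of Theorem \ref{BrinkmannDiffeos}.

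Suppose then that $\phi\colon(\mathbb M,G_{p_\alpha})\to(\mathbb M,G_{p_\beta})$ is a conformal isometry. The trace-free tidal curvature of $G_{p_\alpha}$ is $p_\alpha(u)\,\tilde p_0$, where $\tilde p_0$ is the fixed nonzero trace-free symmetric matrix used to build the vacuum metric $G_{p_\alpha}$; since $p_\alpha$ is positive at every non-integer $u$, this does not vanish identically, so $G_{p_\alpha}$ is conformally curved and the structure theorem applies: $\phi=\alpha\circ\rho\circ\mu$, with $\mu$ induced by a reparametrization $U(u)$, $\rho$ an isometry of $G_{p_\beta}$, and $\alpha$ the constant transformation linear on each wave front. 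I would then read off the effect on the trace-free tidal curvature: $\mu$ multiplies it by $\dot U(u)^{-2}$, moves the parameter to $U(u)$, and adds only a pure-trace (Schwarzian) term; while $\rho$ and $\alpha$ conjugate it by a fixed $A\in\op{O}(\mathbb X)$, rescale by a constant, and possibly translate $u$ by an integer (here I would use, exactly as in the proof of Theorem \ref{ShiftEquivalentBrinkmann}, that $G_{p_\beta}$ is not homogeneous by Theorem \ref{ExtraSymmetryP} and that $p_\beta$ vanishes precisely on $\mathbb Z$, so the only affine reparametrizations in its isometry group are integer translations, which I absorb into $U$). Taking trace-free parts kills the Schwarzian term, and comparing largest eigenvalues (both $\tilde p_0$ and $A^T\tilde p_0A$ have the same positive largest eigenvalue, and $p_\alpha,p_\beta\ge 0$) leaves the scalar identity
$$p_\beta\bigl(U(u)\bigr)=\dot U(u)^{-2}\,p_\alpha(u),\qquad u\in\mathbb R,$$
in which $U$ is now a smooth monotone diffeomorphism of $\mathbb R$.

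Because $\dot U$ is nowhere zero, this identity gives $p_\alpha(u)=0\iff p_\beta(U(u))=0$, i.e. $U(\mathbb Z)=\mathbb Z$; a monotone homeomorphism of $\mathbb R$ with that property is $n\mapsto n+m$ for a fixed $m\in\mathbb Z$ if increasing and $n\mapsto m-n$ if decreasing. To decide between these and then to pin down $\alpha,\beta$, I would restrict to a single interval $(k,k+1)$, on which $p_\alpha(u)=\exp\bigl(4-f_{\alpha(k)}(u-k)\bigr)$, and exploit the asymmetry of $f_a$ at the two ends of $(0,1)$: for $a\in[0,1)$ one has $f_a(s)\sim(1-a)s^{-2}$ as $s\to 0^+$ but $f_a(s)\sim(1-s)^{-1}$ as $s\to1^-$ — poles of different order. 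Writing $\phi$ for the induced reparametrization of $(0,1)$ and taking logarithms, the identity becomes $f_b(\phi(s))=f_a(s)+2\ln\dot\phi(s)$, with $\dot\phi$ bounded above and below on $(0,1)$ and $b=\beta(\cdot)$. In the decreasing case $\phi(s)\to 1^-$ as $s\to 0^+$, so $(1-\phi(s))^{-1}\sim(1-a)s^{-2}$, forcing $\dot\phi(0^+)=0$ — impossible since $\dot\phi=\dot U$ is nonvanishing; hence $U$ is increasing and $U(n)=n+m$. In the increasing case, matching leading terms at $s\to1^-$ gives $\dot\phi(1^-)=1$, hence $\dot U\equiv 1$ on $\mathbb Z$, and matching at $s\to 0^+$ gives $\dot\phi(0^+)^2=(1-\beta(k+m))/(1-\alpha(k))$, so $\beta(k+m)=\alpha(k)$ for every $k\in\mathbb Z$. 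That is precisely shift-equivalence.

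The main obstacle I expect is the bookkeeping of the second paragraph: extracting the clean scalar identity from the three-factor decomposition, i.e. verifying that the reparametrization's pure-trace term genuinely drops out of the trace-free part, that the orthogonal conjugation $A$ does not interfere with the eigenvalue comparison, and that the only residual affine freedom carried by $\rho$ is an integer translation of $u$ — which is exactly where the non-homogeneity of $G_{p_\beta}$ and the fact that $p_\beta$ vanishes precisely on $\mathbb Z$ enter, as in the proof of Theorem \ref{ShiftEquivalentBrinkmann}. Once that scalar identity is in hand, the endpoint asymptotics of $f_a$ are routine.
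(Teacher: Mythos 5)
Your argument is correct, and it does more work than the paper itself, which states this as an immediate corollary of the structure theorem for Brinkmann conformal isometries and offers no further proof. The route the paper implicitly has in mind is shorter than yours: since both $G_{p_\alpha}$ and $G_{p_\beta}$ are vacuum, the trace of the transformed tidal curvature $P(U)=\dot U^{-2}p(u(U))-(L''/L)I$ must vanish along with $\tr p$, forcing $L''=0$; together with the fact that $dU=L^2\,du$ must give a diffeomorphism of the whole real line (both spacetimes live over $\mathbb U=\mathbb R$), this makes $L$ constant, $U$ affine, and the conformal factor a global constant, after which one is literally back in the setting of Theorem \ref{ShiftEquivalentBrinkmann} up to a constant homothety that is absorbed by rescaling $(u,v,x)$. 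Your alternative skips that reduction: you pass to the trace-free part (which indeed kills the $L''/L$ term), extract the scalar functional equation $c\,p_\beta(U(u))=\dot U(u)^{-2}p_\alpha(u)$, and solve it directly from the mismatched pole orders of $f_a$ at the two ends of $(0,1)$. This buys two things: it does not invoke completeness to force $U$ affine, and it rigorously justifies the step that the paper's isometry proof only asserts (``since $p_\alpha$ is asymmetrical, $\epsilon=+1$''), by showing the orientation-reversing case would force $\dot U$ to vanish at an integer. Two small points of bookkeeping: carry the positive constant $c$ (coming from the $a^2$ in the linear factor and the constant part of the conformal scale) explicitly through the scalar identity --- as you note it only contributes an additive $\ln c$ and does not disturb the $s^{-2}$ and $(1-s)^{-1}$ matching --- and state explicitly that the metric of \S\ref{Example} is $G_\beta(p_\alpha(u)\tilde p_0)$ for a fixed nonzero trace-free symmetric $\tilde p_0$, since the paper leaves this implicit and your eigenvalue comparison depends on it.
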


\subsection{Rosen universe conformal isomorphisms}\label{ConformalDiffeoRosen}
This section contains a description of all conformal isometries of Rosen universes.  It is largely parallel with \S\ref{RosenIsometries} above.

\begin{definition}
  Let $(\mathbb M,G_\rho(h))$ and $(\bar{\mathbb M},G_\rho(\bar h))$ be Rosen universes, where $\mathbb M=\mathbb U\times\mathbb R\times\mathbb X$, $\bar{\mathbb M}=\bar{\mathbb U}\times\mathbb R\times\mathbb X$, and $h$ is positive definite away from the discrete set $\mathbb S\subset\mathbb U$, and $\bar h$ is positive definite away from the discrete set $\bar{\mathbb S}\subset\bar{\mathbb U}$.  A conformal isomorphism of Rosen universes is a smooth function
  $$\phi : \mathbb M-\mathbb S\to\bar{\mathbb M} $$
  that restricts to a conformal isometry from a dense open subset of $\mathbb M$ onto a dense open subset of $\bar{\mathbb M}$, which sends every null geodesic to a null geodesic.
\end{definition}

The main theorem is almost the same as Theorem \ref{RosenIsometriesTheorem}:
\begin{theorem}\label{RosenConfIsometriesTheorem}
  Let conformally curved Rosen universes $G_\rho(h) = 2\,du\,dv - dx^T h(u) dx$ and $G_\rho(\bar h) = 2\,d\bar u \,d\bar v - d\bar x^T \bar h(\bar u) d\bar x$ be conformally isomorphic.  Then after reparametrizations of the variable $\bar u$, and transformations by isometries belonging to the first derived conformal symmetry group $\mathcal W_1$, and constant invertible coordinate changes, the isomorphism may be reduced to the transformation:
 \[ \bar u = u, \hspace{7pt} x = (I + H(u)E)\bar x,  \hspace{7pt}v = \bar v + 2^{-1}\bar x^T E H(u)E\bar x + 2^{-1}\bar x^T E \bar x. \]
 Here $h(u)\dot H(u) = I$ for $u\in\mathbb U-\mathbb S$, such that $(u-s)H(u)$ admits a smooth continuation at all $s\in\mathbb S$, satisfying the initial condition that $H(0)=0$, and $E$ is a constant symmetric endomorphism.  We then have:
\[ \bar h(u) = (I + H(u)E)^Th(u)(I + H(u)E).\]
\end{theorem}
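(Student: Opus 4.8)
The plan is to run the proof of Theorem~\ref{RosenIsometriesTheorem} essentially verbatim, with each appeal to the rigidity of Brinkmann \emph{isometries} (Theorem~\ref{BrinkmannDiffeos}) replaced by the classification of Brinkmann \emph{conformal} isometries proved immediately above. First I would Brinkmannize: since both Rosen universes are conformally curved, there are Lagrangian matrices $L$, $\bar L$ with $L^TL = h$, $\bar L^T\bar L = \bar h$, and the Brinkmannization maps $\beta_L$, $\beta_{\bar L}$ are \emph{isometries} onto Brinkmann plane waves $G_\beta(p)$, $G_\beta(\bar p)$, globally on $\mathbb U$ and $\bar{\mathbb U}$. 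A conformal isomorphism $\phi\colon \mathbb M - \mathbb S \to \bar{\mathbb M}$ then conjugates to a conformal isometry $\psi = \beta_{\bar L}\circ\phi\circ\beta_L^{-1}$ defined on a dense open subset of $(\mathbb M, G_\beta(p))$ with values in $(\bar{\mathbb M}, G_\beta(\bar p))$; since both are ordinary plane-wave metrics on products, $\psi$ extends to a global conformal isometry of the Brinkmann plane waves, and because $G_\beta(p)$ is conformally curved the classification of such maps applies.

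Next I would use the fact, recalled in the introduction, that the conformal isometries act transitively on transverse null geodesics --- the transitive subgroup here consisting of Heisenberg symmetries, which are honest isometries lying in the first derived conformal algebra $\mathcal W_1$ --- to compose $\phi$ with such a symmetry of $\bar{\mathbb M}$ so that the central null geodesic is carried to the central null geodesic. Then I would invoke the factorization $\psi = \alpha\circ\rho\circ\mu$ of the Brinkmann conformal isometry theorem. The $\mu$-factor is induced by a reparametrization $U(u)$ of the $u$-coordinate; carried back through the Brinkmannization, its change of variables only rescales the metric by a conformal factor depending on $u$ alone, which commutes with the dilation and amounts exactly to the reparametrization of $\bar u$ permitted in the statement, so one may take $\bar u = u$. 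The $\rho$-factor is a genuine isometry of $G_\beta(\bar p)$ belonging to $\mathcal W_1$, so absorbing it forces $p = \bar p$. The $\alpha$-factor $(u,v,x)\mapsto(u, a^2 v, aAx)$ with $a\neq 0$, $A\in\op{O}(\mathbb X)$, is one of the constant invertible coordinate changes in the statement. After these reductions the two Rosen universes Brinkmannize to a common $G_\beta(p)$, the central null geodesic is fixed pointwise, and $u$ is preserved.

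From here the argument is the pure linear algebra of Lagrangian subspaces, identical to the proof of Theorem~\ref{RosenIsometriesTheorem}: the $\partial_x$ coordinate vectors of the two Rosen systems furnish two Lagrangian bases of Jacobi fields for the same $p$, which the $\alpha$-freedom normalizes to agree at $u = 0$; the Lagrangian space $\bar{\mathbb L}$ of the $\bar x$-basis is the graph of a constant symmetric map $E\colon \mathbb L\to\mathbb Q$, with $\mathbb L = \op{span}(\partial_x)$ and $\mathbb Q$ the Lagrangian space of Jacobi fields vanishing at $u=0$, spanned by $x^T\partial_v + \partial_x\cdot H(u)$ where $h(u)\dot H(u) = I$ and $H(0) = 0$; and substituting $x = (I+HE)\bar x$, $v = \bar v + 2^{-1}\bar x^T E H E\bar x + 2^{-1}\bar x^T E\bar x$ into $2\,du\,dv - dx^T h\,dx$ reproduces $2\,du\,d\bar v - d\bar x^T(I+HE)^T h(I+HE)\,d\bar x$, giving the stated transformation and $\bar h = (I+HE)^T h (I+HE)$.

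The only genuinely new feature compared with the isometric case is the conformal-factor bookkeeping, and that is where I expect the main (if minor) obstacle: one must verify that the $u$-reparametrization produced by the $\mu$-factor, when pulled back through the Brinkmannization maps, stays within the class of Rosen universes and is exactly what the phrase ``after reparametrizations of $\bar u$'' allows --- as opposed to the affine reparametrizations forced in Theorem~\ref{RosenIsometriesTheorem}. One must also track the singular set: because the Brinkmannization is global, the Jacobi fields and hence $E$ live on all of $\mathbb U$, so the only regularity issue is that the primitive $H$ of $h^{-1}$ has at worst the controlled singularities described, i.e.\ $(u-s)H(u)$ extends smoothly across each $s\in\mathbb S$. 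The remaining diagram chase and the explicit substitution are routine, being word-for-word those of \S\ref{RosenIsometries}.
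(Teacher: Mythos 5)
Your proposal is correct and follows essentially the same route as the paper: Brinkmannize both Rosen universes isometrically, reduce the conformal isomorphism to a conformal isometry of Brinkmann plane waves, dispose of the $\mu$, $\rho$, $\alpha$ factors via the allowed reparametrizations, isometries, and constant coordinate changes, and then fall back on the Lagrangian-subspace argument of Theorem~\ref{RosenIsometriesTheorem}. The paper's own proof is in fact terser than yours — it simply observes that, modulo a conformal isometry of the Brinkmann forms, the statement reduces to Theorem~\ref{RosenIsometriesTheorem} — so your additional bookkeeping of the conformal factor and the non-affine $u$-reparametrization is a welcome elaboration rather than a divergence.
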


The strategy of proof is the same as Theorem \ref{RosenIsometriesTheorem}.  Given a conformal isomorphism of Rosen universes $G_\rho(h)\to G_\rho(\bar h)$, we isometrically map $G_\rho(h)$ and $G_\rho(\bar h)$ to $G_\beta(p)$ and $G_\beta(\bar p)$, respectively:
\begin{equation}\label{commutativediag1}
\begin{tikzcd}
  G_\rho(h) \arrow[r, dashed]\arrow[d] & G_\rho(\bar h)\ar[d]\\
  G_\beta(p) \arrow[r] & G_\beta(\bar p)
\end{tikzcd}
\end{equation}
Thus every conformal isometry of Rosen universes is induced by a conformal isometry of Brinkmann plane waves.  Again, identifying the Brinkmann plane waves at the bottom, the diagram \eqref{commutativediag1} collapses to a triangle
\[
\begin{tikzcd}
 G_\rho(h) \arrow[rr, dashed]\arrow[dr] && G_\rho(\bar h)\ar[dl]\\
 & G_\beta(p)  &
\end{tikzcd}
\]
where all arrows are now {\em isometries} on dense open sets.

Thus, modulo a conformal isometry of the Brinkmann metrics $G_\beta(p)$ and $G_\beta(\bar p)$, Theorem \ref{RosenConfIsometriesTheorem} reduces to the statement of Theorem \ref{RosenIsometriesTheorem}.

\bibliographystyle{hplain} 
\bibliography{planewaves} 

\end{document}